\newif\ifblackandwhitecycle
\gdef\patternnumber{0}
\definecolor{TUMdarkblue}{HTML}{002143}
\definecolor{TUMboxbg}{HTML}{DBE2E9}
\definecolor{TUMhighlight}{HTML}{11639D}
\DeclareMathOperator{\relint}{relint}
        \gdef\patternnumber{1}
        \gdef\patternnumber{1}
     \gdef\patternnumber{0}
        \pgfgetlastxy{\imagewidth}{\imageheight}
        \global\let\imagewidth=\imagewidth
        \global\let\imageheight=\imageheight
        \gdef\columncount{1}
        \gdef\rowcount{1}
\newcommand\phantomimage{%
    \phantom{%
        \rule{\imagewidth}{\imageheight}%
    }%
}
\newcommand\zoombox[2][]{
    \begin{scope}[zoombox paths]
        \pgfmathsetmacro\xpos{
            (\columncount-1)*(\imagewidth / \pgfkeysvalueof{/tikz/zoomboxarray columns} + \pgfkeysvalueof{/tikz/zoomboxarray inner gap} / \pgfkeysvalueof{/tikz/zoomboxarray columns} ) + \pgflinewidth
        }
        \pgfmathsetmacro\ypos{
            (\rowcount-1)*( \imageheight / \pgfkeysvalueof{/tikz/zoomboxarray rows} + \pgfkeysvalueof{/tikz/zoomboxarray inner gap} / \pgfkeysvalueof{/tikz/zoomboxarray rows} ) + 0.5*\pgflinewidth
        }
        \edef\dospy{\noexpand\spy [
            #1,
            zoombox paths/.append style={
                black and white pattern=\patternnumber
            },
            every spy on node/.append style={#1},
            x=\imagewidth,
            y=\imageheight
        ] on (#2) in node [anchor=north west] at ($(zoomboxes container.north west)+(\xpos pt,-\ypos pt)$);}
        \dospy
        \pgfmathtruncatemacro\pgfmathresult{ifthenelse(\columncount==\pgfkeysvalueof{/tikz/zoomboxarray columns},\rowcount+1,\rowcount)}
        \global\let\rowcount=\pgfmathresult
        \pgfmathtruncatemacro\pgfmathresult{ifthenelse(\columncount==\pgfkeysvalueof{/tikz/zoomboxarray columns},1,\columncount+1)}
        \global\let\columncount=\pgfmathresult
        \ifblackandwhitecycle
            \pgfmathtruncatemacro{\newpatternnumber}{\patternnumber+1}
            \global\edef\patternnumber{\newpatternnumber}
        \fi
    \end{scope}
}
\newcommand{\figureDevGermanyElection}{
	\includegraphics[width=123.96469pt, height=162.7443pt]{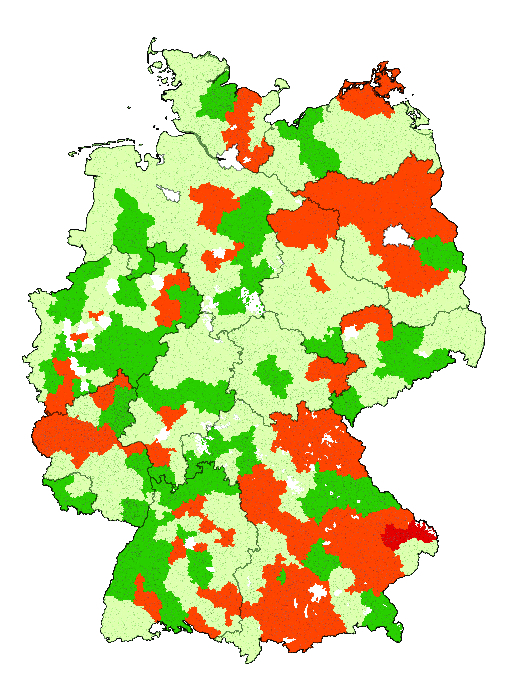}
}
\newcommand{\figureDevGermanySP}{
	\includegraphics[width=123.96469pt, height=162.7443pt]{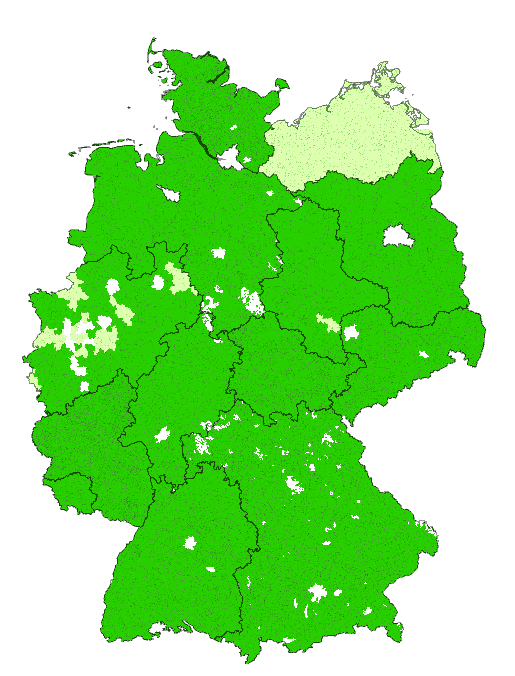}
}
\newcommand{\figureDevLegend}{
	\includegraphics{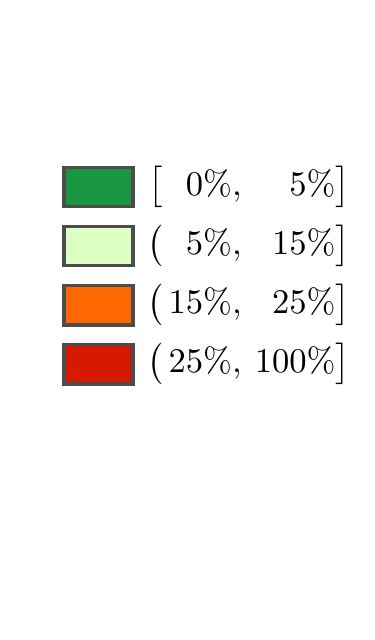}
}
\newcommand{\figureExampleOnlyAffineTrans}{
	\includegraphics{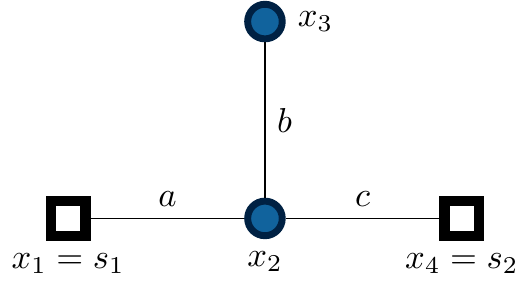}
}
\newcommand{\figureZoomedClusteringFracts}{
	\includegraphics{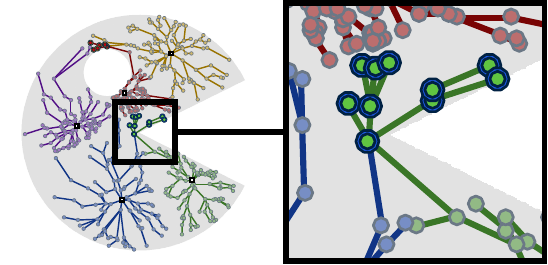}
}
\newcommand{\figureZoomedClusterinSquaredConnected}{
	\includegraphics{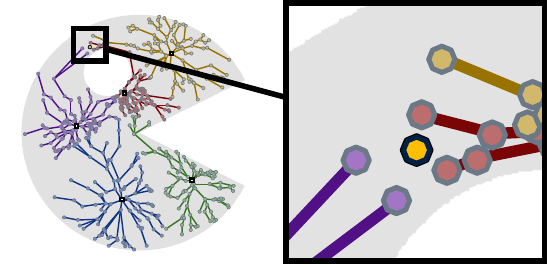}
}
\newcommand{\figureBayernOriginalDistricts}{
	\includegraphics[width=172.26447pt, height=177.14032pt]{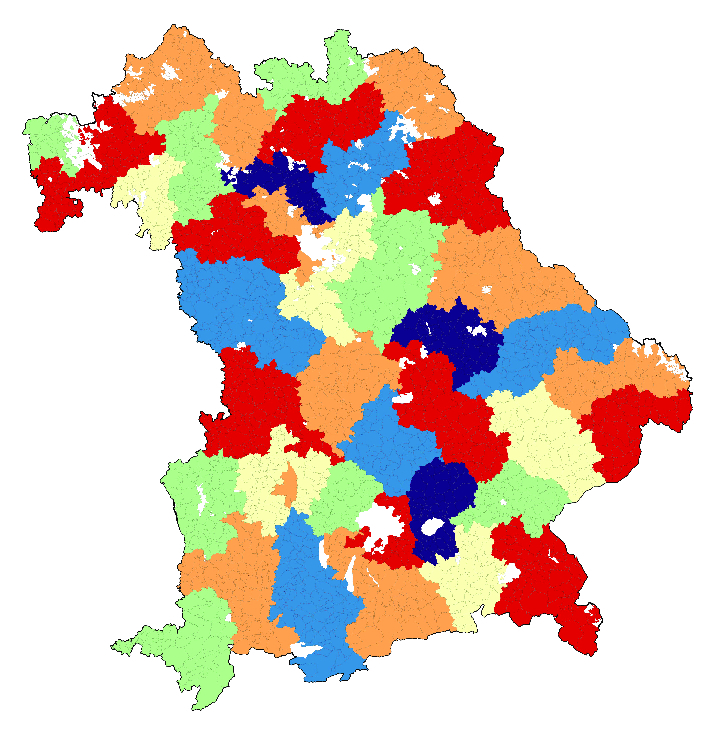}
}
\newcommand{\figureBayernPowerDiagram}{
	\includegraphics[width=172.26447pt, height=177.14032pt]{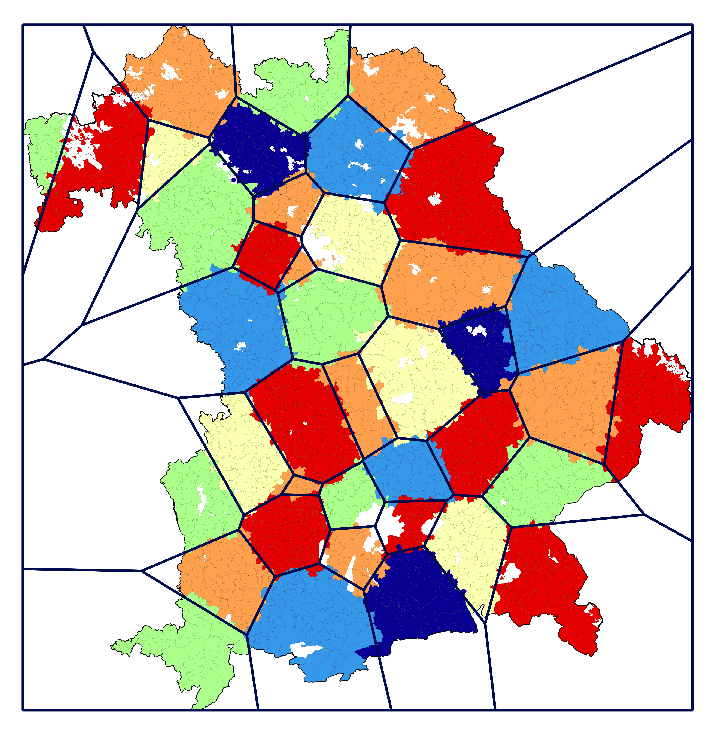}
}
\newcommand{\figureNiedersachsenOriginalDistricts}{
	\includegraphics[width=161.91489pt, height=146.81049pt]{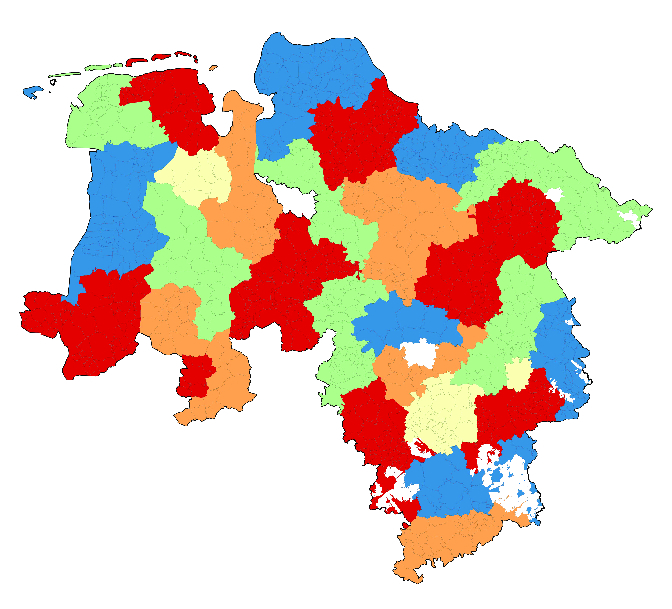}
}
\newcommand{\figureNiedersachsenAPD}{
	\includegraphics[width=161.91489pt, height=146.81049pt]{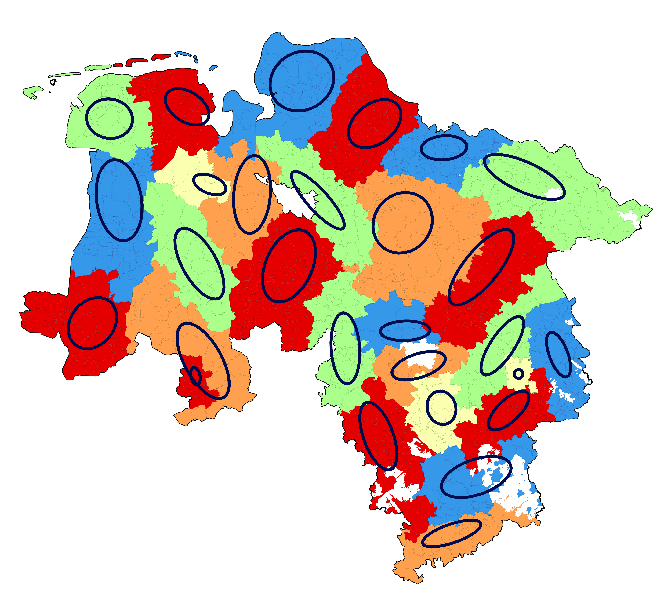}
}
\newcommand{\figureNRWOriginalDistricts}{
	\includegraphics[width=154.15198pt, height=155.17471pt]{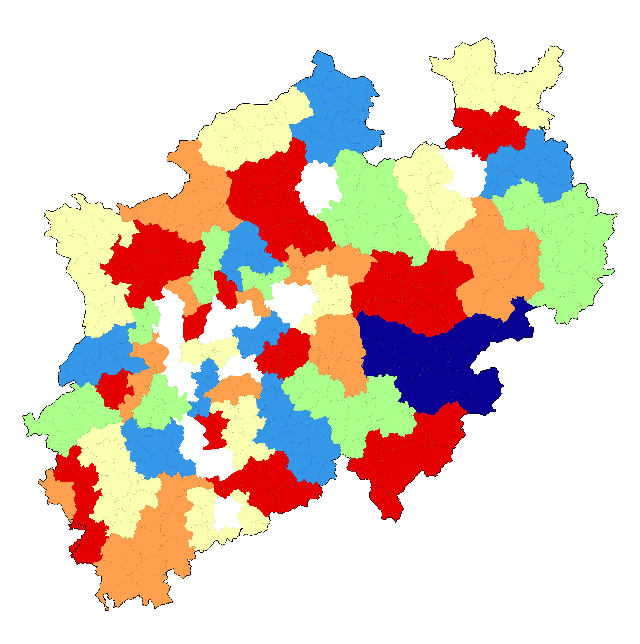}
}
\newcommand{\figureNRWSP}{
	\includegraphics[width=154.15198pt, height=155.17471pt]{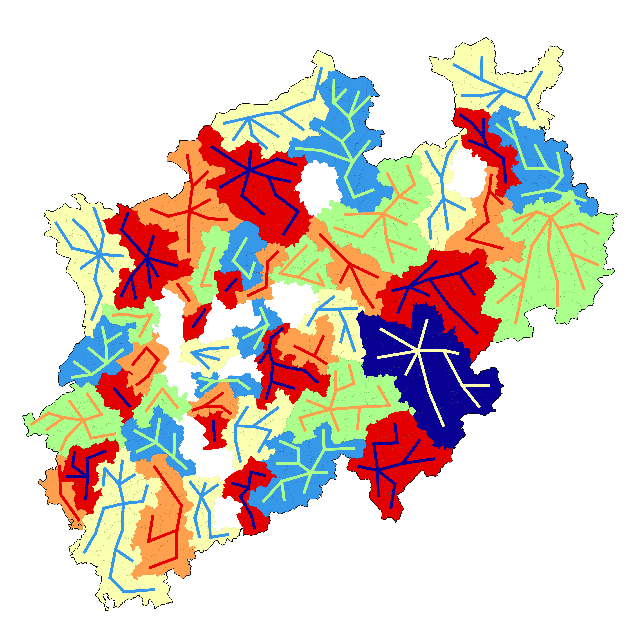}
}
\newcommand{\figureBundDevPD}{
	\includegraphics[width=113.61511pt, height=148.96805pt]{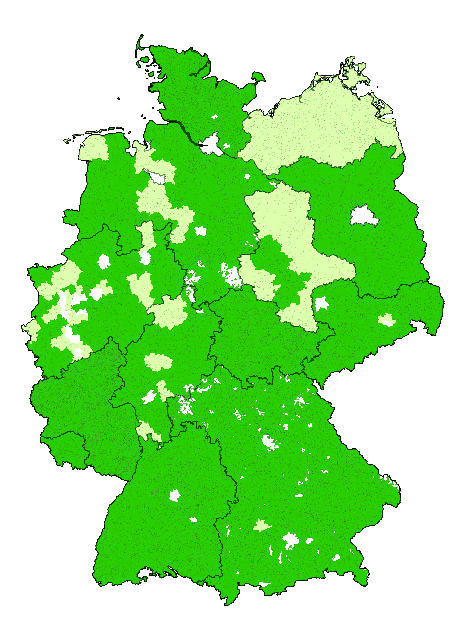}
}
\newcommand{\figureBundDevAPD}{
	\includegraphics[width=113.61511pt, height=148.96805pt]{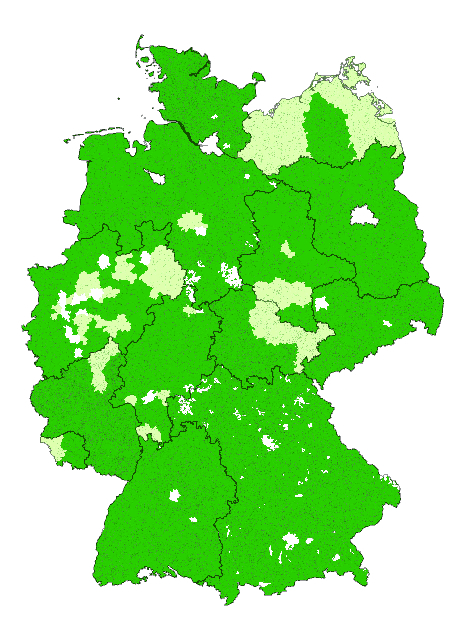}
}
\newcommand{\figureBundDevSP}{
	\includegraphics[width=113.61511pt, height=148.96805pt]{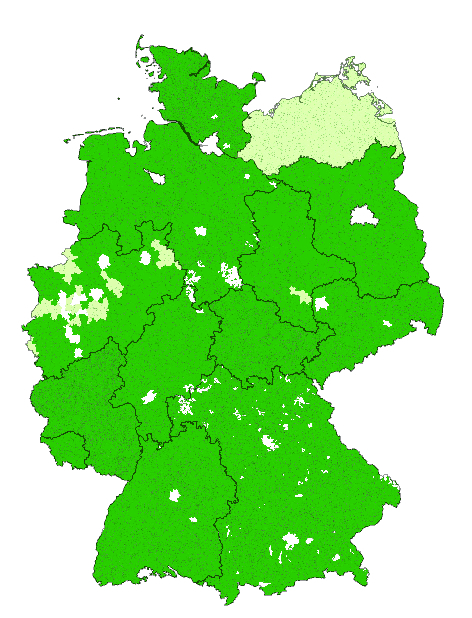}
}
\newcommand{\figureHessenOriginalDistricts}{
	\includegraphics[width=163.98404pt, height=201.69873pt]{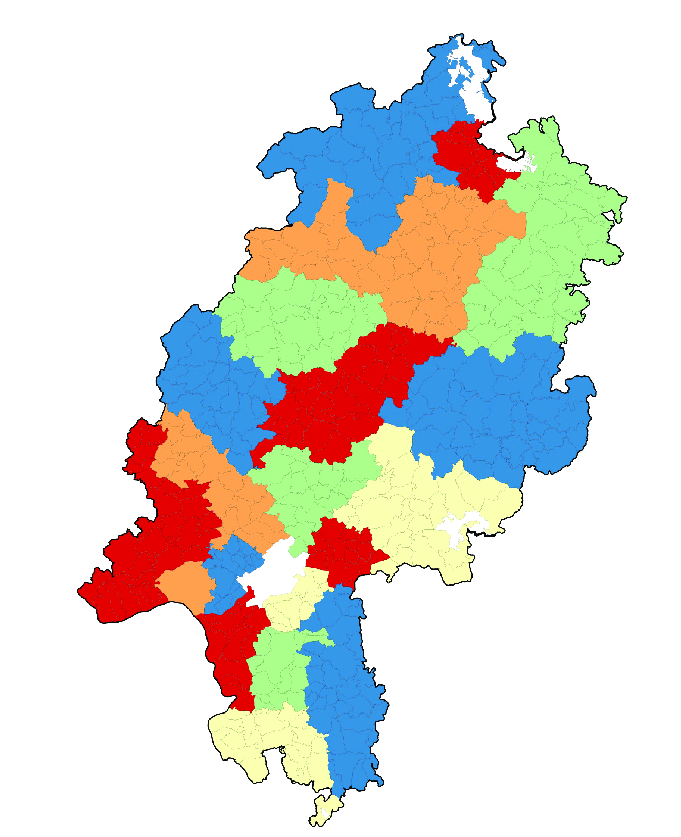}
}
\newcommand{\figureHessenPD}{
	\includegraphics[width=163.98404pt, height=201.69873pt]{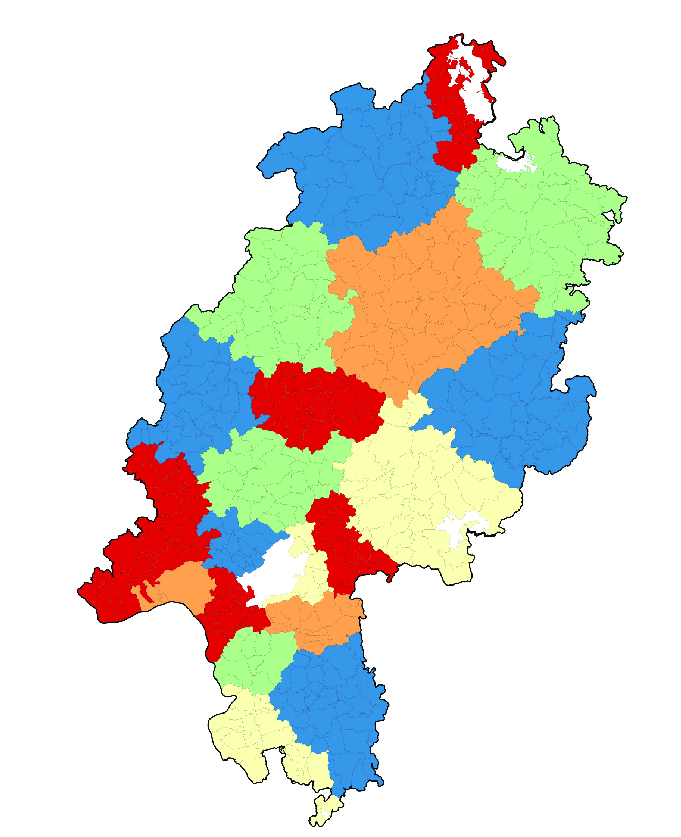}
}
\newcommand{\figureHessenAPD}{
	\includegraphics[width=163.98404pt, height=201.69873pt]{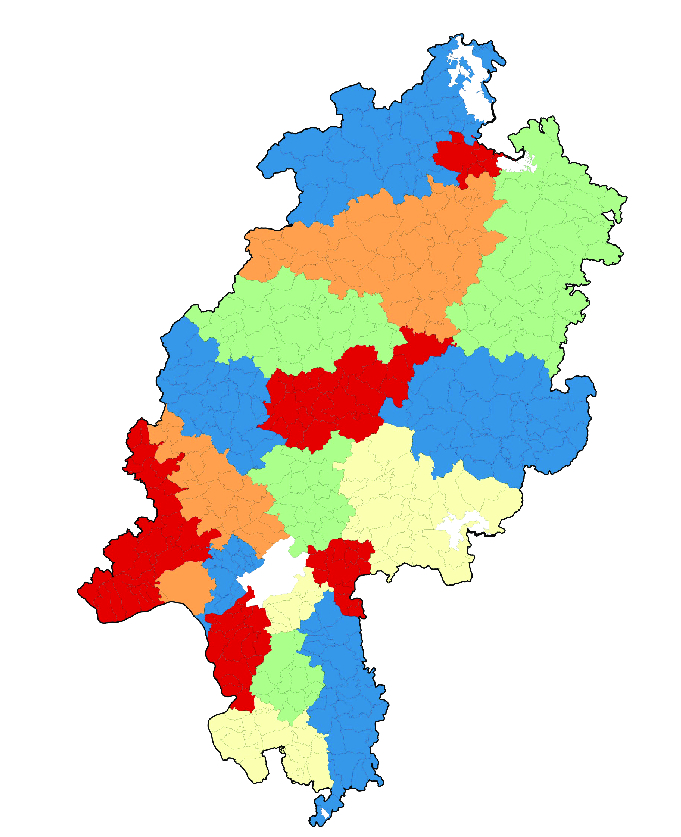}
}
\newcommand{\figureHessenSP}{
	\includegraphics[width=163.98404pt, height=201.69873pt]{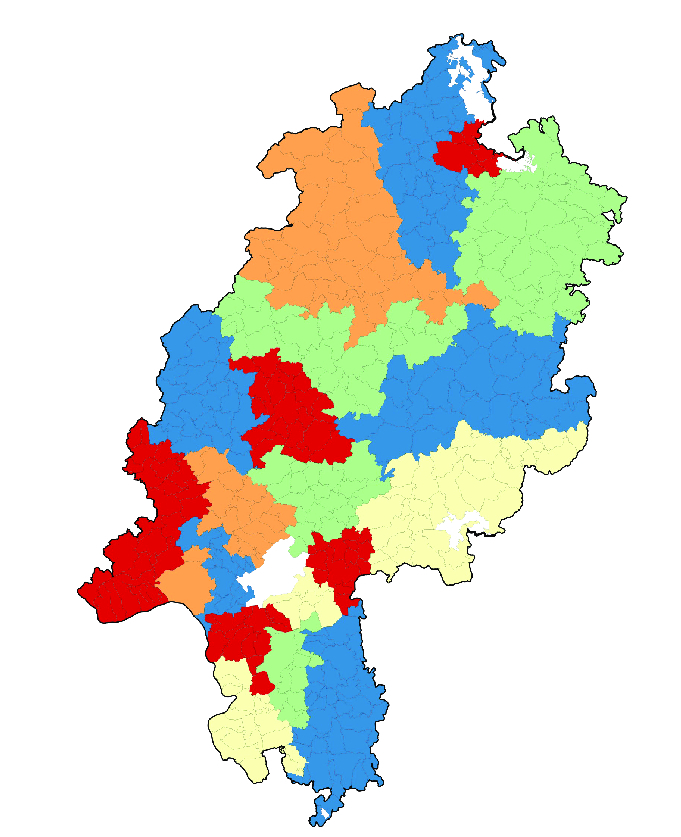}
}
\newcommand{\figureToyExPDDiagram}{
	\includegraphics[width=\textwidth]{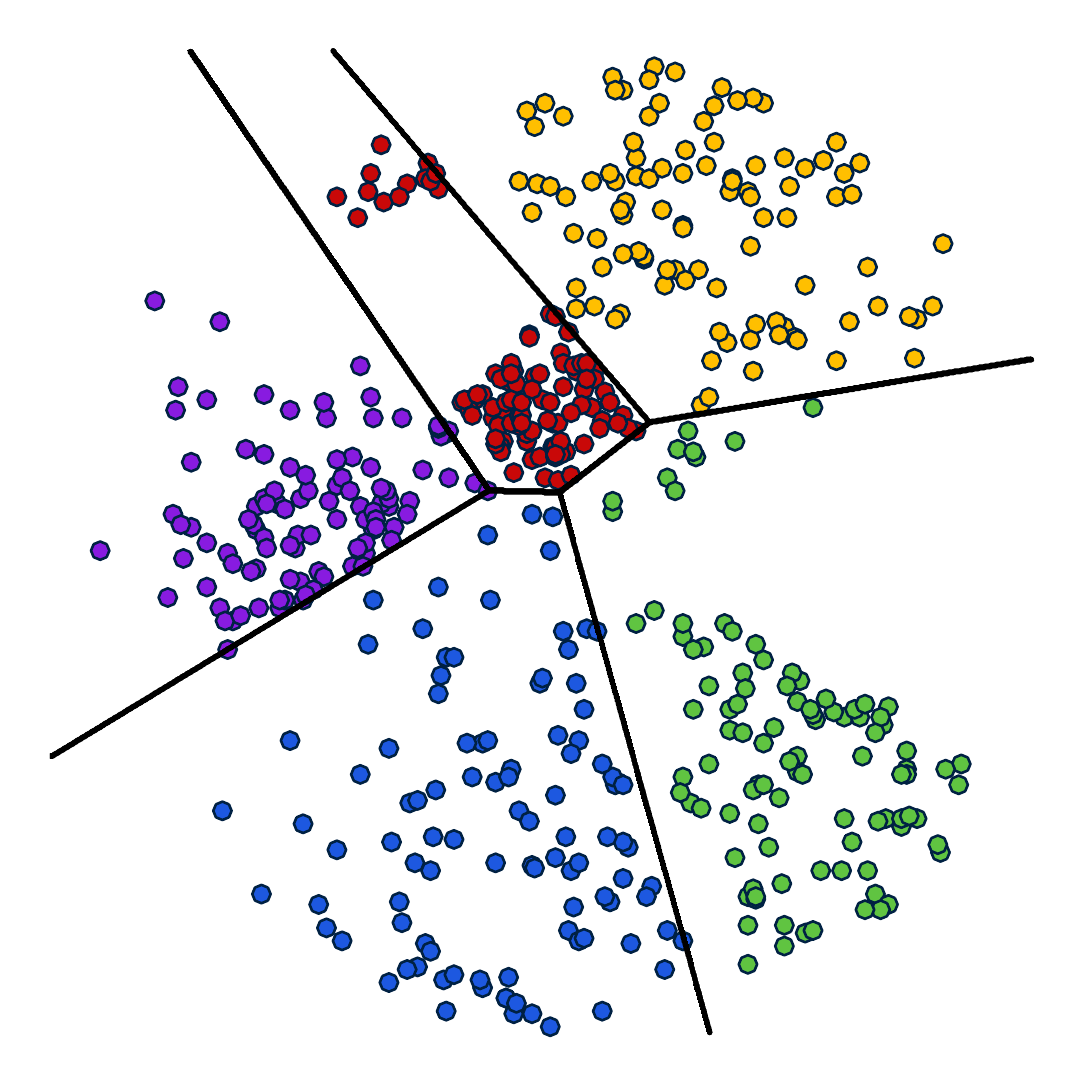}
}
\newcommand{\figureToyExAPDDiagram}{
 \includegraphics[width=\textwidth]{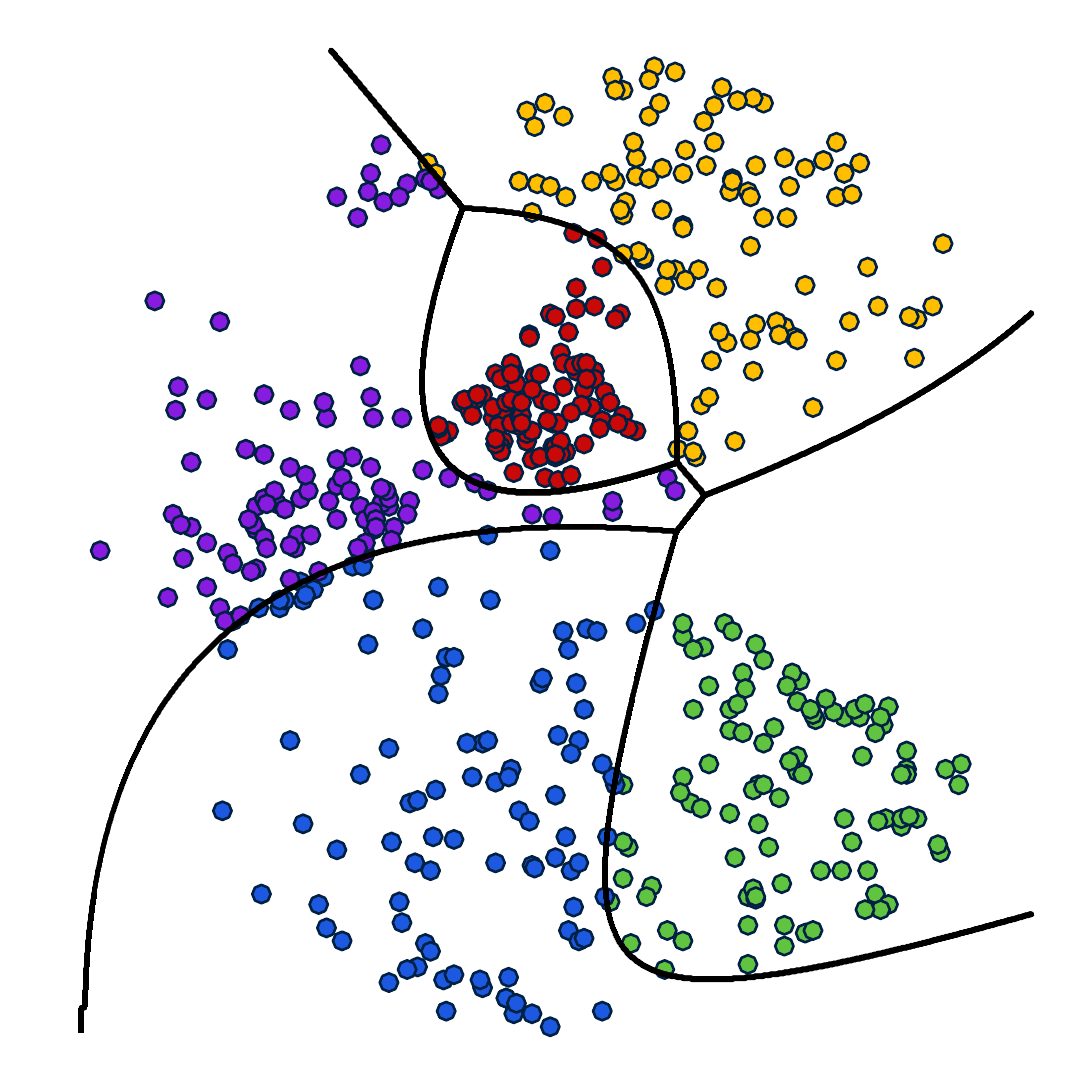}
}
\newcommand{\figureToyExSPDiagram}{
\includegraphics[width=\textwidth]{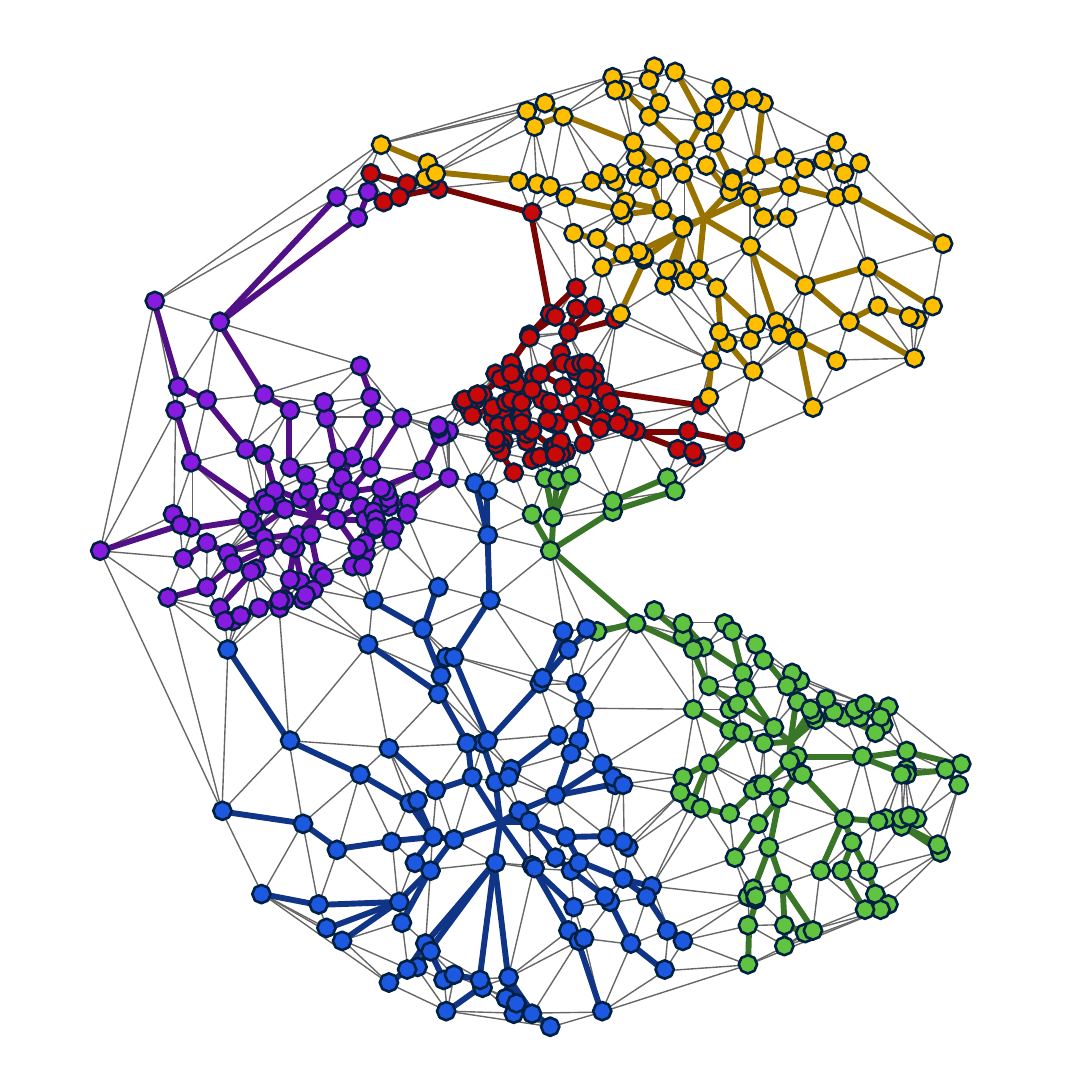}
}
\newcommand{\figureToyEx}[1][0.43\textwidth]{
	\includegraphics[width = #1]{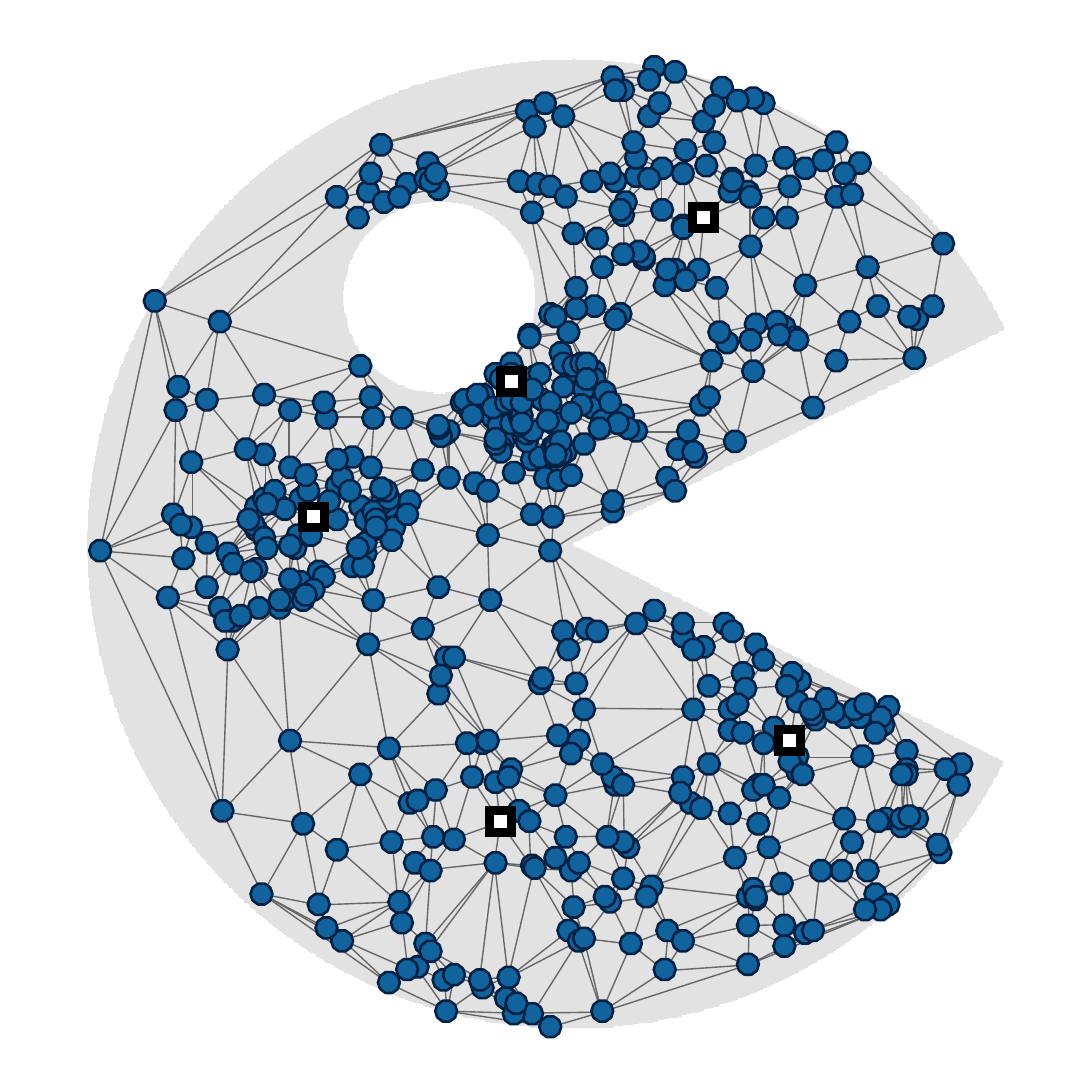}
}
\newcommand{\figureToyExAWVD}[1][0.43\textwidth]{
	\includegraphics[width = #1]{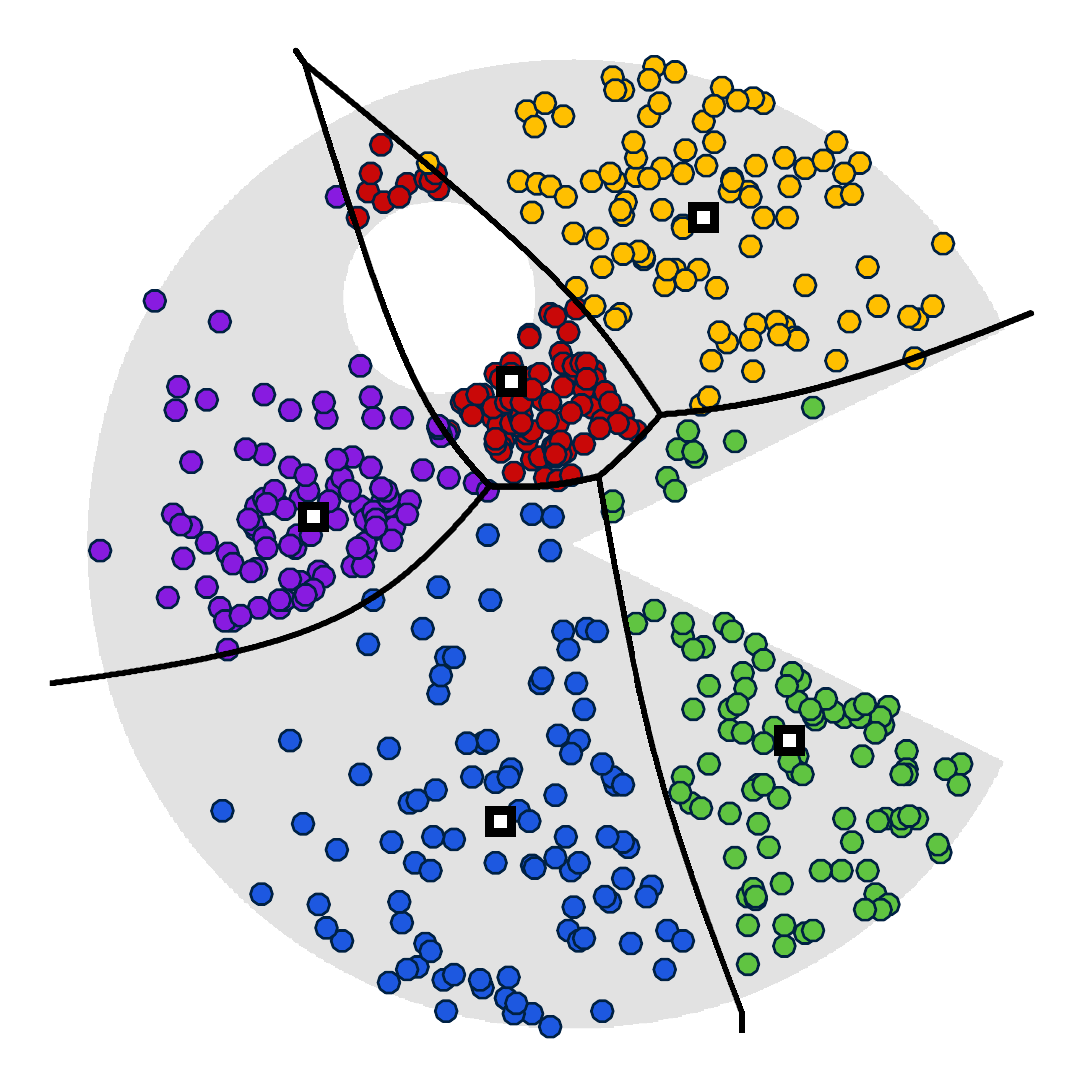}
}
\newcommand{\figureToyExPD}[1][0.43\textwidth]{
	\includegraphics[width = #1]{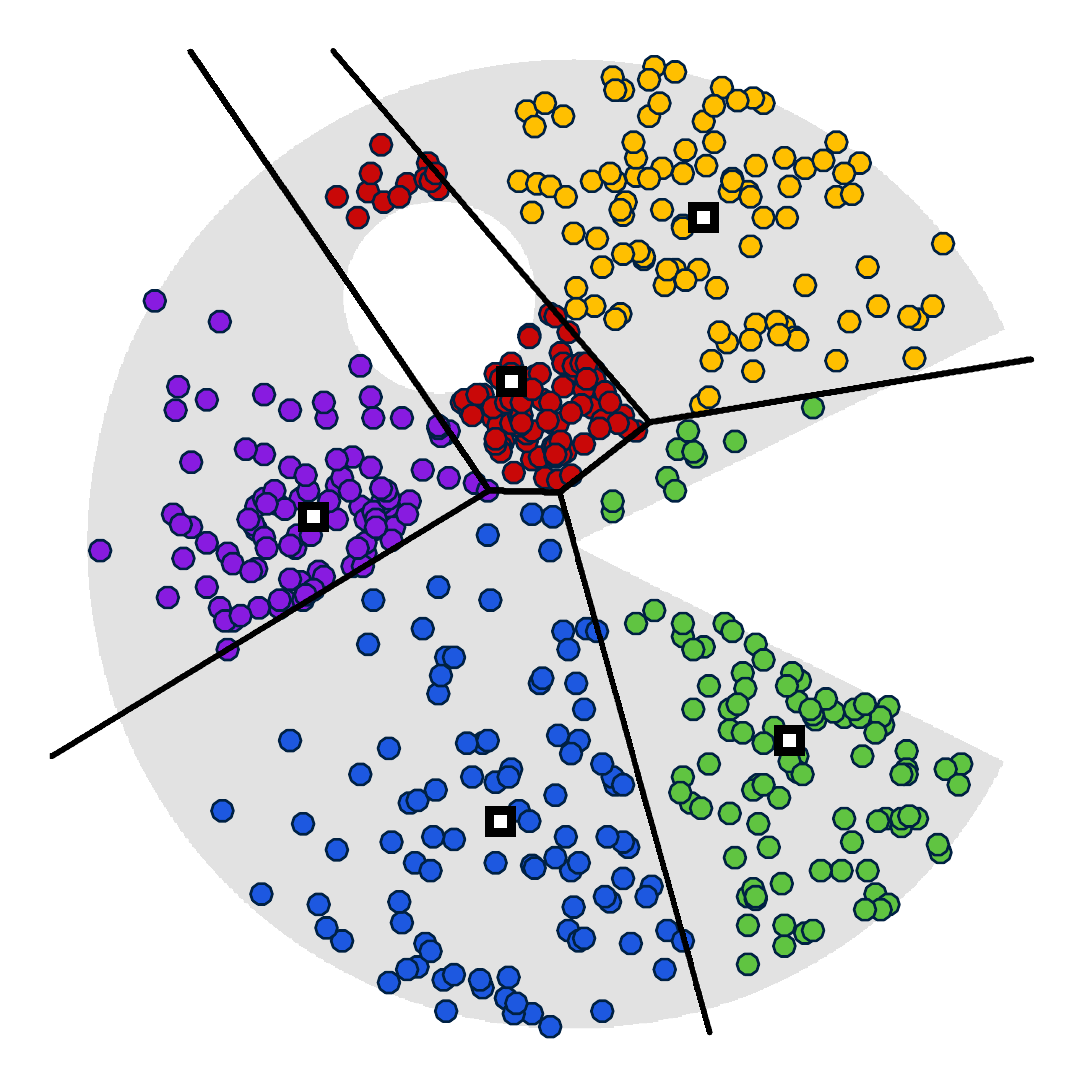}
}
\newcommand{\figureToyExAVD}[1][0.43\textwidth]{
	\includegraphics[width = #1]{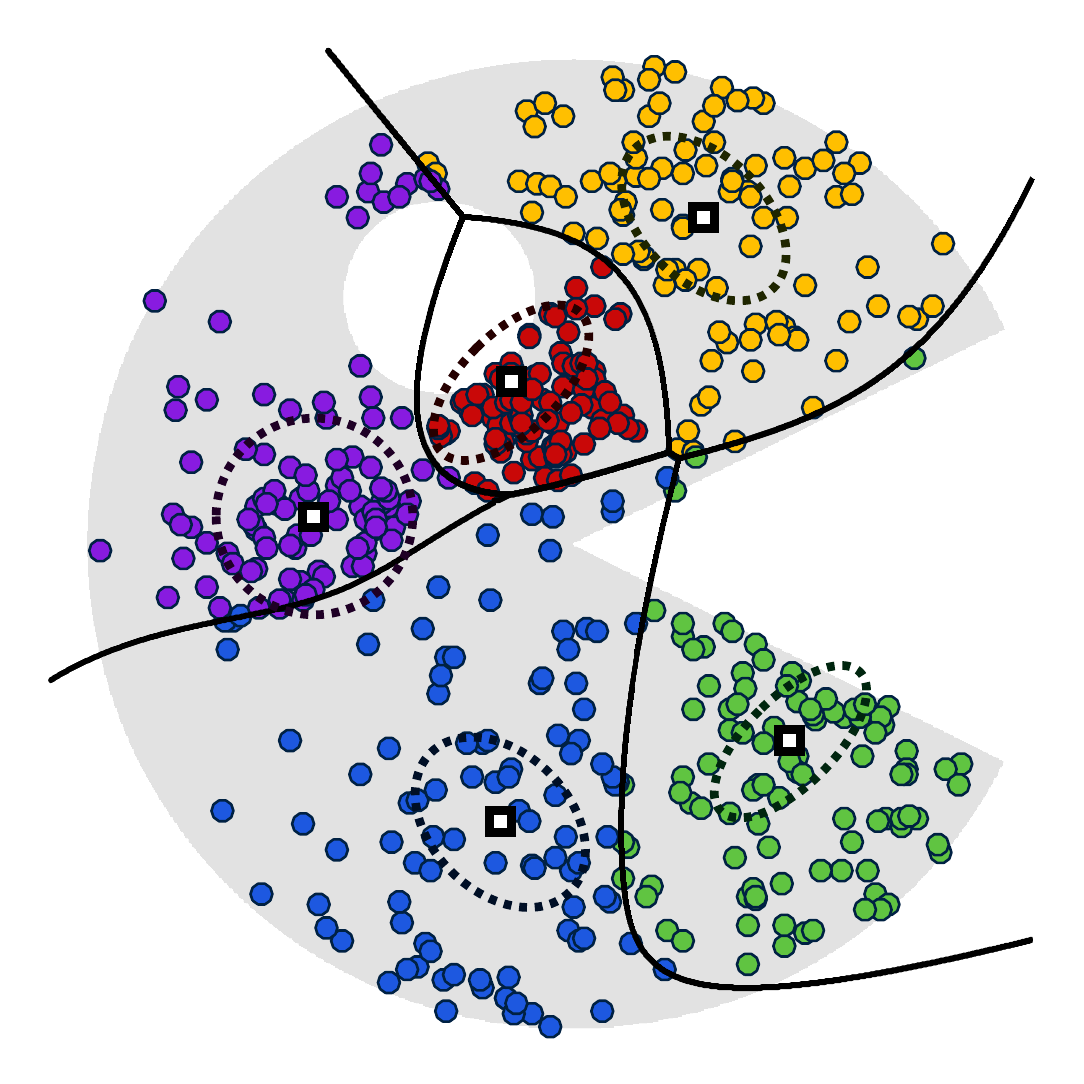}
}
\newcommand{\figureToyExAPD}[1][0.43\textwidth]{
	\includegraphics[width =#1]{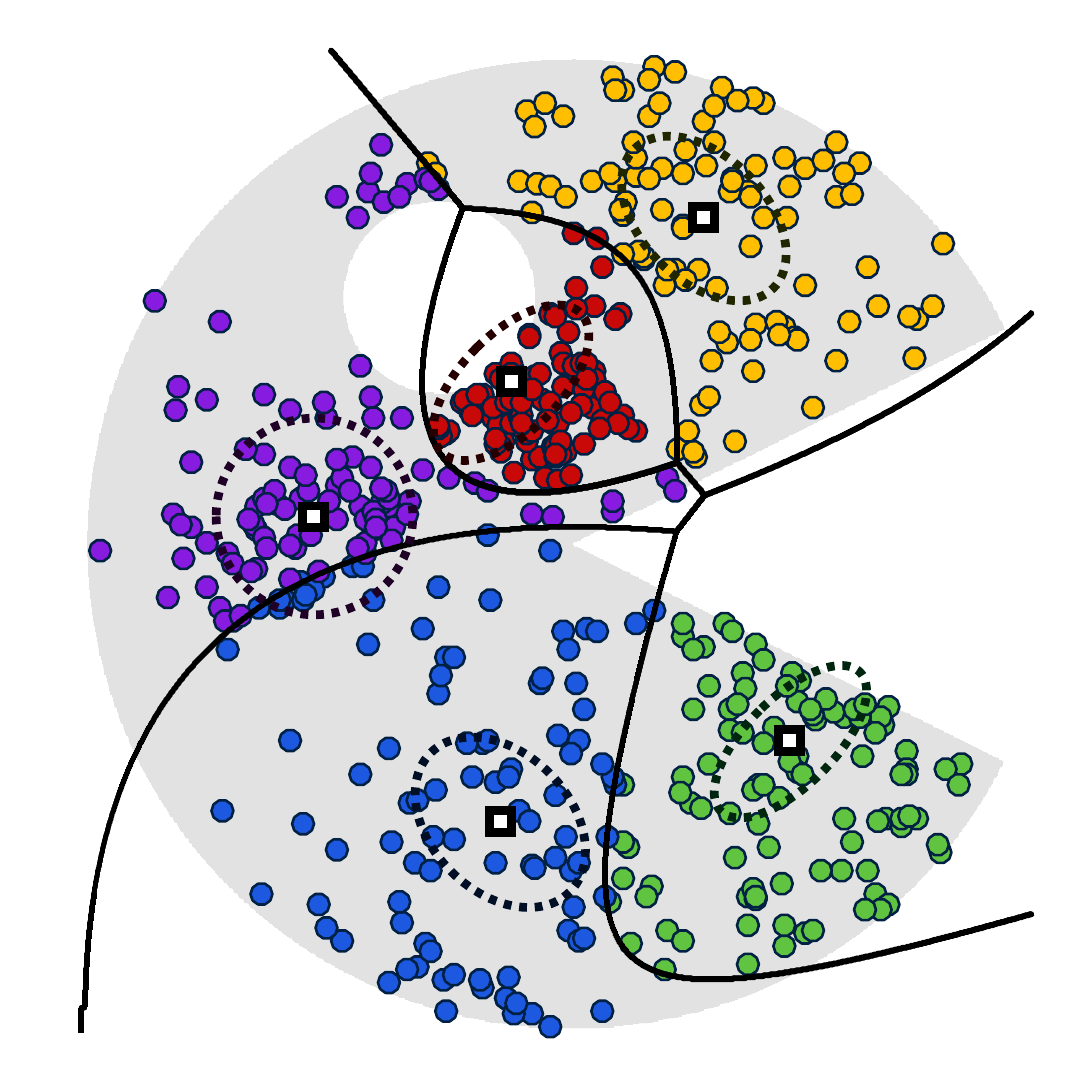}
}
\newcommand{\figureToyExSP}[1][0.43\textwidth]{
	\includegraphics[width = #1]{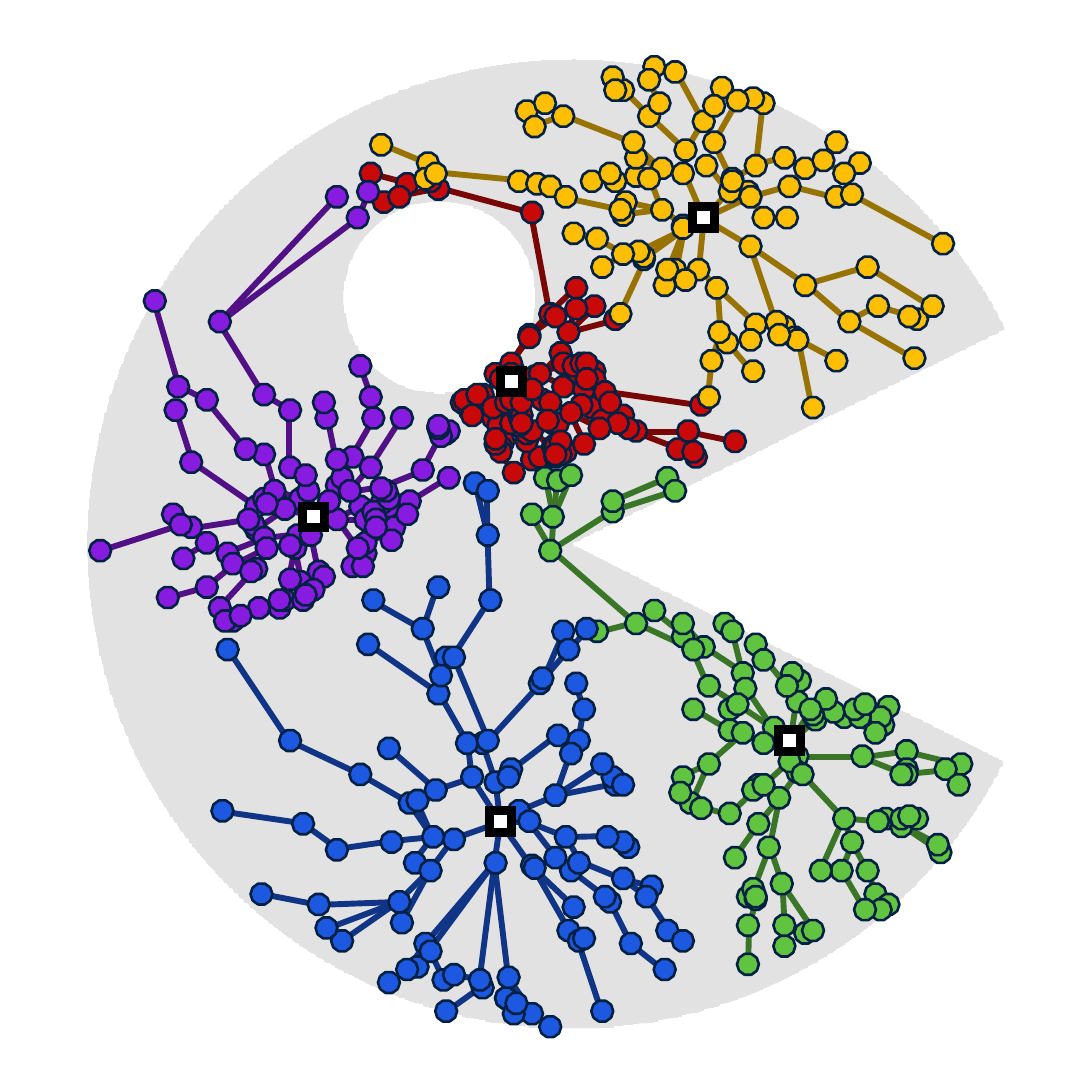}
}
\newcommand{\figureToyExSPSQ}[1][0.43\textwidth]{
	\includegraphics[width = #1]{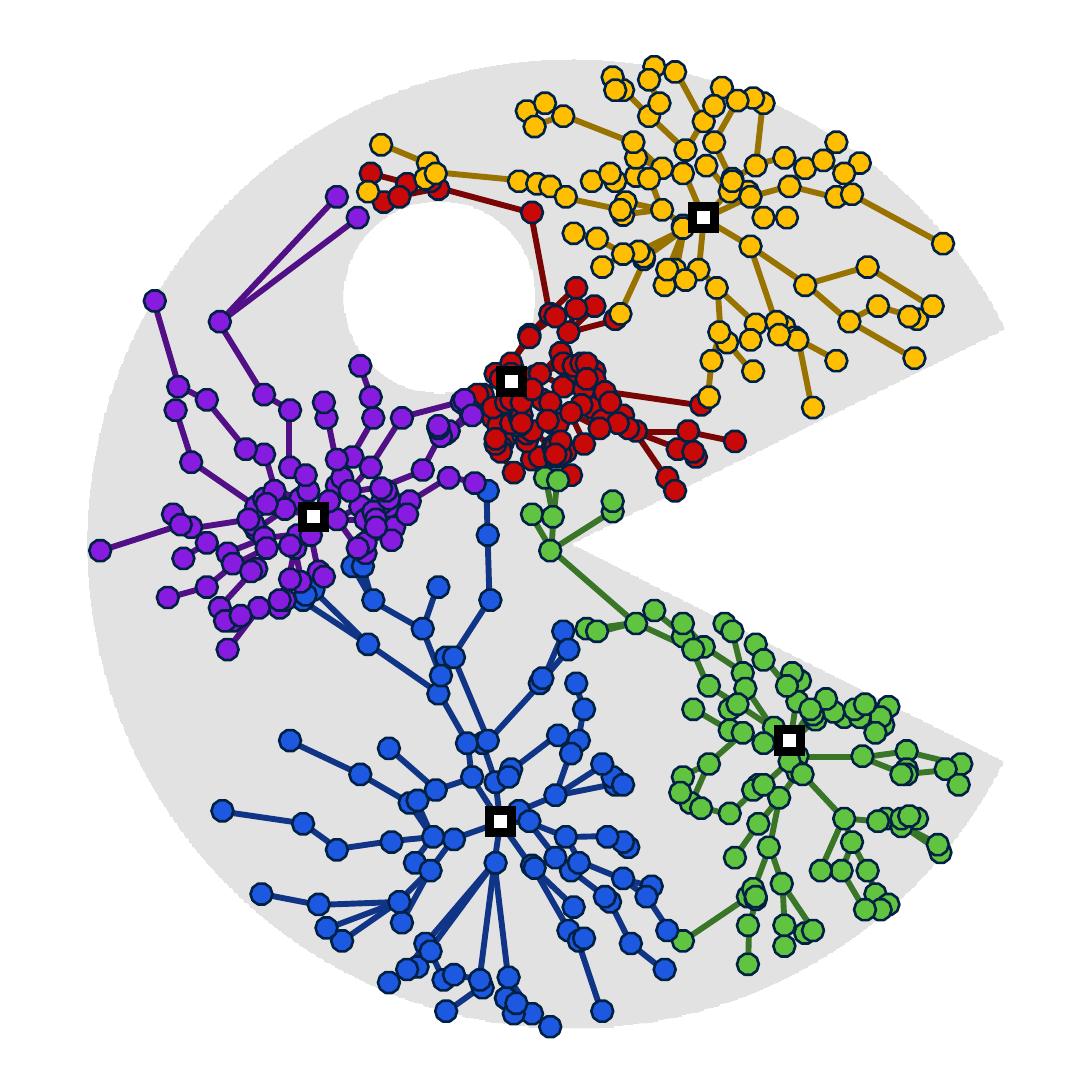}
}
\newcommand{\placeFigureA}{
 \begin{figure*}[h!t]
	\begin{subfigure}[t]{.34\textwidth}
		\centering
		\figureDevGermanyElection
		\subcaption{Deviations of the 2013 election districts in Germany}
		\label{fig:res:dev:2013}
	\end{subfigure}%
	\hspace{.02\textwidth}
	\begin{subfigure}[t]{.34\textwidth}
		\centering
		\figureDevGermanySP
		\subcaption{Deviations resulting from  our  methodology.}
		\label{fig:res:dev:shortestpath}
	\end{subfigure}%
	\hspace{.01\textwidth}
	\begin{subfigure}[t]{.26\textwidth}
		\figureDevLegend
		\subcaption{Colors depict the deviation from the federal average
			district size.}
		\label{fig:res:dev:legend}
	\end{subfigure}%
	\caption{Absolute deviations from the average population size per district.}
	\label{fig:res:dev:overview}
\end{figure*}
}
\newcommand{\placeFigureB}{
  \begin{figure}[!ht]
	\centering
	\begin{subfigure}[t]{.33\textwidth}
		\figureToyExPDDiagram
		\subcaption{Diagram in Euclidean space.}
	\end{subfigure}%
	\begin{subfigure}[t]{.33\textwidth}
		\centering
		\figureToyExAPDDiagram
		\subcaption{Anisotropic diagram with local ellipsoidal norms.}
	\end{subfigure}%
	\begin{subfigure}[t]{.33\textwidth}
		\centering
		\figureToyExSPDiagram
		\subcaption{Diagram in discrete space.}
	\end{subfigure}%
	\caption{Exemplary clusterings and related diagrams.}
	\label{fig:exemplarydissections}
\end{figure}	
}
\newcommand{\placeFigureC}{
\begin{figure}[htb] \centering
	\figureToyEx
	\caption{Exemplary constrained clustering instance of
		\cref{sec:classes}}
	\label{fig:exmpl:instance}
\end{figure}
}
\newcommand{\placeFigureD}{
\begin{figure}[!ht]
\begin{minipage}{0.48\textwidth}
	\centering
	\figureToyExAWVD[.9\textwidth]
	\caption{Clustering \wrt $\hat \dfunc_i(\unit) = ||s_i - \unit||$}
	\label{fig:exmpl:awvd}
\end{minipage} \hspace{0.02\textwidth}
\begin{minipage}{0.48\textwidth}
	\centering
	\figureToyExPD[.9\textwidth]
	\caption{Clustering \wrt $\hat \dfunc_i(\unit) = ||s_i - \unit||^2$}
	\label{fig:exmpl:powerdiagram}
\end{minipage}
\end{figure}
}
\newcommand{\placeFigureE}{
\begin{figure}[!ht]
	\begin{minipage}{0.48\textwidth}
		\centering
		\figureToyExAVD[.9\textwidth]
		\caption{Clustering \wrt $\hat \dfunc_i(\unit) = \sqrt{ (\unit\!-\!s_i)\tr M_i (\unit\!-\!s_i)}$}
		\label{fig:exmpl:anisotropvd}
\end{minipage} \hspace{0.02\textwidth}
\begin{minipage}{0.48\textwidth}
		\centering
		\figureToyExAPD[.9\textwidth]
		\caption{Clustering \wrt $\hat \dfunc_i(\unit) = (\unit\!-\!s_i)\tr M_i
			(\unit\!-\!s_i)$}
		\label{fig:exmpl:anisotroppd}
\end{minipage}
	\end{figure}
}
\newcommand{\placeFigureF}{
\begin{figure}[!ht]
	\begin{minipage}{0.48\textwidth}
		\centering
		\figureToyExSP[.9\textwidth]
		\caption{Clustering \wrt $\hat \dfunc_i(\unit) = \dgraph(s_i,\unit)$}
		\label{fig:exmpl:graphclustering}
\end{minipage} \hspace{0.02\textwidth}
\begin{minipage}{0.48\textwidth}
		\centering
		\figureToyExSPSQ[.9\textwidth]
		\caption{Clustering \wrt  $\hat \dfunc_i(\unit) = \dgraph(s_i,\unit)^2$}
		\label{fig:exmpl:graphclusteringSquared}
\end{minipage}
\end{figure}
}
\newcommand{\placeFigureG}{
\begin{figure}[htb]
	\centering
	\figureExampleOnlyAffineTrans
	\caption{Intersecting shortest paths.}
	\label{fig:proof:connectedonlyaffine}
\end{figure}	
}
\newcommand{\placeFigureH}{

	 \begin{figure}[!ht]
		\figureZoomedClusteringFracts
		\centering
		\caption{Rounded units in the case $\hat \dfunc_i(\unit) =
			\dgraph(s_i,\unit)$.}
		\label{fig:exmpl:graphclustering:fracts}
	\end{figure}
	\begin{figure}[!ht]
		\centering
		\figureZoomedClusterinSquaredConnected
		\caption{Non-connected cluster in the case $\hat \dfunc_i(\unit) =
			\dgraph(s_i,\unit)^2$.}
		\label{fig:exmpl:graphclusteringSquared:nonconnected}
	\end{figure}

}
\newcommand{\placeFigureI}{
	\begin{figure}[!ht]
		\centering
		\begin{subfigure}[t]{.48\textwidth}
			\centering
			\figureBayernOriginalDistricts
			\caption{Districts from the 2013 elections.}
			\label{fig:res:bayern:wk13}
		\end{subfigure}
		\hspace{.02\textwidth}
		\begin{subfigure}[t]{.48\textwidth}
			\centering
			\figureBayernPowerDiagram
			\caption{Districts  via power diagrams.}
			\label{fig:res:bayern:cpd}
		\end{subfigure}
		\caption{Districts for the state of Bavaria. Here and in the following, we use a  six-coloring of the districts for an easy distinguishability.}
		\label{fig:res:bayern}
	\end{figure}
}
\newcommand{\placeFigureJ}{
	\begin{figure}[!ht]
		\centering
		\begin{subfigure}[t]{.45\textwidth}
			\centering
			\figureNiedersachsenOriginalDistricts
			\caption{Districts from the 2013 election.}
			\label{fig:res:niedersachsen:wk13}
		\end{subfigure}
		\hspace{.02\textwidth}
		\begin{subfigure}[t]{.45\textwidth}
			\centering
			\figureNiedersachsenAPD
			\caption{Districts  via anisotropic power
				diagrams. Ellipses depict unit balls of local norms (determined from the districts of 2013).}
			\label{fig:res:niedersachsen:apd}
		\end{subfigure}
		\caption{Districts for the state of Lower Saxony.}
		\label{fig:res:niedersachsen}
	\end{figure}
}
\newcommand{\placeFigureK}{
	  \begin{figure}[h!t]
		\begin{subfigure}[t]{.45\textwidth}
			\centering
			\figureNRWOriginalDistricts
			\caption{Districts from the 2013 election.}
			\label{fig:res:nrw:wk13}
		\end{subfigure}
		\hspace{.02\textwidth}
		\begin{subfigure}[t]{.45\textwidth}
			\centering
			\figureNRWSP
			\caption{Districts via shortest-path
				diagrams.}
			\label{fig:res:nrw:sp}
		\end{subfigure}
		\caption{Districts for the state of  North Rhine-Westphalia.}
		\label{fig:res:nrw}
	\end{figure}
}
\newcommand{\placeFigureL}{
	\begin{figure*}[h!t]
		\centering
		\begin{subfigure}[t]{.31\textwidth}
			\centering
			\figureBundDevPD
			\subcaption{ {Power diagram} approach.}
			\label{fig:res:deviations:cpdcorrected}
		\end{subfigure}%
		\hspace{.02\textwidth}
		\begin{subfigure}[t]{.31\textwidth}
			\centering
			\figureBundDevAPD
			\subcaption{{Anisotropic power diagram} approach.}
			\label{fig:res:deviations:apd}
		\end{subfigure}%
		\hspace{.02\textwidth}
		\begin{subfigure}[t]{.31\textwidth}
			\centering
			\figureBundDevSP
			\subcaption{ {Shortest-path diagram} approach.}
			\label{fig:res:deviations:shortestpath}
		\end{subfigure}%
		\caption{Deviations after applying our methodology. Colors as in
			\cref{fig:res:dev:overview}.}
		\label{fig:res:deviations}
	\end{figure*} 
}
\newcommand{\placeFigureM}{
	 
	\begin{figure*}[!ht]
		
		\begin{subfigure}[t]{.48\textwidth}
			\centering
			\figureHessenOriginalDistricts
			\subcaption{Original districts of the 2013 election.}
			\label{fig:res:hessen:wk13}
		\end{subfigure}%
		\hspace{.02\textwidth}
		\begin{subfigure}[t]{.48\textwidth}
			\centering
			\figureHessenPD
			\subcaption{Districts via power diagrams. }
			\label{fig:res:hessen:CPDcorrected}
		\end{subfigure}\\[1em]
		
		\begin{subfigure}[t]{.48\textwidth}
			\centering
			\figureHessenAPD
			\subcaption{Districts via anisotropic power diagrams.}
			\label{fig:res:hessen:APD}
		\end{subfigure}%
		\hspace{.02\textwidth}
		\begin{subfigure}[t]{.48\textwidth}
			\centering
			\figureHessenSP
			\subcaption{Districts via shortest-path diagrams.}
			\label{fig:res:hessen:shortestpath}
		\end{subfigure}
		\caption{Districts for the state of Hesse resulting from the different
			approaches. See also \cref{tab:res:overview} for corresponding key figures.}
		\label{fig:res:hessen}
	\end{figure*}
}
\newcommand{\ie}{i.\,e.\xspace}%
\newcommand{\eg}{e.\,g.\xspace}%
\newcommand{\cf}{cf.\xspace}%
\newcommand{\wrt}{w.\,r.\,t.\xspace}%
\newcommand{\Rbb}{\mathbb{R}}
\newcommand{\Qbb}{\mathbb{Q}}
\newcommand{\Nbb}{\mathbb{N}}
\newcommand{\Rbbsp}{\Rbb_{> 0}}
\newcommand{\Rbbp}{\Rbb_{\ge 0}}
\newcommand{\NP}{{\mathcal{NP}}}
\newcommand{\ones}{\mathds{1}}
\newcommand{\tr}{\ensuremath{^{\text{\tiny T}}}}
\newcommand{\set}[1]{\ensuremath{\left\{#1\right\}}}
\newcolumntype{o}{@{}>{{}}c<{{}}@{}}
\newcolumntype{L}[1]{>{\raggedright\arraybackslash}p{#1}}
\newcolumntype{R}[1]{>{\raggedleft\arraybackslash}p{#1}}
\newcolumntype{C}[1]{>{\centering\arraybackslash}p{#1}}
\newcommand{\mX}{\mathcal{X}} 
\newcommand{\unit}{x}
\newcommand{\unitweight}{\omega}
\newcommand{\unitset}{X}
\newcommand{\cweight}{\kappa}
\newcommand{\cweightl}{\kappa^{-}}
\newcommand{\cweightu}{\kappa^{+}}
\newcommand{\cweightvector}{\mathcal{K}}
\newcommand{\uweightvector}{\Omega}
\newcommand{\conngraph}{G}
\newcommand{\dgraph}{d_G}
\newcommand{\id}{{\operatorname{id}}}
\newcommand{\voronoi}{P}
\newcommand{\voronoidg}{\mathcal{P}}
\newcommand{\muvector}{\mathcal{M}}
\newcommand{\cyclicexchange}{\mathcal{Z}}
\newcommand{\etavector}{\mathcal{E}}
\newcommand{\sitevector}{\mathcal{S}}
\newcommand{\dfunc}{f}
\newcommand{\funcset}{\mathcal{F}}
\newcommand{\funcfam}{\left( \dfunc_1,\ldots,\dfunc_k \right)}
\newcommand{\dset}{\mathcal{D}}
\newcommand{\dfam}{\left( d_1,\ldots,d_k \right)}
\newcommand{\Sfam}{\left( s_1,\ldots,s_k \right) }
\newcommand{\funcparamvector}{(\dset,h,\sitevector,\muvector)}
\newcommand{\arcset}{E}
\newcommand{\arcdistfunc}{\delta}
\newcommand{\BC}{\operatorname{BC}}
\newcommand{\BCI}{\BC_{I}}
\newcommand{\BCIeps}{\BCI^{\hspace{1pt}\epsilon}}
\newcommand{\BCeps}{\BC^{\epsilon}}
\newcommand{\cluster}{C}
\DeclareMathOperator{\supp}{supp}
\newcommand{\suppCi}{\supp(\cluster_i)}
\newcommand{\suppCl}{\supp(\cluster_l)}
\newcommand{\clustering}{\mathcal{C}}
\newcommand{\clusterweighti}{\unitweight(\cluster_i)}
\newcommand{\ccenter}{c}
\newcommand{\assigngraph}{H}
\newcommand{\xiO}{\xi^{\text{o}}}
\newcommand{\clusteringO}{\clustering^{\text{o}}}
\newcommand{\clusterO}{\cluster^{\text{o}}}
\newcommand{\singv}{\sigma}
\newcommand{\anormM}{M}
\newcommand{\anormi}[1]{||#1||_{\anormM_i}}
\newcommand{\dnorm}[1]{\left\lVert#1\right\rVert}
\newcommand{\unitball}{B}
\newtheorem{thm}{Theorem}
\newtheorem{corollary}[thm]{Corollary}
\newtheorem{lemma}[thm]{Lemma}
\title{Constrained clustering via diagrams:\\
	A unified theory and its applications to electoral district design}
\author{Andreas Brieden \thanks{ andreas.brieden@unibw.de,  Universit\"at der Bundeswehr, 85579 Neubiberg, Germany}
\and Peter Gritzmann \thanks{gritzmann@tum.de, Department of Mathematics, Technical University of Munich, 80290
	M\"unchen, Germany }
\and Fabian Klemm \thanks{klemm@ma.tum.de, Department of Mathematics, Technical University of Munich, 80290
	M\"unchen, Germany}
}
\begin{document}
	 \maketitle


		\begin{abstract}
			
The paper develops a general framework for constrained clustering
which is based on the close connection of geometric clustering and 
diagrams. Various new structural and algorithmic results are proved
(and known results generalized and unified)
which show that the approach is computationally efficient and flexible enough 
to pursue various conflicting demands. 

The strength of the model is also demonstrated practically on real-world instances of 
the electoral district design problem where municipalities of a state have to be 
grouped into districts of nearly equal population while obeying certain politically 
motivated requirements.

		\end{abstract}

\section{Introduction}
\label{sec:introduction}
 \paragraph{Constrained Clustering}
General clustering has long been known as a fundamental part of combinatorial
optimization and data analytics. 
For many applications (like electoral district design) it is, however, essential to 
observe additional constraints, particularly on the cluster sizes. 
Accordingly, the focus of the present paper is on {\em constrained clustering}
where a given weighted point set $\unitset$ in some space $\mX$ has to be partitioned 
into a given number $k$ of clusters of (approximately) {\em predefined weight}. 

As has been observed for several applications, good clusterings are
closely related to various generalizations of Voronoi diagrams;
see e.g. \cite{BG2012}, \cite{GKPR2014}, \cite{Carlsson2015} for recent work that is
most closely related to the present paper.
Besides electoral district design, such applications include
grain-reconstruction in material sciences (\cite{ABG+2015}), farmland consolidation 
(\cite{BBG2011}, \cite{BG2012}, \cite{BBG2014}), facility and service
districting (\cite{Segal1977}, \cite{Zoltners1983a},
\cite{Mulvey1984},  \cite{Muyldermans2002}, \cite{Kalcsics2002},
\cite{Galvao2006}, \cite{Aronov2009}), and robot network design
(\cite{Cortes2010}, \cite{Carlsson2013}).

We will present a general unified theory which is based on the relation of 
constrained geometric clustering and diagrams. In Sections \ref{sec:defsandresults} and 
\ref{sec:classes}, we analyze the model and prove various favorable properties. 

Using several types of diagrams  in different spaces, we obtain partitions
that are optimized with respect to different criteria: In Euclidean space, we obtain clusters
that are particularly well consolidated. Using locally ellipsoidal norms,  we
can to a certain extent preserve originally existing structures. 
In a discrete metric space derived from a graph that encodes an intrinsic neighboring relation,
we obtain assignments that are guaranteed to be connected.
In the theoretical part the various different issues will be visualized with the 
help of a running example.

\paragraph{Electoral District Design} Our prime example will
be that of electoral district design which has been approached from various 
directions over the last half century (see
\cite{RSS2013}, \cite{Kal2015}, and \cite{Tas2011} for surveys, \cite{God2015}
for the example of Germany, and \cite{HG2012}, \cite{HG2013} for general accounts
on partitions).
Municipalities of a state have to be grouped to form electoral districts.
The districts are required to be of nearly equal population and of ``reasonable'' shape.
Hence a crucial nature of the electoral district design problem is that
there are several partly conflicting optimization criteria such as the grade of population balance,
consolidation, or a desire for some continuity in the development of districts
over time. Therefore we will show how our unified approach allows the decision maker 
to compare several models with different optimization foci. 

\placeFigureA

\cref{sec:results} will show the effect of our method for the federal elections in Germany. 
The German law (\cite{bwg}) requires that any deviation of district sizes of more 
than 15\% from the federal average is to be avoided.
As a preview, \cref{fig:res:dev:overview} contrasts the occurring deviations from the 2013
election with the deviations resulting from one of our approaches.
The federal average absolute deviation drops significantly from 9.5\% for
the 2013 election to a value ranging from 2.1\% to 2.7\% depending on
the approach. For most states, these deviations are close to optimal since the 
average district sizes of the states i.e., the ratios of their numbers of districts 
and eligible voters differ from the federal average already about as much. See \cref{sec:results} for
detailed results and the Appendix for further statistics. Furthermore, an online supplement depicts the results of all approaches for the full data set, see
 \url{http://www-m9.ma.tum.de/material/districting/}.

\paragraph{Constrained Clustering via Generalized Voronoi Diagrams}
~ \newline
In accordance with \cite{ES1986}, the \emph{generalized Voronoi diagram} for given
functions $\dfunc_i: \mX \rightarrow \Rbb$, $i = 1,\ldots,k$, is obtained by
assigning each point $x \in \mX$ to a subset $C_i$ of $\mX$ whose value
$\dfunc_i(x)$ is minimal.
We are interested in clusterings of $\unitset$ that are induced by such
diagrams (cf.
\cref{sec:method:clusteringsviageneralizedvds}).

Of course, in order to obtain suitable diagrams, the choice of the functions
$\dfunc_i$ is crucial. For parameters $\funcparamvector$
we define the $k$-tuple of functions $\funcset(\dset,h,\sitevector,\muvector) 
:= \funcfam$ via \[\dfunc_i(x) := h(d_i(s_i,x)) + \mu_i.\] Here, $\dset := \dfam$ is a
$k$-tuple of metrics (or more general distance measures) on $\mX$,
 $h: \Rbbp \rightarrow \Rbbp$ is monotonically increasing, $\sitevector := \Sfam \in \mX^k$ is a $k$-tuple of points in $\mX$,
and $\muvector := (\mu_1,\ldots,\mu_k) \in \Rbb^k$ is a vector of reals.
 If the metrics $d_i$ are not all identical, we call the resulting
 diagram \emph{anisotropic}.

 \placeFigureB
 
 We consider an exemplary selection of types of generalized Voronoi diagrams (see
 also \cite{AB1986}, \cite{ES1986}, \cite{Okabe1997}, \cite{OBSC2009}). 
For each of the considered types, \cref{fig:exemplarydissections} depicts an exemplary diagram together with its induced clustering.
 
 In the Euclidean space,  the choice  
 \[\dfunc_i(x) := \dnorm{ x - s_i }_2^{2} + \mu_i\]
 yields \emph{power diagrams}; see \cite{AHA1998}, \cite{bhr-92}. For the particular case of
 \emph{centroidal} diagrams, in which the sites coincide with the resulting
 centers of gravity of the clusters, the inherent variance is minimized.
 This can be achieved by optimization over $\sitevector$ (cf. \cite{BG2012}, \cite{BG2010},
 \cite{BBG2013}, \cite{FH2011}).
 
 The setting 
 \[\dfunc_i(x) := \dnorm{ x - s_i }_2+ \mu_i\]
 yields  \emph{additively weighted Voronoi diagrams}. 
 
 Allowing for each cluster an individual ellipsoidal norm yields \emph{anisotropic}
 Voronoi and power diagrams, respectively. 
 Appropriate  choices of norms facilitate the integration of further information
 such as the shape of pre-existing clusters in our application.
 
 We also consider the discrete case $\mX = \unitset$. Here, we are
 given a connected graph  $\conngraph := (\unitset, \arcset, \arcdistfunc)$ 
 with a function $\delta: \arcset \rightarrow \Rbbsp$ assigning a positive
 distance to each edge. With $\dgraph(x,y)$  defined as the length of the
 shortest $x$-$y$-path in $\conngraph$ \wrt $\delta$, this induces a metric on
 $\mX$.
 The choice of
 \[\dfunc_i(x) := \dgraph(s_i,x) + \mu_i\]
  then leads to
 \emph{shortest-path diagrams}.
 Such diagrams guarantee the connectivity of all clusters in the
 underlying graph.
 This allows to represent intrinsic relations of data points that cannot be
 easily captured otherwise.

As we will see, the parameters $\dset$ and $h$ mainly
determine the characteristics of the resulting diagrams. 
The points $s_i$ then serve as reference points -- called \emph{sites} -- for
the clusters. 

It is shown that for any choice of $\dset$, $h$ and $\sitevector$ there exists a
choice of the additive parameter tuple $\muvector$, such that the induced
clusters are of prescribed weight as well as optimally consolidated (cf.
\cref{cor:optimalclustering}).  Thus, we distinguish between the
\emph{structural parameters}  $\dset,h$ and $\sitevector$ and the \emph{feasibility parameter}
$\muvector$. Our approach does not automatically yield integral assignments in
general but may require subsequent rounding.
However,  the number of fractionally assigned points and thus the deviation of
cluster weights can be reasonably controlled (see \cref{lem:errorestimate}).
 
Typically, $\dset$ and $h$ are defined by the specific application as it determines
the requirements on the clusters. One can still optimize
over the remaining structural parameter $\sitevector$ with respect to different criteria,
\eg, optimal total variances or margins. For any choice of structural
parameters, the feasibility parameter $\muvector$ is then readily provided as
the dual variables of a simple linear program.

As we will point out in more detail our framework extends various previously pursued approaches.
We feel that a unified and self-contained exposition serves the reader 
better than a treatment that relies heavily on pointers to the scattered
literature. Hence we include occasionally new concise proofs of known
results whenever this adds to the readability of the paper. 
Of course, we try to always give the appropriate references. 

 \paragraph{Organization of the Paper}
\cref{sec:defsandresults} yields the general definitions and methodology for our approach. 
\cref{sec:classes} provides a short study of typical generalized Voronoi diagrams 
and shows their relevance for constrained clustering.
\cref{sec:results} then presents our results for the electoral district design
problem for the example of Germany in all detail, while \cref{sec:conclusion}
concludes with some final remarks.

\section{Definitions and Methodology}
\label{sec:defsandresults}
We begin by describing our approach to constrained geometric clustering in a general context.  
Due to the specific application we focus on the discrete case of partitioning a given
finite weighted set; some results for the continuous case will however also be mentioned.

First, \cref{sec:method:constraintedclustering} defines the terminology for
constrained clusterings. We construct  clusterings  that are
induced by a suitable dissection of  the underlying space. For this purpose,
\cref{sec:method:clusteringsviageneralizedvds} formally defines  generalized
types of Voronoi diagrams and shows how they relate to clusterings.
\cref{sec:method:clusteringdiagramcorrespondence} then yields the main
theoretical results that establish a correspondence of clusterings with
prescribed capacities and  generalized Voronoi diagrams.

\subsection{Constrained Clustering} \label{sec:method:constraintedclustering}
Let $k,m \in \Nbb$ and $\mX$ be an arbitrary space.  We consider a  set
\[\unitset := \{\unit_1,\ldots,\unit_m\} \subset \mX\] 
with
corresponding weights
 \[\uweightvector := (\unitweight_1,\ldots,\unitweight_m)
\in \Rbbsp^m.\]
Furthermore, let 
\[\cweightvector := (\cweight_1,\ldots,\cweight_k) \in
\Rbbsp^k\] 
such that $\sum_{i = 1}^k \cweight_i = \sum_{j=1}^m \unitweight_j$.

The vector $\cweightvector$ contains the desired cluster "sizes". Hence, we want to find a partition of $\unitset$ such that for each cluster $C_i$ its
total weight meets the prescribed capacity $\cweight_i$.

For $k = 2$, integer weights, and $\cweight_1 = \cweight_2$ the associated
decision problem coincides with the well-known \textsc{Partition} problem
and is therefore already $\NP$-hard.

We  consider also a relaxed version of the problem by allowing fractional
assignments 
\[\clustering := \left(   \xi_{i,j} \right)_{ \substack{ i = 1,\ldots,k
\\ j = 1,\ldots,m }} \in [0,1]^{k \times m}\]
such that $\sum_{ i = 1}^k
\xi_{i,j} = 1$ for each $j$. $\clustering$ is called a \emph{(fractional)
clustering} of $\unitset$ and $\xi_{i,j}$ is the fraction
of unit $j$ assigned to cluster $i$. We further set  $\cluster_i :=
(\xi_{i,1},\ldots,\xi_{i,m})$, call it cluster $i$ and let
\[\suppCi := \{ \unit_j \in \unitset: \xi_{i,j} > 0 \} \]
 denote its \emph{support}, \ie,  the set of those
elements in $\unitset$ that are assigned to $i$ with some positive fraction. If
$\clustering \in \{ 0,1 \}^{k \times m}$, we call the clustering \emph{integer}.

The \emph{weight} of a cluster is given by 
\[\clusterweighti := \sum_{j = 1}^m \xi_{i,j} \unitweight_j .\] 
A clustering $\clustering$ is \emph{strongly balanced}, if 
\[\clusterweighti = \cweight_i\]
for each $i$.
If lower and upper bounds $\cweightl_i, \cweightu_i \in \Rbbp$ for the cluster weights are given
and
\[\cweightl_i \leq \sum_{j = 1}^m \unitweight_j \xi_{i,j} \leq \cweightu_i\]
 holds for every $i$, $\clustering$ is called \emph{weakly} balanced.
A case of special interest for our application is that of 
$\cweightl_i=(1-\epsilon)\cweight_i$ and $\cweightu_i=(1+\epsilon)\cweight_i$
 for all $i$ for some given $\epsilon > 0$.
Then, i.e., if 
\[ ( 1-\epsilon) \cweight_i  \leq \clusterweighti \leq (1+  \epsilon) \cweight_i\]
 for each 
$i$ we call $\clustering$ \emph{$\epsilon$-balanced}, or, whenever the choice of $\epsilon$ is clear simply
\emph{balanced}.

By $\BC$ and $\BCeps$ we denote the set of all strongly balanced and
$\epsilon$-balanced fractional clusterings, respectively.
Note that the condition $\sum_{i = 1}^k \cweight_i = \sum_{j=1}^m \unitweight_j$
guarantees that $\BC \neq \emptyset$.
Similarly, let $\BCI$ and $\BCIeps$ denote the set of all strongly balanced and
$\epsilon$-balanced integer clusterings, respectively.
 Of course, \linebreak
  $\BCI \subset \BC \subset \BCeps$ and $\BCI \subset \BCIeps \subset \BCeps$.

\subsection{Clusterings induced by Generalized Voronoi Diagrams}
\label{sec:method:clusteringsviageneralizedvds}

 Let a $k$-tuple $\funcset := \funcfam$ of functions $\dfunc_i : \mX
\rightarrow \Rbb$ for $i = 1,\ldots,k$ be given. For each cluster, the
corresponding $\dfunc_i$ is supposed to act as a \emph{distance measure}.
 While a  classical Voronoi diagram in Euclidean space assigns each point to a 
 reference point which is closest, this concept can be generalized by assigning
 a point to each region for which the corresponding value of $\dfunc_i$ is
 minimal.
Formally, we set
\begin{align*}
 	\voronoi_i := \{ x \in \mX: \dfunc_i(x) \leq \dfunc_l(x)~ \forall l
 	\in \{ 1,\ldots,k \}  \} 
 \end{align*}
and call $\voronoi_i$ the $i$-th \emph{(generalized) Voronoi region} (or
\emph{cell}).
Then $\voronoidg := (\voronoi_1,\ldots,\voronoi_k)$ is the
\emph{generalized Voronoi diagram} (\wrt $\funcset$). 

Note that, in general, $\voronoidg$ does not constitute a partition of $\mX$. We
will, of course, in the application  focus on choices of the functions
$\dfunc_i$ that guarantee that the cells $\voronoi_i$ do not have interior
points in common; see \cref{thm:lowdimbisectors}.

A generalized Voronoi diagram $\voronoidg$ is said to be \emph{feasible}
for a clustering $\clustering$ if 
\[\suppCi \subset \voronoi_i\] 
for all $i$.
Typically, we do not want a Voronoi region to contain elements ``by chance",
\ie, points that are not (at least fractionally) assigned to their corresponding
cluster.
Hence, we say $\voronoi$ \emph{supports} $\clustering$, if 
\[\suppCi = \voronoi_i \cap \unitset\]
 for all $i$.

\subsection{Correspondence of Constrained Clusterings and Generalized Voronoi
Diagrams}
\label{sec:method:clusteringdiagramcorrespondence}

As described in the introduction, we are interested in finding a clustering
$\clustering \in \BC$ that is supported by a generalized Voronoi diagram
$\voronoidg$ \wrt functions $\dfunc_i(x) := h(d_i(s_i,x)) + \mu_i$. A
natural question is for which choices of $\funcparamvector$ such a clustering exists.

   By definition, a diagram $\voronoidg$ is feasible for $\clustering \in \BC$
 if and only if
 \begin{align}
 \xi_{i,j} \cdot  \left( h(d_i(s_i,\unit_j)) + \mu_i  -
 \min_{l = 1,\ldots,k} ( h(d_l(s_l,\unit_j)) + \mu_l)  \right)= 0 \label{eq:dgfeasibility}
 \end{align}
 holds for every $i = 1,\ldots,k$ and $j =  1,\ldots,m$.

 We will now recover \eqref{eq:dgfeasibility} as a complementary slackness condition in linear programming.
 For this purpose, first note that in general, \ie, for any $\clustering \in \BC$ and
 $\funcparamvector$, we have
  \begin{multline}
 	\sum_{i = 1}^k \sum_{j = 1}^m  \xi_{i,j}\cdot \unitweight_j \cdot  \left(
 	h(d_i(s_i,\unit_j)) + \mu_i - \min_{l = 1,\ldots,k} ( h(d_l(s_l,\unit_j)) +
 	\mu_l)  \right) \geq 0,   \label{eq:dginequality} 
 \end{multline}
 as all weights $\unitweight_j$ are
 positive and each factor in the sum above is non-negative. 
 Using   $\sum_{j = 1}^m
 \unitweight_j \xi_{i,j} =  \cweight_i$ for each $i$ and $\sum_{i = 1}^k \xi_{i,j} = 1$ for each $j$, Inequality
 \eqref{eq:dginequality} is equivalent to
 \begin{multline}
 \sum_{i = 1}^k \sum_{j = 1}^m  \xi_{i,j}\cdot \unitweight_j \cdot  
 h(d_i(s_i,\unit_j)) 
 \geq	\sum_{j = 1}^m \unitweight_j \min_{l =
 	1,\ldots,k} ( h(d_l(s_l,\unit_j)) + \mu_l) -   \sum_{i = 1}^k \cweight_i \mu_i.
 \label{eq:dginequalityII}
 \end{multline}
 Note that  \eqref{eq:dgfeasibility} holds for every $i$ and $j$ if and only if  \eqref{eq:dginequality},
  and hence \eqref{eq:dginequalityII}, holds with equality.

 Now, observe that the left-hand side of \eqref{eq:dginequalityII} does not depend on
 $\muvector$ while the right-hand side does not depend on $\clustering$.
For any choice of $(\dset,h,\sitevector)$ equality in
 \eqref{eq:dginequalityII} can therefore only hold if $\clustering$ minimizes
 the left-hand side while  $\muvector$  maximizes the
 right-hand side.
 Thus, in particular, $\clustering \in \BC$ must be a minimizer of the linear program

  {\renewcommand{\arraystretch}{1}
\begin{equation}  \tag{P} \label{plp}
  \begin{array}{p{20pt}>{\displaystyle}r@{}>{\displaystyle}l@{\hspace{5pt}}o@{\hspace{5pt}}r@{\hspace{15pt}}>{\displaystyle}l}
  \multicolumn{6}{>{\displaystyle}l}{ \min\limits_{\clustering \in \Rbb^{k
  \times m}}  \sum_{i = 1}^k \sum_{j = 1}^m \xi_{i,j}\cdot \unitweight_j \cdot h(d_i(s_i,\unit_j))  ~~  \text{ s.t.
	 } } \\
 	   & \sum_{i = 1}^k \xi_{i,j}&  & =  &  1 & (i =
 	   1,\ldots,k) \\
 	    & \sum_{j = 1}^m \xi_{i,j}& \unitweight_j & =  &  \cweight_i & (j =
 	    1,\ldots,m) \\
 	   & \xi_{i,j}& & \geq & 0 & (i =
 	   1,\ldots,k; j = 1,\ldots,m).
  \end{array}
\end{equation}}

 By introducing auxiliary
 variables $\etavector := (\eta_1,\ldots,\eta_m)$, maximization of the
 right-hand side of  \eqref{eq:dginequalityII} can be formulated as  linear program, as
 well:
 {\renewcommand{\arraystretch}{1.1}
\begin{equation} \tag{D} \label{dlp}
  \begin{array}{p{7pt}>{\displaystyle}r@{\hspace{2pt}}o@{\hspace{2pt}}>{\displaystyle}r@{\hspace{5pt}}>{\displaystyle}l}
  \multicolumn{5}{>{\displaystyle}l}{\max_{{\substack{\muvector \in
	 \Rbb^k, \\
	\etavector \in \Rbb^m}} } \hspace{.2em} \sum_{j = 1}^m \unitweight_j  \eta_j  -  
	\sum_{i = 1}^k \cweight_i  \mu_i  ~~  \text{ s.t.
	 } } \\
 	   &  \eta_j  & \leq &  h(d_i(s_i,\unit_j)) + \mu_i & (i = 1,\ldots,k; j =
       1,\ldots,m)
  \end{array}
\end{equation}}
Now, observe that \eqref{dlp} is  the dual program to
\eqref{plp}.
Thus, $\voronoidg$ is feasible for $\clustering$ if and only if $\clustering$ and
$(\muvector, \etavector)$ are primal-dual optimizers.
In particular, as
 \[\eta_j = \min_{i = 1,\ldots,k} ( h(d_l(s_i,\unit_j)) +
 \mu_i)\]
 holds for every optimal solution of \eqref{dlp}, \cref{eq:dgfeasibility} states exactly the complementary slackness conditions.
 Furthermore, if the complementarity is strict, \ie, exactly one factor is
 strictly positive in any  equation of type \eqref{eq:dgfeasibility}, this is equivalent to $\voronoidg$ supporting
$\clustering$.

We summarize these observations in the following theorem.

\begin{thm} \label{thm:complementaryslackness}
Let $\clustering\in \BC$, $\dset$ be a $k$-tuple of metrics on $\mX$,
$\sitevector \in \mX^k$, $h: \Rbbp \rightarrow \Rbb$, $\muvector \in \Rbb^k$,
and let $\voronoidg$ be the generalized Voronoi diagram \wrt
$\funcset(\dset,h,\sitevector,\muvector)$. Further, set $\eta_j :=  \min_{i = 1,\ldots,k}
(h(d_i(s_i,\unit_j))+ \mu_i)$ for $j = 1,\ldots,m$, and $\etavector :=
(\eta_1,\ldots,\eta_m)$,

Then $(\muvector, \etavector)$ is feasible for \eqref{dlp} and the following
equivalencies hold:

{\renewcommand{\arraystretch}{1.75}
\begin{tabular}{L{0.24\textwidth}@{\hspace{9pt}}l@{\hspace{9pt}}L{0.45\textwidth}}
$\voronoidg$ is feasible for $\clustering$ & $\Leftrightarrow$ &
$\clustering$ and $(\muvector, \etavector)$ satisfy the complementary slackness
condition for
\eqref{plp} and \eqref{dlp}
\\
$\voronoidg$ supports $\clustering$ & $\Leftrightarrow$ &
$\clustering$ and $(\muvector, \etavector)$  satisfy the \newline strict complementary
slackness condition for
\eqref{plp} and \eqref{dlp}
\end{tabular}
}
\end{thm}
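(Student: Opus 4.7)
The plan is to recognize that the derivation preceding the theorem already identifies \eqref{plp} and \eqref{dlp} as a primal--dual pair with $\clustering$ as a primal point and $(\muvector,\etavector)$ as a dual point. The two equivalencies are then nothing but a translation of complementary slackness (respectively strict complementary slackness) from LP language into the geometric language of the generalized Voronoi diagram $\voronoidg$.

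I would first verify that $(\muvector,\etavector)$ is dual feasible. This is immediate: by definition $\eta_j = \min_{i} (h(d_i(s_i,\unit_j)) + \mu_i)$, hence $\eta_j \leq h(d_i(s_i,\unit_j)) + \mu_i$ for every pair $(i,j)$, which is exactly the only nontrivial constraint in \eqref{dlp}. Primal feasibility of $\clustering$ holds by assumption since $\clustering\in\BC$. Because the only sign-constrained primal variables are the $\xi_{i,j}$, the complementary slackness condition between \eqref{plp} and \eqref{dlp} reduces to
\[
\xi_{i,j}\cdot\bigl(h(d_i(s_i,\unit_j)) + \mu_i - \eta_j\bigr) = 0 \quad\text{for all } i,j,
\]
which is exactly equation \eqref{eq:dgfeasibility}. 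Observe further that the factor $h(d_i(s_i,\unit_j)) + \mu_i - \eta_j$ vanishes iff the minimum defining $\eta_j$ is attained at index $i$, i.e., iff $\unit_j\in\voronoi_i$.

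With this dictionary in place, the two equivalencies fall out directly. Ordinary complementary slackness says that $\xi_{i,j} > 0$ implies $\unit_j\in\voronoi_i$, i.e., $\suppCi\subset\voronoi_i$ for every $i$, which is feasibility of $\voronoidg$ for $\clustering$ by definition. Strict complementary slackness additionally gives the converse implication, so $\xi_{i,j} > 0$ iff $\unit_j\in\voronoi_i$, equivalently $\suppCi = \voronoi_i\cap\unitset$, which is the statement that $\voronoidg$ supports $\clustering$. The main obstacle is purely notational: one must match sign conventions so that the reduced cost of the primal variable $\xi_{i,j}$ is indeed the slack $h(d_i(s_i,\unit_j)) + \mu_i - \eta_j$ of the corresponding dual inequality; once this is verified, the remaining argument is just unwinding definitions.
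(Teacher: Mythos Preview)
Your proposal is correct and essentially identical to the paper's own argument: the discussion preceding the theorem statement establishes that \eqref{plp} and \eqref{dlp} form a primal--dual pair, and the theorem merely records that \eqref{eq:dgfeasibility} is precisely the (strict) complementary slackness condition, which you have accurately unwound into the geometric conditions $\suppCi\subset\voronoi_i$ and $\suppCi=\voronoi_i\cap\unitset$. There is nothing to add.
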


\cref{thm:complementaryslackness} establishes a one-to-one-correspondence of
(fractional) strongly balanced clusterings that are supported by a generalized Voronoi diagram 
and those faces of the
polytope  $\BC$ which are optimal \wrt a corresponding objective function. 

Observe that the deduction of \cref{thm:complementaryslackness} did not use any
further assumptions on the functions $\dfunc_i$ besides the additive component
$\muvector$. 
Thus, we obtain the following corollary.

\begin{corollary} \label{cor:optimalclustering}
Let  $\hat \dfunc_i: \mX \rightarrow \Rbb$, $i = 1,\ldots,k$. 
Then
$\clustering^\ast \in \BC$ is an optimizer of 
$\min_{ \clustering \in \BC} 
\sum_{i = 1}^k \sum_{j = 1}^m   \xi_{i,j}\cdot \unitweight_j \cdot \hat f_i(\unit_j)$ if
and only if there exists $\muvector := (\mu_1,\ldots,\mu_k) \in \Rbb^k$ such 
that the generalized Voronoi diagram \wrt $f_i := \hat f_i + \mu_i$ is feasible
for $\clustering^\ast$.
\end{corollary}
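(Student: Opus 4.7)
The plan is to reduce the statement directly to \cref{thm:complementaryslackness}. A close inspection of the derivation preceding that theorem shows that the specific parametric form $h(d_i(s_i,\cdot))$ is never actually exploited: the functions $h(d_i(s_i,x_j))$ enter the primal objective and the dual constraints only as numerical values, and none of the algebra leading to \eqref{eq:dgfeasibility}--\eqref{eq:dginequalityII} invokes metric properties of $d_i$, monotonicity of $h$, or continuity on $\mX$. The whole argument therefore carries over verbatim with arbitrary real-valued functions $\hat \dfunc_i$ in place of $h(d_i(s_i,\cdot))$, and this observation is the essence of the proof.

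Concretely, I would form the primal LP \eqref{plp} with objective coefficients $\unitweight_j \hat\dfunc_i(\unit_j)$ and feasibility region $\BC$, and the dual \eqref{dlp} with constraints $\eta_j \le \hat\dfunc_i(\unit_j)+\mu_i$. The primal is feasible since $\BC\neq\emptyset$ and bounded on this compact polytope; the dual is feasible (e.g.\ take $\muvector=0$ and $\eta_j$ sufficiently small). Strong LP duality therefore applies: $\clustering^\ast\in\BC$ is optimal for the primal if and only if there exists a dual feasible pair $(\muvector^\ast,\etavector^\ast)$ together with which $\clustering^\ast$ satisfies complementary slackness.

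For the ``only if'' direction, given such an optimal $\muvector^\ast$, extend to $\eta_j^\ast := \min_{l}(\hat\dfunc_l(\unit_j)+\mu_l^\ast)$; then $(\muvector^\ast,\etavector^\ast)$ is optimal for \eqref{dlp} and complementary slackness reads
\[
\xi_{i,j}^\ast\cdot\bigl(\hat\dfunc_i(\unit_j)+\mu_i^\ast - \min_{l}(\hat\dfunc_l(\unit_j)+\mu_l^\ast)\bigr)=0,
\]
which is exactly \eqref{eq:dgfeasibility} and hence says that the generalized Voronoi diagram for $\dfunc_i := \hat\dfunc_i + \mu_i^\ast$ is feasible for $\clustering^\ast$. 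For the ``if'' direction, given $\muvector$ such that the diagram is feasible, the same equalities show that the chosen $\etavector$ and $\clustering^\ast$ satisfy complementary slackness for \eqref{plp}--\eqref{dlp}, so $\clustering^\ast$ is a primal optimizer.

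The only step requiring care — and the main (mild) obstacle — is to ensure the re-use of \cref{thm:complementaryslackness} is legitimate, i.e., that its derivation really made no use of the structural parameters $\dset, h, \sitevector$ other than through the induced numerical values on $\unitset$. Once this is verified the corollary follows with no further computation.
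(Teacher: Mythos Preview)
Your proposal is correct and mirrors the paper's own argument: the paper likewise observes that the derivation of \cref{thm:complementaryslackness} used no properties of the functions beyond the additive component $\muvector$, so the theorem applies verbatim with arbitrary $\hat\dfunc_i$ in place of $h(d_i(s_i,\cdot))$. Your explicit verification via primal/dual feasibility and complementary slackness is a faithful (slightly more detailed) unpacking of exactly this observation.
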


For functions $\hat \dfunc_i(x) := h(d_i(s_i,x))$,
 this  has already been shown  by linear programming duality in \cite{BG2012} for a discrete set $X$, $h = (\cdot)^2$, and the Euclidean metric. In a continuous setting, \ie,
for $X = \Rbb^n$ and balancing constraints defined \wrt a probability
distribution on $\Rbb^n$, this has been proven in \cite{AHA1998} and
extended to more general function tuples $\funcset$ in \cite{GKPR2014}.
Here, the particular case $h = \id$  is contained in \cite{Aronov2009}.
In \cite{Cortes2010}, this result was deduced for the Euclidean case and an arbitrary function $h$ by carefully considering the
optimality conditions of an alternative optimization problem and deducing optimality (but not
establishing the linear programming duality). Furthermore, in \cite{Carlsson2013}
and \cite{Carlsson2015}  the general continuous case was proved by discretization also involving  linear programming duality.

Using the second part of \cref{thm:complementaryslackness}, we can now characterize
supporting diagrams. 

\begin{corollary}
\label{thm:supportingclustering} 
Let  $\hat \dfunc_i: \mX \rightarrow \Rbb$, $i = 1,\ldots,k$. Then 
 $\clustering^\ast \in \BC$ lies in the relative interior of the optimal
face of 
 $\min_{ \clustering \in \BC}  \sum_{i =
1}^k \sum_{j = 1}^m  \xi_{i,j}\cdot \unitweight_j \cdot
\hat \dfunc_i(\unit_j)$  
if and only if  there exists  $\muvector := (\mu_1,\ldots,\mu_k) \in \Rbb^k$ such that the 
generalized Voronoi diagram  \wrt $f_i := \hat f_i + \mu_i$ supports $\clustering^\ast $.
\end{corollary}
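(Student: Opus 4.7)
The plan is to mirror the derivation of \cref{cor:optimalclustering} but invoke the second equivalence of \cref{thm:complementaryslackness} (strict complementary slackness $\Leftrightarrow$ the diagram supports the clustering) in place of the first. The only additional ingredient is the classical linear-programming fact -- Goldman--Tucker strict complementarity -- that for a primal optimizer $\clustering^{\ast}$ of \eqref{plp}, lying in the relative interior of the primal optimal face is equivalent to the existence of a dual optimal solution with which $\clustering^{\ast}$ satisfies strict complementary slackness.

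First I would note, as the paragraph preceding \cref{cor:optimalclustering} already observes, that the derivation of \cref{thm:complementaryslackness} nowhere used the structural form $\dfunc_i = h(d_i(s_i,\cdot)) + \mu_i$ beyond the additive role of $\mu_i$. Hence I may apply that theorem verbatim with $\hat\dfunc_i$ in place of $h(d_i(s_i,\cdot))$: the primal LP \eqref{plp} has objective $\sum_{i,j}\xi_{i,j}\,\unitweight_j\,\hat\dfunc_i(\unit_j)$ and the dual \eqref{dlp} carries the constraint $\eta_j \le \hat\dfunc_i(\unit_j)+\mu_i$.

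For the direction ($\Leftarrow$), assume there exists $\muvector$ such that the generalized Voronoi diagram $\voronoidg$ with respect to $\dfunc_i := \hat\dfunc_i + \mu_i$ supports $\clustering^{\ast}$. Set $\eta_j := \min_i(\hat\dfunc_i(\unit_j)+\mu_i)$. By \cref{thm:complementaryslackness}, $(\muvector,\etavector)$ is dual feasible and, together with $\clustering^{\ast}$, satisfies the strict complementary slackness condition for \eqref{plp} and \eqref{dlp}. Standard LP theory then gives that both are optimal and that $\clustering^{\ast}$ lies in the relative interior of the optimal face of \eqref{plp}, which is precisely the optimal face of $\min_{\clustering\in\BC}\sum_{i,j}\xi_{i,j}\unitweight_j\hat\dfunc_i(\unit_j)$. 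For the direction ($\Rightarrow$), assume $\clustering^{\ast}$ lies in the relative interior of that optimal face. By Goldman--Tucker there is a dual optimum $(\muvector,\etavector)$ with which $\clustering^{\ast}$ satisfies strict complementary slackness. Applying \cref{thm:complementaryslackness} once more, the diagram $\voronoidg$ with respect to $\dfunc_i := \hat\dfunc_i + \mu_i$ supports $\clustering^{\ast}$.

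The main obstacle is nothing internal to the corollary itself but rather ensuring the strict-complementarity $\Leftrightarrow$ relative-interior characterization is invoked correctly: in the forward direction one must appeal to Goldman--Tucker (or equivalently build a dual optimum satisfying strict complementarity by averaging, using the fact that a relative interior point of the primal optimal face simultaneously strictly activates every primal inequality that is not forced to be tight throughout the optimal face). Once this standard LP ingredient is in hand, everything reduces mechanically to the second equivalence in \cref{thm:complementaryslackness}, just as \cref{cor:optimalclustering} reduced to the first.
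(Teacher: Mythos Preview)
Your proposal is correct and matches the paper's intended argument: the paper states the corollary immediately after the sentence ``Using the second part of \cref{thm:complementaryslackness}, we can now characterize supporting diagrams'' and gives no further proof, so the implicit reasoning is exactly what you spell out --- invoke the second equivalence of \cref{thm:complementaryslackness} together with the standard LP fact that a primal optimum lies in the relative interior of the optimal face if and only if it admits a strictly complementary dual optimum. Your explicit reference to Goldman--Tucker and your handling of both directions (including the observation that every dual optimum automatically has $\eta_j=\min_i(\hat\dfunc_i(\unit_j)+\mu_i)$ since $\unitweight_j>0$) fill in precisely the details the paper leaves to the reader.
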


Thus, for non-unique optima the supporting property of the diagrams may
still be established but comes with the price of more fractionally assigned elements (cf. \cref{sec:shortestpathdiagrams}).

Another fact that roots in a basic result about extremal points of transportation
polytopes has been noted in their respective context by several authors (\eg, 
\cite{Segal1977}, \cite{Marlin1981a}, \cite{Zoltners1983a}, \cite{Hoj1996}, \cite{GLW1997}, \cite{KNS2005}, \cite{BG2012}):
An optimal basic solution of \eqref{plp} yields partitions with a limited number 
of fractionally assigned points. 

Our proof of the following Lemma \ref{cor:fractcount} relies on the bipartite \emph{assignment graph} $\assigngraph(\clustering)$
that is associated with a given clustering $\clustering \in \BC$. It is defined by 
\[\assigngraph(\clustering) := ( \{ 1,\ldots,k \} \cup \unitset, E)\] 
with
 \[ E := \{ \{i, \unit_j\} : \xi_{i,j} > 0 \}.\] 
By \cite[Thm. 4]{KW1968} $\assigngraph(\clustering)$ is acyclic if and only if $\clustering$ is extremal, \ie, a vertex of the feasible region of \eqref{plp}.

\begin{lemma} \label{cor:fractcount}
Let $\hat f_i: \mX \rightarrow \Rbb$, $i = 1,\ldots,k$.
Then there exists $\muvector := (\mu_1,\ldots,\mu_k) \in \Rbb^k$ and a clustering $\clustering \in \BC$
with at most $(k-1)$ fractionally assigned points and at most $2(k-1)$
fractional components $\xi_{i,j}$ such that the generalized Voronoi diagram \wrt
$f_i := \hat f_i + \mu_i$  is feasible for $\clustering$.
\end{lemma}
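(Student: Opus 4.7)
The plan is to apply Corollary \ref{cor:optimalclustering} to a \emph{vertex} minimizer of the linear program \eqref{plp} and then to exploit the classical structure of extremal points of transportation polytopes via a simple degree count on the assignment graph $\assigngraph$.

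First, I would consider \eqref{plp} with coefficients $\hat\dfunc_i(\unit_j)$ in place of $h(d_i(s_i,\unit_j))$. Its feasible region $\BC$ is nonempty (by $\sum_i \cweight_i = \sum_j \unitweight_j$) and bounded, so the minimum is attained at some vertex $\clustering \in \BC$. Corollary \ref{cor:optimalclustering} then produces $\muvector \in \Rbb^k$ for which the generalized Voronoi diagram \wrt $\dfunc_i := \hat\dfunc_i + \mu_i$ is feasible for $\clustering$. This handles the diagram-feasibility part of the lemma; the remaining task is purely combinatorial, namely to bound how fractional such a vertex $\clustering$ can be.

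Next, I would invoke the result \cite[Thm. 4]{KW1968} recalled in the paragraph preceding the lemma: since $\clustering$ is extremal, its bipartite assignment graph $\assigngraph(\clustering)$ on $k + m$ vertices is acyclic, hence a forest with at most $k + m - 1$ edges. This is the one external fact I would use.

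Finally, I would finish with a simple degree count. Every unit $\unit_j$ has degree at least $1$ in $\assigngraph(\clustering)$ because $\sum_i \xi_{i,j} = 1$, and every fractionally assigned unit has degree at least $2$. Writing $f$ for the number of fractional units, summing degrees on the $\unitset$-side gives
\[ |E(\assigngraph(\clustering))| \;\geq\; (m - f) + 2f \;=\; m + f, \]
which combined with the forest bound yields $f \leq k - 1$. The number of fractional components $\xi_{i,j}$ equals exactly the number of edges of $\assigngraph(\clustering)$ incident to fractional units, namely $|E(\assigngraph(\clustering))| - (m-f) \leq (k-1) + f \leq 2(k-1)$. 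The only obstacle is essentially bookkeeping: one must confirm that \eqref{plp} attains its minimum at a vertex (which follows from compactness of $\BC$) and carefully distinguish the degree contribution of fractional versus integral units. No clustering-specific argument is needed beyond Corollary \ref{cor:optimalclustering} and the cited Klee--Witsenhausen characterization.
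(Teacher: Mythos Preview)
Your proposal is correct and follows essentially the same approach as the paper: take a vertex optimizer of \eqref{plp}, invoke the Klee--Witzenhausen acyclicity of $\assigngraph(\clustering)$ to get $|E|\le k+m-1$, and finish with the same degree count (your derivation of the $2(k-1)$ bound via $|E|-(m-f)$ is just a rephrasing of the paper's inequality $|E|\ge m+\tfrac12|\{(i,x_j):0<\xi_{i,j}<1\}|$). The only cosmetic difference is that you make the appeal to Corollary~\ref{cor:optimalclustering} for the existence of $\muvector$ explicit, whereas the paper leaves this implicit from the preceding LP-duality discussion.
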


\begin{proof}
Let $\clustering$ be an extremal solution of \eqref{plp} and \linebreak $\assigngraph(\clustering) = ( \{ 1,\ldots,k \}
\cup \unitset, E)$ be the corresponding assignment graph.  As
$\assigngraph(\clustering)$ is acyclic it follows that 
 $|E| \leq k + m - 1$.
Further, from the definition of $\BC$ it follows that  $\deg(\unit_j ) \geq 1 $ for every $j = 1,\ldots,m$. Moreover, for
any fractionally assigned element $\unit_j \in \unitset$ it follows that $\deg(
\unit_j ) \geq 2 $.
As $\assigngraph(\clustering)$  is bipartite, we also have that $ |E| = \sum_{j
= 1}^m \deg(\unit_j)$.
In conclusion, this yields
\begin{align*}
 k + m - 1 &\geq |E| = \sum_{j = 1}^m  \deg(\unit_j) 
 \geq m + \frac{1}{2} \left| \{ \{ i,x_j \} : 0 < \xi_{i,j} <  1 \} \right| \\
 &\geq m + \left| \{ \unit_j \in \unitset: \unit_j \text{ is fractionally
 assigned} \} \right|,
 \end{align*}
which implies the assertion.
\end{proof}

Note that for the special case of  $\unitweight_j = 1$ for all $j$ every
extremal solution of the transportation problem \eqref{plp} is integral, \ie, $\BC =
\BCI$ (cf. \cite[Corollary 1]{KW1968}). In general, however, this is not the
case. 

Anyway, by solving the linear program \eqref{plp} and applying suitable
rounding, we obtain an integer clustering that satisfies an a-priori (and for
many applications very reasonable) error bound. 
More precisely, \cite{Schroder2001} showed that minimizing the maximum error after
rounding can be done in polynomial time using a dynamic programming approach,
while minimizing the sum of absolute or squared errors is $\NP$-hard.
In \cite{Hoj1996} it was furthermore shown that minimizing the number of fractional assignments while obeying a pre-defined
error tolerance is $\NP$-hard.

The following theorem provides an upper bound for $\epsilon$ that guarantees 
the existence of an $\epsilon$-balanced integer clustering.

\begin{thm} \label{lem:errorestimate} 
Let  $\hat \dfunc_i: \mX \rightarrow \Rbb$, $i = 1,\ldots,k$ and \linebreak $\epsilon \geq \max_{1
\leq j \leq m, 1 \leq i \leq k}  \frac{ \unitweight_j } { \cweight_i }$.
Then there exists $\clustering \in \BCIeps$ together with $\muvector :=
(\mu_1,\ldots,\mu_k) \in \Rbb^k$, such that the
generalized Voronoi diagram \wrt $f_i := \hat f_i + \mu_i$  is feasible for $\clustering$.

For rational input, \ie, $\uweightvector \in \Qbb^m_{>0}, \cweightvector \in
\Qbb^k_{>0}$, $\hat \dfunc_i(\unit_j) \in \Qbb$, $i =
1,\ldots,k$ and $j = 1,\ldots,m$,  
$\clustering$ and $\muvector$ can be determined in polynomial time.
\end{thm}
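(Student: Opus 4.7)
The plan is to first extract a fractional feasible solution with the special structure provided by \cref{cor:fractcount}, and then round it to an integer clustering while preserving Voronoi feasibility and controlling each cluster's weight deviation by $\max_j \unitweight_j$. By the hypothesis $\epsilon \geq \max_{j,i} \unitweight_j/\cweight_i$ we have $\max_j \unitweight_j \leq \epsilon \cweight_i$ for every $i$, so such a deviation bound will automatically yield $\clustering \in \BCIeps$.

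First I would invoke \cref{cor:fractcount} for the given $\hat \dfunc_i$ to obtain $\muvector \in \Rbb^k$ and an extremal optimizer $\clustering^{\ast}$ of \eqref{plp} such that the generalized Voronoi diagram with respect to $\dfunc_i := \hat \dfunc_i + \mu_i$ is feasible for $\clustering^{\ast}$ and such that the assignment graph $\assigngraph(\clustering^{\ast})$ is acyclic with at most $k-1$ fractionally assigned units. Next I would construct $\clustering$ by keeping all integer assignments of $\clustering^{\ast}$ and, for every fractional unit $\unit_j$, choosing some $i(j) \in \{ i : \xi^{\ast}_{i,j} > 0 \}$ and setting $\xi_{i(j),j} := 1$ and $\xi_{l,j}:=0$ for $l\neq i(j)$. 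Because $\supp(\cluster_i) \subseteq \supp(\cluster^{\ast}_i) \subseteq \voronoi_i$ for every $i$ by construction, the diagram remains feasible for $\clustering$ regardless of how $i(\cdot)$ is picked, so the real work lies entirely in choosing $i(\cdot)$ well.

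The main obstacle is to select $i(\cdot)$ so that $|\clusterweighti - \cweight_i| \leq \max_j \unitweight_j$ holds for every $i$. My plan is to exploit the forest structure of $\assigngraph(\clustering^{\ast})$: after stripping off its integer leaves I would treat each remaining tree component separately, root it at a cluster node, and process the fractional units in a bottom-up traversal, routing each $\unit_j$ to the neighbor whose accumulated rounding surplus or deficit is best absorbed by $\unitweight_j$. A careful inductive bookkeeping of the deficits along the subtrees, using that the amount $\xi^{\ast}_{i,j}\unitweight_j$ lost at a non-target neighbor of $\unit_j$ never exceeds $\unitweight_j \leq \max_j \unitweight_j$, then shows that the rounding error at every cluster stays within $\max_j \unitweight_j$; this is essentially the classical rounding argument for basic feasible solutions of transportation polytopes, and the dynamic-programming algorithm of \cite{Schroder2001} mentioned above implements such a procedure in polynomial time while in fact minimizing the maximum rounding error. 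Combined with the fact that, for rational input, \eqref{plp} together with a primal-dual extremal optimum $(\clustering^{\ast},\muvector)$ can be computed in polynomial time, this also yields the claimed polynomial-time bound.
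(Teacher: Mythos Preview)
Your plan is correct and matches the paper's approach: take an extremal optimum of \eqref{plp} together with its dual $\muvector$, exploit the tree structure of the assignment graph to round within the existing supports (so Voronoi feasibility is preserved automatically), and bound each cluster's rounding error by $\max_j \unitweight_j$. The only difference is cosmetic: the paper roots the tree at a cluster node and processes clusters top-down, greedily filling each cluster from its fractional children after the predecessor unit has been fixed, whereas you sketch a bottom-up per-unit routing; both yield the same $\max_j \unitweight_j$ bound, but you should spell out the inductive invariant as concretely as the paper does rather than deferring it to ``careful bookkeeping''.
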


\begin{proof}
 By \cref{cor:optimalclustering} a solution $
 \tilde \clustering \in \BC$ of
 \eqref{plp} with a corresponding dual solution $(\mu_1,\ldots,\mu_k) \in
 \Rbb^k$ yields a  generalized Voronoi diagram \wrt $f_i := \hat f_i + \mu_i$
 that is feasible for $\tilde \clustering$. We may furthermore choose $\tilde \clustering$ 
 to be extremal. 
  We can also assume that the assignment graph $\assigngraph(\tilde
  \clustering)$ is a tree (otherwise we consider its connected components). 
  We may further root this tree in the node $1$ (that corresponds to cluster $\cluster_1$) and 
  w.l.o.g. assume that the other clusters are labelled according to their
  distance from the root, \ie, for $1 \leq i < l \leq k$  it holds that
  the length of a shortest $1$-$i$-path is at most the length of a shortest
  $1$-$l$-path in $\assigngraph(\clustering)$.
  
 For   each $i = 1,\ldots,k$ we  denote the index sets of units that are either non-integrally or integrally assigned to cluster  $\cluster_i$  by
  \begin{equation*}
  	\begin{array}{ccc}
  	F(i) := \{ j: 0 < \tilde
  	\xi_{i,j} < 1 \}
  	&\text{ and }&
  	I(i) :=  \{ j: \tilde \xi_{i,j} = 1 \},
  	\end{array}
  \end{equation*}
 respectively.
Now, we  construct an integer clustering $\clustering^\ast \in \{0,1\}^{k \times
m}$ in which all integral assignments are preserved, \ie, $\xi_{i,j}^\ast :=
\tilde \xi_{i,j}$ for $\tilde \xi_{i,j} \in \{0,1\}$.
  The remaining components will be determined successively for each cluster.
  
  Let us begin with $\cluster_1^\ast$. 
   We round up fractional assignments $\xi_{i,j}, j \in F(1)$, successively as long as this is allowed by the upper bound $\kappa_1$. More precisely, let 
 \[	j^\ast := 
  		\max \{j \in F(1):  
  			\sum\limits_{ \substack{  r \in F(1): \\
  						r \leq j}}  
  			\unitweight_r \leq \cweight_1 
  			-  
  			\sum\limits_{ r \in I(1)} \unitweight_r \} \]
  if the set on the right-hand side is non-empty, otherwise set $j^\ast := 0$.
  With
  \begin{equation*}
  	\xi^\ast_{1,j} :=
  	  \begin{cases}
  		1, &\text{if } j \leq j^\ast \\
  		0, &\text{otherwise.}
  	\end{cases}
  \end{equation*}   
  for every $j \in F(1)$ it then follows that  $|\sum_{j =   1}^m \xi^\ast_{{1},j} \unitweight_j - \kappa_1| <  \max_{j = 1,\ldots,m}  \unitweight_j$.

  Now let $i_0 \geq 2$ and assume that we have determined
  $\cluster^\ast_i$ appropriately for every $i <  i_0$.
 Let $x_{j_0} \in \unitset$ be the  predecessor of the node $i_0$ in the
 rooted tree $\assigngraph(\tilde \clustering)$. By assumption,  the assignment $\xi_{i,j_0}^\ast$ of unit $x_{j_0}$ to cluster $\cluster_{i}$ has already been determined for every $i < i_0$.
 We then set
   \begin{equation}
  \xi_{i_0,j_0}^\ast := \begin{cases} 
    1,&  \text{if }  \xi_{i,j_0}^\ast = 0 \text{ for all } i < i_0
    \\
    0, &\text{otherwise.}
  \end{cases}  \label{eq:roundingassignmentxj}
 \end{equation}
 In analogy to the first step, we define
  \[	j^{\ast \ast} := 
 \max \{j \in F(i_0) \setminus \{ j_0 \}:  \hspace{-.5em}
 \sum\limits_{ \substack{  r \in F(i_0): \\
 		r \leq j}}  \hspace{-.5em}  
 \unitweight_r \leq \cweight_{i_0} 
 -  
 \sum\limits_{ r \in I(i_0)} \unitweight_r  
 -  \xi_{i_0,j_0}^\ast \unitweight_{j_0}  \} \]
if the set on the right-hand side is non-empty, $j^{\ast \ast} := 0$ otherwise, and   set
  \begin{equation*}
	\xi^\ast_{i_0,j} :=
	\begin{cases}
		1, &\text{if } j \leq j^{\ast \ast} \\
		0, &\text{otherwise}
	\end{cases}
\end{equation*} for every $j \in F(i_0) \setminus \{j_0\}$ .

Every point that is
  fractionally assigned in $\tilde \clustering$ is  assigned either to its  predecessor 
   in $\assigngraph(\tilde \clustering)$ or to exactly one successor by
  \eqref{eq:roundingassignmentxj}. Thus, it holds that $\clustering^\ast \in
  \BCIeps$. As $\supp(\cluster_i^\ast) \subset \supp(\tilde \cluster_i)$, the
  already obtained generalized Voronoi diagram remains feasible for $\clustering^\ast$.
  
 Hence, under the additional rationality assumptions, we can solve the linear
 program \eqref{plp} and perform the successive rounding  in polynomial time.
\end{proof}

Note that the bound of \cref{lem:errorestimate}  is asymptotically worst-case
optimal (\eg, for $m = 1$, $\unitweight_1 = k$, $\kappa_i = 1$ for all
$i$ and letting $k \rightarrow \infty$).

As in \cite{BG2012}, we may also consider weakly balanced clusterings
where lower and upper bounds $\cweightl_i, \cweightu_i \in \Rbbp$
for the cluster weights are provided. Of course, a minimizer $\clustering^\ast$ 
over the polytope of such weakly balanced clusterings 
 is a minimizer of \eqref{plp} when setting $\cweight_i :=
 \unitweight(\cluster_i^\ast)$ for every $i$. Hence, 
  optimal weakly balanced clusterings still yield 
 supporting generalized Voronoi diagrams. Unfortunately, the converse is not true in
 general; see \cite{BG2012} for an example.

Naturally, we prefer generalized Voronoi diagrams that support a
clustering over diagrams that are only feasible, as they exclude the possibility of points
lying in a diagram's region only ``by coincidence". At the same time, we
prefer clusterings with only few fractional assignments as provided by
\cref{lem:errorestimate}. As a consequence of \cref{thm:supportingclustering},
this coincides whenever the optimum of \eqref{plp} is unique.
 In case of $\mX = \Rbb^n$, this can be guaranteed by an appropriate choice of
 the structural parameters $\dset, h, \sitevector$. 
We first show that under mild assumptions the generalized Voronoi-cells behave
properly.

\begin{lemma} \label{thm:lowdimbisectors}
Let $\mX = \Rbb^n$, $\dset := \dfam$ be a family of metrics induced by strictly
convex norms, $h: \Rbb \rightarrow \Rbb$ injective, $\sitevector :=
(s_1,\ldots,s_k) \in (\Rbb^{n})^k$ such that $s_i \neq s_l$ for $i \neq
l$, and $\muvector \in \Rbb^k$.

Then for the generalized Voronoi diagram $\voronoidg$ \wrt
$\funcset(\dset,h,\sitevector,\muvector)$ we have
$\operatorname{int}(\voronoi_i) \cap \operatorname{int}(\voronoi_l) = \emptyset$
whenever $i \neq l$.
\end{lemma}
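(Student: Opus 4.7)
The plan is a proof by contradiction. Suppose there is a point $x \in \operatorname{int}(\voronoi_i) \cap \operatorname{int}(\voronoi_l)$ for some $i \neq l$. Since $x$ lies in the interior of both cells, a small open ball $U$ around $x$ is contained in $\voronoi_i \cap \voronoi_l$, so every $y \in U$ satisfies both $\dfunc_i(y) \leq \dfunc_l(y)$ and $\dfunc_l(y) \leq \dfunc_i(y)$. This forces the identity
\[
h(d_i(s_i,y)) + \mu_i \;=\; h(d_l(s_l,y)) + \mu_l \qquad \text{for every } y \in U.
\]

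I would then exploit the injectivity of $h$ to convert this into a geometric level-set coincidence. Fix $y_0 \in U$ and set $r := d_i(s_i,y_0)$ and $r' := d_l(s_l,y_0)$. Injectivity of $h$ yields
\[
\Sigma \;:=\; \{y \in U : d_i(s_i,y) = r\} \;=\; \{y \in U : d_l(s_l,y) = r'\},
\]
so $\Sigma$ is simultaneously a relatively open $(n-1)$-dimensional piece of the norm-$i$ sphere of radius $r$ centered at $s_i$ and of the norm-$l$ sphere of radius $r'$ centered at $s_l$.

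The main step is to rule out such a common boundary piece when $s_i \neq s_l$ under strict convexity. Since the norms are Lipschitz and $h$ is monotone, by Rademacher's theorem the identity above can be differentiated at almost every $y \in U$. At such a generic $y$, differentiating in an arbitrary direction $v$ gives
\[
h'(d_i(s_i,y))\,\bigl\langle \nabla_y d_i(s_i,y), v \bigr\rangle \;=\; h'(d_l(s_l,y))\,\bigl\langle \nabla_y d_l(s_l,y), v \bigr\rangle,
\]
so the two gradients are positively proportional. For a strictly convex norm, distinct boundary points of the unit ball have distinct outer normal directions (the exposed-point property), which means this proportionality uniquely determines the direction of $y - s_l$ in terms of the direction of $y - s_i$ via a continuous map $\Phi$. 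Combined with the functional constraint $d_l(s_l,y) = H(d_i(s_i,y))$ for the strictly monotone $H := h^{-1}\!\bigl(h(\cdot)+\mu_i-\mu_l\bigr)$, every such $y$ written as $y = s_i + r\,d$ with $d$ a norm-$i$-unit vector satisfies
\[
r\,d \;-\; H(r)\,\Phi(d) \;=\; s_l - s_i.
\]
For each fixed direction $d$ this is a system of $n$ scalar equations in the single unknown $r$ and hence generically admits only a discrete set of solutions; letting $d$ vary over the $(n-1)$-dimensional unit sphere produces a total solution set of dimension at most $n-1$. This contradicts the fact that $U$ is an open, $n$-dimensional subset of $\Rbb^n$; the case $n = 1$ is immediate since then $d_i(s_i,\cdot)$ and $d_l(s_l,\cdot)$ are strictly monotone away from $s_i$, $s_l$, pinning $y$ to isolated values.

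The main obstacle is that strict convexity of a norm does not imply smoothness; one must execute the gradient argument on the Rademacher-dense set of differentiable points and then transfer the conclusion back to the open set $U$ by a careful density/dimension count, rather than via a naive everywhere-smooth computation.
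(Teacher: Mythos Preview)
Your opening two paragraphs are fine and agree with the paper: assuming an open ball $U\subset P_i\cap P_l$ forces $h(d_i(s_i,y))+\mu_i=h(d_l(s_l,y))+\mu_l$ on $U$, and injectivity of $h$ then makes the $d_i$-sphere through any $y_0\in U$ coincide locally with a $d_l$-sphere. The trouble is everything after that.

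\textbf{First gap: you differentiate $h$.} The hypothesis is only that $h$ is injective; it need not be continuous, let alone Lipschitz or monotone, so Rademacher does not apply to $h\circ d_i(s_i,\cdot)$ and the symbol $h'$ has no meaning. Even if you import the paper's ambient assumption that $h$ is increasing, a monotone function can have $h'=0$ on a set of positive measure, and at such points your proportionality-of-gradients step yields nothing. The map $\Phi$ and the inverse $H=h^{-1}(h(\cdot)+\mu_i-\mu_l)$ likewise presuppose regularity of $h$ you do not have.

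\textbf{Second gap: the dimension count is not a proof.} The assertion that $r\,d-H(r)\,\Phi(d)=s_l-s_i$ has ``generically'' only discretely many solutions in $r$ for fixed $d$ is exactly what needs to be shown; nothing rules out an interval of $r$'s when $d$ and $\Phi(d)$ are suitably aligned. Invoking ``generic'' behaviour here is circular.

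The paper sidesteps both problems by never differentiating. From the level-set coincidence you already derived, it observes that any hyperplane supporting the $d_i$-sphere through $x_0$ at $x_0$ must also support the $d_l$-sphere through $x_0$ at $x_0$ (otherwise one moves along the $d_i$-sphere inside $U$ and changes $d_l$, contradicting the coincidence). It then pushes $x_0$ slightly outward along the ray from $s_i$ to a point $x_1\in U$ and along the ray from $s_l$ to a point $x_2\in U$ with $d_i(s_i,x_1)=d_i(s_i,x_2)$; affine independence of $s_i,s_l,x_0$ gives $x_1\neq x_2$. Repeating the supporting-hyperplane argument at $x_2$ shows that the $d_i$-sphere through $x_1$ is supported at the two distinct points $x_1,x_2$ by parallel hyperplanes, contradicting strict convexity of the $d_i$-unit ball. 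This uses only convex geometry and the injectivity of $h$, which is why the lemma can be stated with such weak hypotheses.
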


\begin{proof}
Let $\unitball_1, \unitball_2 \subset \Rbb^n$ be the unit balls of the norms
that induce $d_1$ and $d_2$, respectively. Furthermore, denote by
$\dnorm{\cdot}_{\unitball_i}$, $i = 1,2$, the corresponding norms, \ie, 
$d_i(x,0) = \dnorm{ x }_{\unitball_i} = \min\{ \rho \geq 0: x \in \rho
\unitball_i \}$ for every $x \in \Rbb^n$, $i = 1,2$.

Suppose that there exists $x_0 \in \Rbb^n$ and $\delta > 0$
such that $x_0 + \delta \mathbb{B}_2 \subset \voronoi_1 \cap \voronoi_2$
where $\mathbb{B}_2$ is the Euclidean unit ball.
 This means  we have
  \begin{align}
 	h(  \dnorm{ x - s_1 }_{\unitball_1} ) - h(  \dnorm{ x - s_2 }_{\unitball_2} )
 	=  \mu_2 - \mu_1 \label{eq:auxeq}
 \end{align}
 for all $x \in x_0 + \delta \mathbb{B}_2$.
W.l.o.g. let
$s_1, s_2$ and $x_0$ be affinely independent.

Next, let $a \in \Rbb^n \setminus \set{0}$ such that 
\[H^{\leq}_{a,a \tr x_0} := \{ x \in \Rbb^n: a \tr x \leq a \tr x_0 \}\]
is a halfspace that supports $s_1 + \dnorm{ x_0 - s_1 }_{\unitball_1}
\unitball_1 $ in $x_0$.

If $H^{\leq}_{a,a \tr x_0}$ does not support $s_2 + \dnorm{ x_0 - s_2
}_{\unitball_2} \unitball_2 $ in $x_0$, it  follows that there exists $z \in x_0 + \delta
\mathbb{B}_2$ with $ \dnorm{ z - s_1 }_{\unitball_1} = \dnorm{ x_0 - s_1
}_{\unitball_1}$ and $ \dnorm{ z - s_2 }_{\unitball_2} \neq \dnorm{ x_0 - s_2
}_{\unitball_2}$. As $h$ is injective, this implies that \eqref{eq:auxeq} does
not hold for $z$, a contradiction.
Hence,  $H^{\leq}_{a,a \tr x_0}$ must support $s_2 + \dnorm{ x_0 - s_2
}_{\unitball_2} \unitball_2 $ in $x_0$.

Now there exist $\lambda > 1$ and $\nu \in \Rbb$ such that  $x_1 :=  s_1 +
\lambda( x_0 - s_1)$,  $x_2 := s_2 + \nu(x_0 - s_2) \in \operatorname{int}( x_0 + \delta
\mathbb{B}_2 )$ and  $\dnorm{ x_1 - s_1
}_{\unitball_1} = \dnorm{ x_2 - s_1
}_{\unitball_1}$.
Furthermore, due to the affine independence of  $s_1, s_2$ and $x_0$ we
have that $x_1 \neq x_2$.

As $H^{\leq}_{a,a \tr x_0}$ supports $s_1 + \dnorm{ x_0 - s_1
}_{\unitball_1} \unitball_1 $ in $x_0$, it follows that $H^{\leq}_{a,a \tr x_1}$
supports $s_1 + \dnorm{ x_1 - s_1
}_{\unitball_1} \unitball_1 $ in $s_1 + \frac{\dnorm{ x_1 - s_1
}_{\unitball_1}}{\dnorm{ x_0 - s_1
}_{\unitball_1}}(x_0 - s_1) = x_1$. 
Analogously, $H^{\leq}_{a,a \tr x_2}$
supports $s_2 + \dnorm{ x_2 - s_2
}_{\unitball_2} \unitball_2 $ in $x_2$. 
By the same argumentation as before we see that $H^{\leq}_{a,a \tr x_2}$ must
also support $s_1 + \dnorm{ x_2 - s_1 }_{\unitball_1} \unitball_1 =  s_1 + \dnorm{ x_1 - s_1
}_{\unitball_1} \unitball_1$ in $x_2$ (as otherwise we find
a point contradicting \eqref{eq:auxeq}).

Hence, $ s_1 + \dnorm{ x_1 - s_1
}_{\unitball_1} \unitball_1$ is supported in $x_1$ and $x_2$ by the 
halfspaces $H^{\leq}_{a,a \tr x_1}$ and $H^{\leq}_{a,a \tr x_2}$, respectively.
This contradicts the strict convexity of $\unitball_1$.
\end{proof}

In the situation of \cref{thm:lowdimbisectors} we see (as a generalization of
\cite[Lemma 4.1]{BG2012}) that a minimal perturbation of the sites suffices for
\eqref{plp} having a unique optimum.  For the proof we need the notion of
cyclic exchanges. Consider a sequence \linebreak $(i_1, \unit_{j_1}, i_2,
\unit_{j_2}, \ldots, i_r, \unit_{j_r})$ of pairwise distinct cluster indices \linebreak  $i_1,\ldots,i_r
\in \{1,\ldots,k\}$ and pairwise distinct points $x_{j_1}, \ldots, x_{j_r}$.
We define the \emph{cyclic exchange} $\cyclicexchange  :=  \left(   \zeta_{i,j}
\right)_{ \substack{ i = 1,\ldots,k \\ j = 1,\ldots,m }} \in \Rbb^{k \times m}$ 
by
\[\zeta_{i_l,j_l}  := - \frac{1}{\unitweight_{j_l}}, \hspace{3em} \zeta_{i_{l-1},j_l} 
:= \frac{1}{\unitweight_{j_l}}\]
 for $l = 1,\ldots,r$, reading indices modulo $r$, 
and equal $0$ in the remaining components.
 Observe that for any $\clustering_1, \clustering_2 \in \BC$, it follows that
 $\clustering_1 -  \clustering_2$ can be decomposed into
a sum of finitely many scaled cyclic exchanges.

\begin{thm} \label{thm:contuniquesol}
Let $\mX = \Rbb^n$, $\dset := \dfam$ be a $k$-tuple of metrics induced by
strictly convex norms, $\muvector \in \Rbb^k$, and $h: \Rbb \rightarrow \Rbb$
injective.

Then for all $\sitevector  := \Sfam  \in (\Rbb^n)^k$ and
$\epsilon > 0$ there exists an $\tilde \sitevector :=  (\tilde s_1, \ldots,
\tilde s_k) \in (\Rbb^n)^k $ with $ \sum_{i = 1}^k || s_i - \tilde s_i|| <
\epsilon$ such that  for $(\dset,h,\tilde \sitevector)$ the linear program  \eqref{plp} has a unique optimizer.
\end{thm}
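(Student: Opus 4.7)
The plan is to show that the set of site configurations $\tilde\sitevector \in (\Rbb^n)^k$ at which \eqref{plp} has multiple optimizers is contained in a finite union of sets with empty interior; its complement in $(\Rbb^n)^k$ is then open and dense, so choosing $\tilde\sitevector$ in the complement and sufficiently close to $\sitevector$ yields the claim.

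\textbf{Step 1 (reduction to cyclic exchanges).} First I reduce non-uniqueness to a finite combinatorial condition. If \eqref{plp} has more than one optimizer at $\tilde\sitevector$, then the optimal face of the transportation polytope $\BC$ has dimension $\geq 1$ and, by edge-connectivity, contains an edge of $\BC$. Edges of $\BC$ are (up to positive scalar) cyclic exchanges, as follows from the decomposition of vertex differences into scaled cyclic exchanges recalled just before the theorem. Hence some cyclic exchange $\cyclicexchange = (i_1,x_{j_1},\ldots,i_r,x_{j_r})$ has zero cost under the objective of \eqref{plp}; a direct expansion of this cost from the definition of $\cyclicexchange$ gives
\[
\phi_\cyclicexchange(\tilde\sitevector) := \sum_{l=1}^{r} \Bigl[ h\bigl(d_{i_{l-1}}(\tilde s_{i_{l-1}},x_{j_l})\bigr) - h\bigl(d_{i_l}(\tilde s_{i_l},x_{j_l})\bigr)\Bigr]
\]
with indices read modulo $r$. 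Since cyclic exchanges are parametrized by tuples of pairwise distinct cluster and point indices, there are only finitely many $\cyclicexchange$. Setting $Z_\cyclicexchange := \{\tilde\sitevector : \phi_\cyclicexchange(\tilde\sitevector) = 0\}$, uniqueness at $\tilde\sitevector$ is implied by $\tilde\sitevector \notin \bigcup_\cyclicexchange Z_\cyclicexchange$.

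\textbf{Step 2 (main obstacle: each $Z_\cyclicexchange$ has empty interior).} The crucial and most delicate step is to show that each $Z_\cyclicexchange$ has empty interior in $(\Rbb^n)^k$. Assume the contrary, so $\phi_\cyclicexchange \equiv 0$ on an open product $U_1 \times \cdots \times U_k$, and fix any $l$. The site $\tilde s_{i_l}$ appears in $\phi_\cyclicexchange$ only in the $l$-th and $(l+1)$-th summands, so with the remaining sites held fixed the function
\[
\psi_l(s) := h\bigl(d_{i_l}(s,x_{j_{l+1}})\bigr) - h\bigl(d_{i_l}(s,x_{j_l})\bigr)
\]
must be constant on $U_{i_l}$. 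Picking $s_0 \in U_{i_l}$ and setting $a_0 := d_{i_l}(s_0,x_{j_l})$, the sphere $\{s : d_{i_l}(s,x_{j_l}) = a_0\}$ meets $U_{i_l}$ in a relatively open, hence $(n-1)$-dimensional, piece $W$. On $W$ the term $h(d_{i_l}(\cdot,x_{j_l}))$ equals $h(a_0)$, so by injectivity of $h$ the value $d_{i_l}(\cdot,x_{j_{l+1}})$ is constant too, say equal to some $b > 0$; thus $W$ lies on the sphere $\{s : d_{i_l}(s,x_{j_{l+1}}) = b\}$ about $x_{j_{l+1}}$. Since $x_{j_l} \neq x_{j_{l+1}}$, this contradicts the rigidity fact that two spheres of a strictly convex norm with distinct centres cannot share an $(n-1)$-dimensional open boundary piece: after translating by $-x_{j_l}$ and rescaling by $1/a_0$, such a shared piece would become an open subset $\Omega \subset \partial\unitball_{i_l}$ whose image under the non-trivial affine map $u \mapsto (a_0/b)\,u + (x_{j_l}-x_{j_{l+1}})/b$ again lies in $\partial\unitball_{i_l}$, and strict convexity of $\unitball_{i_l}$ forces this map to be a symmetry of $\unitball_{i_l}$, which (by comparing volumes and translates) yields $a_0 = b$ and $x_{j_l} = x_{j_{l+1}}$, the desired contradiction.

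\textbf{Step 3 (conclusion).} Each $Z_\cyclicexchange$ is closed with empty interior, so the finite union $\bigcup_\cyclicexchange Z_\cyclicexchange$ is nowhere dense in $(\Rbb^n)^k$. Hence for any $\epsilon > 0$ there exists $\tilde\sitevector \in (\Rbb^n)^k$ outside this union with $\sum_{i=1}^k \|s_i - \tilde s_i\| < \epsilon$, and by Step 1, \eqref{plp} at $\tilde\sitevector$ has a unique optimizer.
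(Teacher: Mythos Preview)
Your approach mirrors the paper's: reduce non-uniqueness to a zero-cost cyclic exchange, show that each such zero-set has empty interior, and invoke finiteness of cyclic exchanges. The only substantive difference is in Step~2. Your equation $\psi_l(s)=\text{const}$ is precisely the bisector of a two-site generalized Voronoi diagram (both metrics equal to $d_{i_l}$, sites $x_{j_l}$ and $x_{j_{l+1}}$, additive weights $(\text{const},0)$), so the paper simply invokes \cref{thm:lowdimbisectors} to conclude this set has empty interior rather than re-deriving a rigidity fact about strictly convex spheres. Your rigidity claim is correct in outcome, but the step ``strict convexity of $\unitball_{i_l}$ forces this map to be a symmetry'' is asserted without proof; a clean justification goes via injectivity of the Gauss map (two boundary points of a strictly convex body sharing an outward normal must coincide), which yields $a_0=b$ and $x_{j_l}=x_{j_{l+1}}$ directly without passing through ``symmetry''.

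One small gap worth flagging: you assert each $Z_\cyclicexchange$ is closed, but this uses continuity of $h$, which is not among the stated hypotheses (only injectivity is). Without closedness, a finite union of empty-interior sets need not have empty interior. The paper's proof is equally terse on exactly this point, and in all applications $h$ is continuous, so this is more a remark than an objection.
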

\begin{proof}
Suppose that the solution of \eqref{plp}
for $\funcset(\dset,h, S, \muvector)$ is not unique. 

Then there exists $\clustering \in \BC$, a cyclic exchange $\cyclicexchange$
and $\alpha > 0 $ such that $\clustering \pm \alpha \cyclicexchange \in F$,
where $F$ denotes the optimal face of \eqref{plp}.
W.l.o.g. let $\cyclicexchange$ correspond to the sequence
$(1,\unit_1,2,\ldots,r,\unit_r)$.
Since the values of  the objective function of \eqref{plp} are the same it
follows that 
\[\sum_{l = 1}^{r-1}  h(d_{l}(s_{l}, x_{l}))\!-\!
	h(d_{l}(s_{{l}}, x_{{l+1}}))  +  h(d_{r}(s_{r}, x_{r}))\!-\!
	h(d_{r}(s_{{r}}, x_{{1}}))= 0.\]
 Set \[c := \sum_{l = 2}^{r-1}  h(d_{l}(s_{l}, x_{l}))\!-\!
 h(d_{l}(s_{{l}},x_{l+1})) +  h(d_{r}(s_{r}, x_{r}))\!-\!
h(d_{r}(s_{{r}}, x_{{1}})).\] In particular, $c$ does not depend on $s_1$. It
 follows that the set of sites $\tilde s_1$ such that $\cyclicexchange$ is orthogonal to the
 objective function vector of \eqref{plp} is described by the equation
 \[h(d_{1}(x_{1}, \tilde s_{1})) - h(d_{1}(x_{2},
 \tilde s_{1})) = -c.\] 
 With $x_1$ and $x_2$ interpreted as sites, this is the intersection of their corresponding cells of the generalized Voronoi diagram \wrt  $\funcset \big( (d_1,d_1),h, (x_1,x_2), (c,0)\big)$. By \cref{thm:lowdimbisectors}, this set has an empty interior.
Together with the fact that there can only be a finite
 number of cyclic exchanges, the claim follows.
\end{proof}

Finally, similarly to \cite{Carlsson2015}, we may derive the following
continuous version of \cref{cor:optimalclustering} by considering a sequence of
refining discretizations.

\begin{corollary}
Let $\mX = \Rbb^n$, and $\Omega$ a finite continuous measure with $\Omega(\mX) =
\sum_{i = 1}^{k} \kappa_i$, and measurable functions $\hat f_i : \mX
\rightarrow \Rbb$ for $i = 1, \ldots, k$ be given. Further, assume that for $1\leq
i < l \leq k$ and every $c \in \Rbb$ it holds that $\Omega( \{ x \in \Rbb^n: \hat f_i(x)\!-\!\hat f_l(x)\!=\!c\}) = 0$.

Then any partition of $\mX$ into measurable sets $C_i$  with $\Omega(C_i) = \kappa_i$ is optimal \wrt $\sum_{i = 1}^k
\int_{C_i} \hat f_i(x) \Omega(dx)$ if and only if there exists $\mu_1, \ldots,
\mu_k \in \Rbb$ such that with $P_i := \set{x \in \mX: \hat f_i(x) + \mu_i \leq
\hat f_l(x) + \mu_l ~ \forall l}$ it follows that $P_i = \cluster_i$ up to  sets of $\Omega$-measure $0$ for every $i$.
 \end{corollary}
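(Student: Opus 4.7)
I would prove both implications separately: the sufficient direction is a clean duality estimate, while the necessary direction follows by passing to the limit in a sequence of discretizations, in the spirit of the reference to \cite{Carlsson2015}.

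For sufficiency ($\Leftarrow$), suppose $\mu_1,\ldots,\mu_k \in \Rbb$ are given with $P_i = C_i$ up to $\Omega$-null sets. The non-degeneracy hypothesis ensures that $\{P_i\}$ is a partition of $\mX$ up to $\Omega$-null sets. For any competing measurable partition $\{C'_i\}$ of $\mX$ with $\Omega(C'_i) = \kappa_i$,
\[\sum_{i=1}^k \int_{C'_i} (\hat f_i + \mu_i)\, d\Omega \;\geq\; \int_{\mX} \min_{1 \leq l \leq k} (\hat f_l + \mu_l)\, d\Omega \;=\; \sum_{i=1}^k \int_{P_i} (\hat f_i + \mu_i)\, d\Omega,\]
since the integrand over $C'_i$ dominates $\min_l(\hat f_l + \mu_l)$ pointwise, with equality on $P_i$. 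Subtracting $\sum_i \mu_i \kappa_i$ from both sides (using $\Omega(C'_i) = \Omega(P_i) = \kappa_i$) yields $\sum_i \int_{C'_i} \hat f_i \, d\Omega \geq \sum_i \int_{C_i} \hat f_i \, d\Omega$, proving optimality of $\{C_i\}$.

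For necessity ($\Rightarrow$), fix a nested sequence of measurable finite partitions $\{A^{(n)}_j\}_{j = 1}^{m_n}$ of $\mX$ (first restricting to growing compacta via tightness of $\Omega$) with cell diameters tending to zero. Build a discrete instance at level $n$ by selecting $x^{(n)}_j \in A^{(n)}_j$ with weight $\Omega(A^{(n)}_j)$ and costs $\hat f_i(x^{(n)}_j)$, then invoke \cref{cor:optimalclustering} to obtain $\mu^{(n)} \in \Rbb^k$ whose discrete generalized Voronoi diagram is feasible for an optimal discrete clustering $\clustering^{(n)}$. Normalize by $\mu^{(n)}_1 = 0$, pass to a convergent subsequence $\mu^{(n)} \to \mu^*$, and set $P^*_i := \{x : \hat f_i(x) + \mu^*_i \leq \hat f_l(x) + \mu^*_l ~\forall l\}$. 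Using the non-degeneracy hypothesis one shows $\Omega(P^{(n)}_i \triangle P^*_i) \to 0$, whence $\Omega(P^*_i) = \kappa_i$. The sufficient direction applied to $\{P^*_i\}$ shows it is optimal, so it achieves the same cost as $\{C_i\}$. Complementary slackness then forces $\hat f_i(x) + \mu^*_i = \min_l(\hat f_l(x) + \mu^*_l)$ for $\Omega$-a.e.\ $x \in C_i$, and combined with the non-degeneracy hypothesis (which makes the locus where this minimum is non-uniquely attained an $\Omega$-null set), this yields $C_i = P^*_i$ up to $\Omega$-null sets.

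\textbf{Main obstacle.} The delicate step is the boundedness of the normalized dual vectors $\mu^{(n)}$. Without further growth or integrability assumptions on the $\hat f_i$, this must be extracted from $\kappa_i > 0$ together with the non-degeneracy of bisectors: if $\mu^{(n)}_i \to \pm\infty$ along a subsequence, the corresponding cell would eventually carry $\Omega$-mass bounded away from $\kappa_i$, contradicting the capacity constraint supplied by \cref{cor:optimalclustering} at stage $n$. Making this precise when $\mX = \Rbb^n$ is unbounded, and carefully tying the discrete clusterings $\clustering^{(n)}$ to the continuous limit $\{P^*_i\}$ so that capacities transfer correctly, constitutes the bulk of the technical work; the remainder of the argument follows the standard discretization template.
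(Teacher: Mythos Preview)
The paper does not actually supply a proof of this corollary; it only remarks that the result follows ``similarly to \cite{Carlsson2015}\ldots by considering a sequence of refining discretizations,'' and leaves the details to that reference. Your proposal is exactly this discretization strategy, so you are on the route the paper intends.

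Two technical points in your sketch deserve care, though neither is fatal. First, for merely measurable $\hat f_i$ the discrete costs $\hat f_i(x^{(n)}_j)$ obtained by point evaluation need not approximate the cell integrals $\int_{A^{(n)}_j} \hat f_i \, d\Omega$; you should use cell averages (or assume enough regularity that Lebesgue differentiation applies). Second, the sentence ``$\Omega(P^{(n)}_i \triangle P^*_i) \to 0$'' is not quite well-typed: $P^{(n)}_i$ as you have set things up is a subset of the finite sample, so the symmetric difference with a continuous region has no useful $\Omega$-mass. What you want to show converges to $P^*_i$ in measure is the union $\bigcup \{A^{(n)}_j : x^{(n)}_j \in P^{(n)}_i\}$ of partition cells whose representative falls in the discrete Voronoi region, and it is here that the non-degeneracy hypothesis on the bisectors does its real work. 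Your identification of the boundedness of the normalized duals $\mu^{(n)}$ as the crux is accurate, and your suggested mechanism (a cell of mass bounded away from $\kappa_i$ if some $\mu^{(n)}_i$ diverges) is the right one.
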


\section{Classes of Generalized Voronoi Diagrams}
\label{sec:classes}

Our general approach can be summarized as follows: We first choose $\dset$
and $h$ depending on the application. 
How this choice is made will depend on which properties of the cells are desired; see \cref{sec:diagrams:euclidean,sec:anisotropicpowerdiagrams,sec:shortestpathdiagrams} for examples; see also \cref{tab:conclusion:thumbrules}.
 Then we make an appropriate, possibly
optimal choice of sites $\sitevector$.   In Euclidean space, for instance,
we can optimize over $\sitevector$ in order to  approximate maximally
consolidated clusterings by evoking results of \cite{BG2010}. Over a discrete
space, we may  heuristically search for sites that minimize the resulting
deviation of cluster weights due to rounding a fractional clustering.  In the
case of anisotropic diagrams, we will use the centers of the current districts
as sites in order to obtain similar new districts.

For any choice of $\sitevector$, we can get a solution $\clustering$
and the feasibility parameter $\muvector$ from  \eqref{plp} and \eqref{dlp}, respectively.
After successive
rounding, we then obtain a clustering together with the feasible
generalized Voronoi diagram \wrt $\funcset(\dset,h, \sitevector, \muvector)$.

\placeFigureC

We will now shortly discuss appropriate choices for $\dset$ and $h$ and
illustrate them by a simple example. In particular, we show how these choices
relate to clusterings with certain characteristics.
 
 \paragraph{An example} \cref{fig:exmpl:instance} shows an instance of
 constrained geometric clustering. Here, the space $\mX$ (gray filled area) is a
 subset of $\Rbb^2$ that is not simply connected.
 The set $X$ consists of $500$
points (blue dots), each of weight $1$. We want to find a clustering of $k=5$
clusters, each of weight $100$. Also, a 
``distance  graph`` 
$\conngraph$  (black lines) is given whose edges are weighted by the Euclidean
distances of their nodes.
For this example, $\conngraph$ was
constructed via a Delaunay triangulation of $\unitset$ and dropping edges
outside $\mX$.  This graph encodes an intrinsic connectivity
structure for $\unitset$. Finally, the black-framed white squares depict
the sites $\sitevector$, which we assume to be pre-determined in this example.
Figures
\ref{fig:exmpl:awvd} to \ref{fig:exmpl:graphclustering} show the clusterings
and supporting diagrams obtained for the  different choices of $\dset$ and $h$
via the methodology described above. 

\subsection{Euclidean Space} \label{sec:diagrams:euclidean}
First, we consider the case that all metrics in $\dset$ are  Euclidean.

\paragraph{Additively Weighted Voronoi Diagrams} An obvious choice is $h
= \operatorname{id}$, \ie, $\dfunc_i(x) := || x - s_i || + \mu_i$ is the Euclidean
distance to the respective cluster's site shifted by $\mu_i$.
Solving \eqref{plp} for a general instance then means to search for a
fractional clustering minimizing the weighted average Euclidean distance to the assigned sites. 
All clusterings in
the relative interior of the optimal face of \eqref{plp} are  supported by
\emph{additively weighted Voronoi diagrams}. For results on the latter see 
\cite{AB1986}, \cite[Chapter 3.1.2]{OBSC2009}, \cite[Chapter 7.4]{AKLK2013}.
Here, Voronoi regions are separated by straight lines or hyperbolic curves and
are in particular star-shaped \wrt their respective sites; see \cref{fig:exmpl:awvd}. If $\mX$ is convex,
this yields connected regions. 
Of course, as the above example shows, this
does not hold for general $\mX$.

 \placeFigureD

\paragraph{Power Diagrams} Taking the squared Euclidean
distances, \ie, $\dfunc_i(x) := || x - s_i ||^2 + \mu_i$,
 yields  \emph{power diagrams} (see \cite{AM1987}, \cite[Chapter
 3.1.4]{OBSC2009}). \cref{fig:exmpl:powerdiagram} shows this case for our
 example. Here, regions are separated by straight lines perpendicular to the
 line spanned by the respective sites. In particular, this yields convex and
 therefore connected regions whenever $\mX$ is a convex subset of $\Rbb^n$.
 Again, connectivity might get lost when $\mX$ is non-convex as is the case in this example. Solving \eqref{plp} may be
interpreted as minimizing the weighted squared error when the clusters are
represented by their respective sites. 

As already pointed out, power diagrams have been thoroughly investigated for
the example of farmland consolidation (\cite{BG2004}, \cite{BBG2014}) and a comprehensive
theory on their algorithmic treatment has been derived (\cite{BG2012}, \cite{BG2010}, \cite{BBG2013}).

Let us close this subsection by briefly recapitulating a result from
\cite{BG2012} that deals with optimal choices of the sites in the strongly balanced case. Recall that a
feasible power diagram is called \emph{centroidal} if 
 \[s_i = \ccenter(\cluster_i) := \frac{1}{\cweight_i} \sum_{j = 1}^m \xi_{i,j}
 \omega_j \unit_j\] for $i = 1,\ldots,k$. The following
  result characterizes centroidal power diagrams as local  maximizers of
  the function $\phi:
  \BC \rightarrow \Rbb$ defined by \[\phi(\clustering) := \sum_{i = 1}^k
  \cweight_i ||\ccenter(\cluster_i)||^2.\]
  Here, some trivial cases of clusters have to be excluded: A clustering $\clustering \in \BC$ is called \emph{proper}, if
 for all $i \neq l$ it holds that
 \begin{equation*}
 	|\suppCi | =  |\suppCl  | = 1  \Rightarrow \suppCi \neq \suppCl.
 \end{equation*}
 
 \begin{thm}[{\cite[Theorem 2.4]{BG2012}}]
  Let $\clustering \in \BC$ be proper.
  Then there exists a centroidal power diagram that supports
  $\clustering$ if and only if $\clustering$ is the unique optimum of
  \eqref{plp} and a local maximizer of  $\phi$.
 \end{thm}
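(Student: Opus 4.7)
My approach would center on the \emph{parallel axis identity}
\begin{equation*}
\sum_{i,j}\xi'_{i,j}\unitweight_j\|\unit_j-s_i\|^2
\;=\;
\sum_{i,j}\xi'_{i,j}\unitweight_j\|\unit_j-\ccenter(\cluster'_i)\|^2
\;+\;
\sum_{i=1}^{k}\cweight_i\|\ccenter(\cluster'_i)-s_i\|^2,
\end{equation*}
valid for every $\clustering'\in\BC$ and all sites $s_1,\dots,s_k\in\Rbb^n$. Specialising $s_i:=\ccenter(\cluster_i)$ and using that the inner cluster-wise sums are minimised by $\ccenter(\cluster'_i)$, so that $\sum_{i,j}\xi'_{i,j}\unitweight_j\|\unit_j-\ccenter(\cluster'_i)\|^2=\sum_j\unitweight_j\|\unit_j\|^2-\phi(\clustering')$, the objective of \eqref{plp} for the power-diagram choice $\hat\dfunc_i(\unit):=\|\unit-\ccenter(\cluster_i)\|^2$ evaluated at $\clustering'$ reads
\begin{equation*}
F(\clustering')\;=\;\sum_j\unitweight_j\|\unit_j\|^2\;-\;\phi(\clustering')\;+\;\sum_{i=1}^{k}\cweight_i\|\ccenter(\cluster'_i)-\ccenter(\cluster_i)\|^2.
\end{equation*}
This identity is the bridge between \eqref{plp} (which depends on the fixed sites only) and $\phi$ (which depends only on the clustering).

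The direction $(\Leftarrow)$ is then immediate: if $\clustering$ is the unique optimum of \eqref{plp} for sites $s_i=\ccenter(\cluster_i)$, the optimal face equals $\{\clustering\}$ and contains $\clustering$ trivially in its relative interior, so \cref{thm:supportingclustering} applied to $\hat\dfunc_i(\unit):=\|\unit-\ccenter(\cluster_i)\|^2$ yields $\muvector\in\Rbb^k$ such that the diagram w.r.t.\ $\dfunc_i:=\hat\dfunc_i+\mu_i$ supports $\clustering$; this is, by construction, a centroidal power diagram.

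For $(\Rightarrow)$, \cref{thm:supportingclustering} already gives that $\clustering$ lies in the relative interior of the optimal face of \eqref{plp} for sites $\ccenter(\cluster_i)$. I would parametrise feasible perturbations by cyclic exchanges $D$; along $\clustering+tD$ the centroids shift linearly, $\ccenter(\cluster'_i)=\ccenter(\cluster_i)+tv_i$ with $v_i:=\frac{1}{\cweight_i}\sum_j D_{i,j}\unitweight_j\unit_j$. Substituting this into the identity above collapses the LP-inequality $F(\clustering+tD)\ge F(\clustering)$ to the first-order condition $t\sum_i\cweight_i\ccenter(\cluster_i)\cdot v_i\le 0$; for $D$ pointing into the relative interior of the optimal face both signs of $t$ are admissible, forcing equality. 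Uniqueness of the LP-optimum must then be established by showing that such a centroid-preserving direction $D$ can only be the zero exchange---this is precisely where the \emph{properness} assumption enters, ruling out the degenerate situation of non-trivial swaps between two singleton clusters at the same point. Once uniqueness is secured, the same first-order argument along any $D$ leaving the singleton optimal face gives a strict inequality, hence $\sum_i\cweight_i\ccenter(\cluster_i)\cdot v_i<0$, and the expansion
\begin{equation*}
\phi(\clustering+tD)=\phi(\clustering)+2t\sum_i\cweight_i\ccenter(\cluster_i)\cdot v_i+t^2\sum_i\cweight_i\|v_i\|^2
\end{equation*}
shows $\phi$ strictly decreases for small $t>0$, proving local maximality. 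The main obstacle is the uniqueness step---translating the rigidity of the centroidal diagram together with properness into the non-existence of non-trivial centroid-fixing cyclic exchanges inside the optimal face of \eqref{plp}; everything else then falls out from the parallel axis identity and \cref{thm:supportingclustering}.
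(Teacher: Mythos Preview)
The paper does not prove this theorem; it is quoted from \cite{BG2012} without argument, so there is no in-paper proof to compare your proposal against. On its own merits:

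Your parallel-axis identity is correct and yields the exact linearisation
\[
F(\clustering')-F(\clustering)\;=\;-2\sum_{i=1}^{k}\cweight_i\,\ccenter(\cluster_i)\cdot\bigl(\ccenter(\cluster'_i)-\ccenter(\cluster_i)\bigr),
\]
from which the $(\Leftarrow)$ direction and the local-maximality half of $(\Rightarrow)$ (conditional on uniqueness) follow exactly as you outline.

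The gap you flag in the uniqueness step is genuine, but your wording misidentifies what you have actually shown. For a direction $D$ tangent to the optimal face your computation delivers only the single scalar relation $\sum_i\cweight_i\,\ccenter(\cluster_i)\cdot v_i=0$; it does \emph{not} give $v_i=0$ for every $i$, so labelling $D$ ``centroid-preserving'' is unwarranted at that stage. Passing from this orthogonality condition to $D=0$ cannot be accomplished through the centroid map and properness alone---indeed, a genuinely centroid-preserving nonzero $D$ already forces $F(\clustering+tD)=F(\clustering)$ and hence non-uniqueness, so the two notions are not interchangeable. What is missing is the structural information coming from the \emph{support} hypothesis itself: since the diagram supports $\clustering$, any nonzero $D_{i,j}$ forces $\unit_j\in\supp(\cluster_i)\cap\supp(\cluster_l)\subset P_i\cap P_l$ for some $l\ne i$, so the whole cyclic exchange lives on bisectors of a power diagram whose sites are the $\ccenter(\cluster_i)$. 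The combinatorial argument that such a cycle must be trivial---and the exact way properness eliminates the residual degenerate swaps between clusters with coinciding centroids---is where the real work of \cite{BG2012} lies, and it is invisible to your first-order identity. You will need to invoke the supporting property and the geometry of the power cells explicitly, not merely through \cref{thm:supportingclustering}.
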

 
Furthermore, finding the global maximum of $\phi$ over $\BC$ is equivalent to
optimizing 
 \begin{align}
 	\min_{\clustering \in \BC} \sum_{i = 1}^k  \sum_{j = 1}^m \xi_{i,j}
 	\unitweight_j || \unit_j - \ccenter(\cluster_i) ||^2.
 	\label{eq:momentofinertia}
 \end{align}
  This may be read as
 minimizing an average variance of clusters, also called the \emph{moment of
 inertia}.
 
 Note that $\phi$ actually depends only on the centers of gravity rather than on the clusters themselves. By
 definition those centers are given by a linear transformation of $\BC$ into the set of all gravity centers in
 $\Rbb^{k d}$. Optimizing $\phi$ over $\BC$ is then equivalent to
 maximizing an ellipsoidal norm in $\Rbb^{k d}$ over the set of gravity
 centers. One can now approximate this norm in the comparably low dimensional
 space $\Rbb^{kd}$ by a polyhedral norm.
 This yields an approximation algorithm by solving a typically manageable number
 of linear programs of type \eqref{plp}; see \cite{BG2010}, \cite{BG2012}. 
 
 Another possibility to derive local
 optima of $\phi$ is by means of a balanced variant of the k-means algorithm
 (see \cite{BBG2013}).
 
\subsection{Anisotropic Diagrams with Ellipsoidal Norms}
\label{sec:anisotropicpowerdiagrams}
Using the Euclidean norm obviously cannot pay regard to the shape of $\mX$ nor
any other information about the desired extents or orientations of the
resulting clusters. One possibility of including such information is to use
 anisotropic diagrams.

While we could, in principle, employ arbitrary norms we will consider $\dset$
to be induced by ellipsoidal norms in the following.
So, let $M_i \in \Rbb^{n
\times n}$  be symmetric positive definite matrices defining the ellipsoidal
norms via \[\anormi{x} := \sqrt{ x\tr M_i x }, \]  
$i = 1,\ldots,k$.  In our main application, the matrices are chosen so as to obtain clusters
similar to a pre-existing structure (cf. \cref{sec:res:anisotropicpowerdiagrams}).

As in the Euclidean case, we consider the transformation functions $h =
\operatorname{id}$ and $h = (\cdot)^2$.
For  $h = \operatorname{id}$ we obtain \emph{anisotropic Voronoi diagrams}.
 These have already been applied in the field of mesh generation
 (\cite{LS2003}, \cite{CG2011}) on a Riemannian manifold in $\Rbb^n$ provided
 with a metric tensor $M: \mX \rightarrow M^{n \times n}$. Hence, the functions
 $\dfunc_i$ can be seen 
as local approximations of the geodesic distances \wrt that manifold.  Even
without additive weights, the resulting diagrams need not be connected.

We will refer to case $h = (\cdot)^2$ as \emph{anisotropic power diagrams}. In
\cite{ABG+2015} these  were
 successfully used for the  reconstruction of polycrystalline structures, where
 information about volumes as well as moments of crystals is given. Regions are
 separated by straight lines or quadratic curves.
 Figures \ref{fig:exmpl:anisotropvd} and \ref{fig:exmpl:anisotroppd} show the
 case of additively weighted anisotropic Voronoi diagrams and anisotropic power
 diagrams, respectively. The dotted ellipses depict the unit balls of the
 respective norms.
 
 \placeFigureE

\subsection{Shortest-Path Diagrams} \label{sec:shortestpathdiagrams}

Even in the anisotropic case, the diagrams considered so far might fail in
depicting intrinsic relations of the points in $\mX$. In our application of
electoral district design, this occurs as points are only representatives of
their  municipalities' regions. Thus, information about neighborhood
relations is  lost (\cf \cref{sec:results}). In such
cases, generalized Voronoi diagrams on an edge-weighted graph $\conngraph = (\unitset, \arcset, \arcdistfunc)$ can be preferable.

\cref{fig:exmpl:graphclustering} shows the result if $\mX$ is the discrete space
of all elements in $\unitset$  together with the metric induced by  $\conngraph$ (see \cref{sec:introduction}).
Taking $ \dfunc_i(x) := \dgraph(s_i,x) + \mu_i$, this means that the weighted average
distance of  elements to their assigned sites in the graph is minimized. We will refer to
this case as \emph{shortest-path diagrams}. Of course, if $\conngraph$ is
complete and edge weights are the
Euclidean distances between the vertices, this again results in additively
weighted Voronoi diagrams.
  
  \placeFigureF

In general, there are two main motivations to use a discrete space $(X,
\dgraph)$.
The obvious first reason are applications that live on some sort of graph. 
For instance, \cite{OSF+2008}
 proposes to use Voronoi diagrams on discrete spaces for the
 representation of service areas in cities
 as the Euclidean norm is often unsuitable for approximating distances in urban
 traffic networks.
Second, there are applications that require clusters to be connected in some
sense.
 Often, of course, this connectivity is already established when
 clusters are supported by a diagram in $\mX$ with connected regions. However,
 as has been observed before, this is not always the case with the diagrams
 proposed so far, in particular, since the underlying space might not be convex. 

We say that a fractional clustering $\clustering \in \BC$ is
\emph{$\sitevector$-star-shaped} for sites $\sitevector = \Sfam \in \unitset^k$, if for every $i \in \{1,\ldots,k\}$ and
$\unit \in \suppCi$ it follows that $v \in \suppCi$ for every $v$
on a shortest $s_i$-$\unit$-path in $\conngraph$.
By the following result, clusters that are supported by shortest-path diagrams
are $\sitevector$-star-shaped. More precisely, the Voronoi regions are shortest-path
trees rooted at their  sites. 

\begin{thm} \label{thm:starshapedgraph}
Let  $(X,\dgraph)$ be the metric space induced by a 
connected and edge-weighted graph $\conngraph = (\unitset, \arcset,
\arcdistfunc)$.
Let $\clustering \in \BC$, $\sitevector \in \unitset^k$, $\muvector \in
\Rbb^k$, and define  $\dset := \dfam$ via $d_i = \dgraph$ for $i= 1,\ldots,k$. 
 
If the generalized Voronoi diagram $\voronoidg$ \wrt
$\funcset(\dset,\id,\sitevector,\muvector)$ supports $\clustering$, then
$\clustering$ is $\sitevector$-star-shaped. In particular, $s_i \in \suppCi$
holds for all  $i \in \{1,\ldots,k\}$.
\end{thm}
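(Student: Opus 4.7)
The plan is to unfold the support hypothesis into an inequality per site and combine it with the shortest-path decomposition via the triangle inequality. Write $f_i(x) = \dgraph(s_i,x) + \mu_i$. Since $\voronoidg$ supports $\clustering$, we have $\suppCi = \voronoi_i \cap \unitset$, that is, for every $x \in \suppCi$ and every $l \in \{1,\ldots,k\}$,
\begin{equation*}
\dgraph(s_i,x) + \mu_i \;\le\; \dgraph(s_l,x) + \mu_l.
\end{equation*}

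First I would establish the star-shapedness. Fix $i$, pick $x \in \suppCi$, and let $v$ be any vertex on a shortest $s_i$-$x$-path in $\conngraph$. By the shortest-path decomposition,
\begin{equation*}
\dgraph(s_i,x) \;=\; \dgraph(s_i,v) + \dgraph(v,x),
\end{equation*}
while the triangle inequality applied in $(X,\dgraph)$ yields $\dgraph(s_l,x) \le \dgraph(s_l,v) + \dgraph(v,x)$ for every $l$. Substituting both into the support inequality above and cancelling the common summand $\dgraph(v,x)$ gives
\begin{equation*}
\dgraph(s_i,v) + \mu_i \;\le\; \dgraph(s_l,v) + \mu_l \qquad (l=1,\ldots,k),
\end{equation*}
so $v \in \voronoi_i$. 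Since $v$ is a vertex of $\conngraph$, we have $v \in \unitset$, hence $v \in \voronoi_i \cap \unitset = \suppCi$. This is exactly the $\sitevector$-star-shapedness.

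For the addendum $s_i \in \suppCi$, note that $\clustering \in \BC$ and $\cweight_i > 0$ force $\clusterweighti = \cweight_i > 0$, so $\suppCi$ is non-empty. Choose any $x \in \suppCi$; then $s_i$ lies on the shortest $s_i$-$x$-path (as its initial vertex), so the star-shaped property just proved gives $s_i \in \suppCi$.

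The only delicate point is making sure that the vertex $v$ obtained on the shortest path is legitimately in $\unitset$ so that the supporting identity $\suppCi = \voronoi_i \cap \unitset$ can be invoked; this is immediate because here $\mX = \unitset$ is exactly the vertex set of $\conngraph$, so any vertex on a path automatically belongs to $\unitset$. Aside from this bookkeeping, the proof is a direct triangle-inequality manipulation, and no non-triviality about rounding, fractional assignments, or uniqueness of optimizers is required.
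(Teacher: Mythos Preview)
Your argument is correct, and it is genuinely more direct than the paper's. The paper does not argue with the two inclusions of the support condition as you do; instead it invokes \cref{thm:supportingclustering} to place $\clustering$ in the relative interior of the optimal face of \eqref{plp}, assumes $v=\unit_p\notin\suppCi$, picks some $r$ with $\unit_p\in\supp(\cluster_r)$, and applies the cyclic exchange $(i,\unit_j,r,\unit_p)$. Because $\clustering$ is relatively interior, the exchange strictly increases the objective, which yields the \emph{strict} inequality $0<\dgraph(s_i,\unit_p)-\dgraph(s_i,\unit_j)+\dgraph(s_r,\unit_j)-\dgraph(s_r,\unit_p)$; this is then contradicted by exactly the same triangle-inequality/shortest-path identity you use. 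So the geometric core is identical, but the paper routes it through LP duality and cyclic exchanges, whereas you use the defining equality $\suppCi=\voronoi_i\cap\unitset$ twice (once in each direction) and never touch optimality. Your route is shorter and more elementary; the paper's route has the expository advantage of making visible why mere feasibility is not enough and why the relative-interior (equivalently, supporting) hypothesis is exactly what is needed, which sets up the discussion of the counterexample with $\clustering^{(b)}$ immediately afterwards.
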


\begin{proof}
Let $F$ be the optimal face of \eqref{plp}, then $\clustering \in \relint(F)$ holds
by \cref{thm:supportingclustering}.
For some $i \in \{1,\ldots,k\}$, let $\unit_j \in \suppCi$ and let  
 $s_i = v_1, v_2,\ldots,v_t := \unit_j$ be a shortest $s_i$-$\unit_j$-path in
 $\conngraph$.

Suppose that there exists $x_p$  such that  $v_l = \unit_p \not \in \suppCi$ 
 for some $l \in \{1,\ldots,t-1\}$. By the feasibility of $\clustering$
there exists $r \in \{1,\ldots,k\}$ such that $\unit_p \in \supp(\cluster_r)$.

Due to the choice of $\unit_p$ and $\unit_j$ it follows that $\xi_{i,p} = 0 <
1$, $\xi_{i,j} > 0$, $\xi_{r,j} \leq 1- \xi_{i,j} < 1$ and $\xi_{r,p} > 0$.

Let $\cyclicexchange \in \Rbb^{k \times m}$ be the cyclic exchange for the
sequence $(i,\unit_j,r,\unit_p)$ (as defined in
\cref{sec:method:clusteringdiagramcorrespondence}).
Then there exists some $\alpha > 0$ such that $\clustering
+ \alpha \mathcal{Z} \in \BC$. If $\alpha$ is sufficiently small $\clustering
+ \alpha \mathcal{Z}$ has (at least) one non-zero component more than
$\clustering$. Since $\clustering \in \relint(F)$, it follows that $\clustering
+ \alpha \mathcal{Z} \not \in F$. Thus, 
$0 <  
\dgraph(s_i,\unit_{p}) - 
\dgraph(s_i,\unit_{j})  +  \dgraph(s_r,\unit_{j}) - \dgraph(s_r,\unit_{p})$.
 
 Now, by the triangle inequality, $ \dgraph(s_r,
 \unit_j) \leq \dgraph(s_r, \unit_p) + \dgraph(\unit_p,\unit_j)$.  As $\unit_p$ lies on a
 shortest $s_i$-$\unit_j$-path, it furthermore holds that $\dgraph(s_i,\unit_j) = \dgraph(s_i,\unit_p) +  \dgraph(\unit_p, \unit_j)$. 
 
 Together, this yields
$\dgraph(s_r,\unit_j) - \dgraph(s_i,\unit_j) \leq  \dgraph(s_r, \unit_p) - \dgraph(s_i,\unit_p)$,
 a contradiction.
 
  As $\suppCi \neq \emptyset$ for
  $i = 1,\ldots,k$, this in particular implies $s_i \in \suppCi$.
\end{proof}

\placeFigureG

 The requirement that $\clustering \in \BC$ lies in the relative interior of
 the optimal face \eqref{plp} is crucial for shortest-path distances to
 preserve star-shapedness. In  \cite{Zoltners1983a}, a Lagrange relaxation
model of an integer version of \eqref{plp} for shortest-path distances was
considered and connectivity of resulting clusters was demanded,
while it was pointed out in \cite{Schroder2001}, that this may not hold
whenever non-unique optima appear. \cref{thm:starshapedgraph} clarifies 
the situation: in view of \cite{Schroder2001} it is precisely the requirement that
the solution lies in the relative interior of the optimal set.

Let us now consider the  example of  \cref{fig:proof:connectedonlyaffine} with
$a:=1, b:= 1$, $c := 2$ and the resulting constrained clustering instance
for $\uweightvector := \ones$, $\cweightvector := (2,2)\tr$.
We obtain the two clusterings $\clustering^{(a)}$,  $\clustering^{(b)} \in \BC$
 via   
 \[\cluster^{(a)}_1 := (1,\frac{1}{2},\frac{1}{2},0), \hspace{2em} \cluster^{(a)}_2 :=
 (0,\frac{1}{2},\frac{1}{2},1)\]
  and 
  \[\cluster^{(b)}_1 := (1,0,1,0), \hspace{2em}
  \cluster^{(b)}_2 := (0,1,0,1).\]
   The generalized Voronoi diagram
$\voronoidg^\ast$ \wrt $f_1(x) = \dgraph(s_1,x) +1$ and  $f_2(x) = \dgraph(s_2,x) $ (\ie, $h = \id$ and $\muvector = (1,0)\tr$), consists of the cells
$\voronoi^\ast_1 = \{ x_1, x_2, x_3 \}$ and $\voronoi^\ast_2 = \{ x_2, x_3, x_4
\}$. Thus, it is feasible for both
$\clustering^{(a)}$ and $\clustering^{(b)}$. Hence, they are both minimizers of
\eqref{plp} by \cref{thm:complementaryslackness}.
However, only $\clustering^{(a)}$ is supported by $\voronoidg^\ast$ and
$\sitevector$-star-shaped while $\clustering^{(b)}$ contains a disconnected
cluster.

More generally, this happens whenever shortest paths intersect. 
This can have a dramatic effect on integer assignments. In fact, in
order to conserve connectivity after rounding, whole fractionally assigned branches of the
shortest-path trees might have to be assigned to one of their supporting
clusters. Of course, this results in greater deviations of cluster weights.
For our running example, \cref{fig:exmpl:graphclustering:fracts} depicts the
points that have originally been fractionally assigned to both the blue and the green
colored cluster and are now fully assigned to the green
cluster in the integer solution. 

\placeFigureH

A natural idea to overcome this effect as well as to obtain more consolidated
clusters is to try to imitate the idea of squaring the  distances (that led to power diagrams in the Euclidean space) to the discrete space
$(X,\dgraph)$. 

Let us once more consider the previous  example from just above.
This time, however, we take the generalized Voronoi diagram $\voronoidg'$
\wrt $f_1(x) = \dgraph(s_1,x)^2 +4$ and $f_2(x) =
\dgraph(s_2,x)^2$ (\ie, $h = (\cdot)^2$ and $\muvector = (4,0)\tr$). It
follows that $\voronoi'_1 = \{ x_1, x_3 \}$ and $\voronoi'_2 = \{ x_2, x_4 \}$.
Thus, $\voronoidg'$ supports $\clustering^{(b)}$ but is not even feasible for
$\clustering^{(a)}$. In fact, $\clustering^{(b)}$ is the unique minimizer of
\eqref{plp} for this choice of $h$ and does not yield connected clusters.
\cref{fig:exmpl:graphclusteringSquared:nonconnected} demonstrates the same
effect for our running example. Here, the single yellow unit in the center
is separated from its respective cluster.

Unfortunately, the following theorem shows that this is a general effect. 
In fact, for any transformation function $h$ that is
 not affine, clusters can be disconnected despite being supported by a
 corresponding diagram. Thus, if connectivity is to be guaranteed a-priorily,
 this dictates the choice of shortest-path diagrams in our approach.
 
 \begin{thm} \label{lem:connectedonlyaffine}
Let $(X,\dgraph)$ be  the  metric space induced by a 
 connected and edge-weighted graph $\conngraph$. Let $\clustering \in \BC$,
 $\sitevector \in \unitset^k$, and $\muvector \in \Rbb^k$, and 
define  $\dset = \dfam$ via $d_i = \dgraph$ for $i= 1,\ldots,k$. 
  Furthermore, let $h: \Rbbp \rightarrow \Rbb$  be continuous.

  If $h(x) =  \alpha \cdot
 x + \beta$ for some $\alpha \in \Rbbp, \beta \in \Rbb$
  and the generalized Voronoi diagram \wrt $\funcset(\dset,h,\sitevector,\muvector)$ supports
  $\clustering$, then $\clustering$ is $\sitevector$-star-shaped.
 
  If $h$ is any continuous function  but not of  the above type, this
  is not  true in general.
 \end{thm}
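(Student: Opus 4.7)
The plan is to prove the two assertions separately. The affine direction reduces directly to \cref{thm:starshapedgraph} via a rescaling, while the non-affine direction requires an explicit counterexample whose construction rests on a Cauchy-type characterization of affine continuous functions.

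For the affine case, write $h(x) = \alpha x + \beta$ with $\alpha \geq 0$. In the main subcase $\alpha > 0$, set $\tilde\mu_i := (\beta + \mu_i)/\alpha$ and observe
\[
f_i(x) \;=\; \alpha\,\dgraph(s_i,x) + \beta + \mu_i \;=\; \alpha\bigl(\dgraph(s_i,x) + \tilde\mu_i\bigr).
\]
Multiplication by the positive constant $\alpha$ preserves $\arg\min_i$, so the generalized Voronoi diagram $\voronoidg$ with respect to $\funcset(\dset, h, \sitevector, \muvector)$ coincides pointwise with the one for $\funcset(\dset, \id, \sitevector, (\tilde\mu_1, \ldots, \tilde\mu_k))$. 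Hence the latter also supports $\clustering$, and \cref{thm:starshapedgraph} delivers $\sitevector$-star-shapedness. The degenerate subcase $\alpha = 0$ makes every $f_i$ constant; a clustering with $\cweight_i > 0$ for each $i$ can then only be supported if all $\mu_i$ coincide, in which case every Voronoi cell equals $\unitset$ and star-shapedness is trivial.

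For the non-affine case, the driving observation is a Cauchy-type lemma: if $h : \Rbbp \rightarrow \Rbb$ is continuous and for every $a > 0$ the increment $\Delta_a(y) := h(y+a) - h(y)$ is independent of $y \geq 0$, then $h$ is affine. Indeed, setting $y = 0$ gives $\Delta_a(y) = h(a) - h(0)$, so $\tilde h(y) := h(y) - h(0)$ solves Cauchy's equation $\tilde h(y+a) = \tilde h(y) + \tilde h(a)$ continuously, hence is linear. Consequently, whenever $h$ is continuous but not affine one may choose $a_1, p, a_2 > 0$ (swapping $a_1$ and $p$ if necessary) with
\[
h(p + a_2) - h(p) \;>\; h(a_1 + a_2) - h(a_1).
\]
Build the counterexample on $\unitset = \{s_1, v, u, s_2\}$ with unit weights, take $\conngraph$ to be the graph on $\unitset$ with edges $\{s_1 v, vu, vs_2\}$ of weights $a_1, a_2, p$, respectively, set $\cweight_1 = \cweight_2 = 2$, and pick $\mu_1 - \mu_2$ strictly inside the interval $\bigl(h(p) - h(a_1),\; h(p + a_2) - h(a_1 + a_2)\bigr)$, which is nonempty by the displayed inequality. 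A direct check using the distances $\dgraph(s_1,v)=a_1$, $\dgraph(s_1,u)=a_1+a_2$, $\dgraph(s_2,v)=p$, $\dgraph(s_2,u)=p+a_2$ then gives $\voronoi_1 \cap \unitset = \{s_1, u\}$ and $\voronoi_2 \cap \unitset = \{v, s_2\}$ with strict separation, so $\voronoidg$ supports the integer clustering $\cluster_1 = (1,0,1,0)$, $\cluster_2 = (0,1,0,1)$ indexed by $(s_1, v, u, s_2)$. However, the unique shortest $s_1$-$u$ path in $\conngraph$ traverses $v \notin \supp(\cluster_1)$, so $\clustering$ fails to be $\sitevector$-star-shaped.

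The main obstacle is the Cauchy lemma itself, which has to rule out every pathological non-affine continuous $h$ in one stroke; the reduction to an additive equation combined with the classical fact that a continuous additive function on $\Rbbp$ is linear handles this uniformly. The remaining bookkeeping --- verifying $s_1 \in \voronoi_1 \setminus \voronoi_2$ and $s_2 \in \voronoi_2 \setminus \voronoi_1$, positivity of all edge weights, and the freedom to perturb $(a_1, a_2, p)$ by continuity if the auxiliary bounds on $\mu_1-\mu_2$ pinch --- is routine.
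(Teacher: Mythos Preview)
Your treatment of the affine direction is fine and essentially matches the paper's.

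Your argument for the second assertion, however, has a genuine gap. The statement covers every continuous $h$ that is \emph{not} of the form $\alpha x+\beta$ with $\alpha\ge 0$; this includes $h(x)=\alpha x+\beta$ with $\alpha<0$, which your ``non-affine case'' explicitly excludes. More importantly, the step you label ``routine'' is not: placing the sites correctly requires $|\mu_1-\mu_2|<h(a_1+p)-h(0)$, and this interval is empty whenever $h(a_1+p)<h(0)$. For $h(x)=-x^{2}$ one computes $h(a_1+p)-h(0)=-(a_1+p)^2<0$ for \emph{every} choice of positive edge weights, so your four-vertex diagram can never support the clustering $\cluster_1=\{s_1,u\}$, $\cluster_2=\{v,s_2\}$. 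No perturbation of $(a_1,a_2,p)$ rescues this, because the obstruction is structural: when $h$ decreases fast enough from $0$, each site prefers the \emph{other} cell.

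The paper sidesteps both issues by arguing indirectly rather than exhibiting $\muvector$. On the same ``Y''-shaped tree it sets the leaf weight $\omega_3=3$ (with $\kappa_1=\kappa_2=3$), invokes \cref{thm:supportingclustering} to obtain \emph{some} supported $\clustering\in\BC$, and then assumes star-shapedness for a contradiction. The heavy leaf forces $x_3$ (and hence, via star-shapedness, $x_2$) into both supports, so $x_2,x_3\in P_1\cap P_2$, which yields the identity $h(a)-h(a+b)=h(c)-h(c+b)$ for all $a,b,c>0$ without any constraint on where $s_1,s_2$ lie. Letting $c\to 0$ gives Cauchy's equation and hence affinity; the sign $\alpha\ge 0$ is then dispatched by a separate two-vertex example. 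Your direct construction is appealing when it works, but the paper's contradiction argument is what makes the claim go through uniformly.
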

 
\begin{proof}
The first claim  follows by  replacing $\dgraph$ with $\alpha \cdot \dgraph +
\beta$  in the proof of \cref{thm:starshapedgraph}. Note that in the case
$\alpha = 0$   the whole set $\BC$ is optimal, which causes all components of a
solution from the relative interior to be strictly positive.

For the second claim, let some continuous function $h : \Rbbp \rightarrow \Rbb$
be given. Consider the graph from \cref{fig:proof:connectedonlyaffine} with
$\unitset = \{\unit_1,\unit_2,\unit_3,\unit_4 \}$, edges
 $\{\{\unit_1,\unit_2\}$, $\{\unit_2,\unit_3\}$, $\{\unit_2,\unit_4\}\}$ and
 edge weights  $\arcdistfunc(\{\unit_1,\unit_2\} ) = a$,
 $\arcdistfunc(\{\unit_2,\unit_3\} ) = b
$ and  $\arcdistfunc(\{\unit_2,\unit_4\} ) = c $ for some $a,b,c \in \Rbbsp$.
Furthermore, let $\unitweight_1 = \unitweight_2 = \unitweight_4 = 1$,
$\unitweight_3 = 3$ and $\cweight_1 =  \cweight_2 = 3$.
By \cref{thm:supportingclustering} and since $\BC \neq \emptyset$ there exists a
clustering $\clustering \in \BC$ that is supported by the generalized Voronoi
diagram $\voronoidg$ \wrt $\funcset(\dset,h,\sitevector,\muvector)$ for some
$\muvector  \in \Rbb^2$.

Now suppose that $\clustering$ is $\sitevector$-star-shaped.
Together with the
choice of the weights this implies 
$\{ x_2, x_3 \} \subset \supp(\cluster_1)
\cap \supp(\cluster_2) \subset \voronoi_1 \cap \voronoi_2$.
 Hence, one gets
$h( \dgraph(s_1,x_j)) + \mu_1 = h(\dgraph(s_2,x_j)) + \mu_2$ for $j = 2,3$. 
Subtraction of
the equality for $j = 3$ from the one for $j = 2$ and insertion of the
shortest paths lengths yields
\begin{align*}
 	h( a ) - h ( a + b) = h( c ) - h( c + b) .
 \end{align*}
 Setting $\tilde h := h - h(0)$, taking the limit $c \rightarrow 0$ and
 using the continuity of $h$ this implies
\[\tilde h (a + b ) = \tilde h (a  )  +\tilde h (b).\]
Since $a,b \in \Rbbsp$ are arbitrary and $\tilde h$ is continuous, it 
readily follows that $\tilde h$ is linear on $\Rbbp$ and thus $h$ is of the form
$h(t) = \alpha \cdot t + \beta$.

In order to see that $\alpha \geq 0$, it is sufficient to consider the example
$\unitset = \{ \unit_1, \unit_2 \} $,  $s_i = x_i, 
\unitweight_i = 1, \cweight_i = 1$ for $i = 1,2$ and a single edge $\{x_1,x_2\}$
of arbitrary positive length. If $\alpha < 0$, then the optimal clustering is
$\supp(\cluster_1) = \{ x_2 \}$ and $\supp(\cluster_2) = \{ x_1 \}$, which
contradicts the claim of \cref{thm:starshapedgraph}.
 \end{proof}

\section{Application to the Electoral District Design Problem}
\label{sec:results}
We will now apply our approach  to the electoral district design problem. We
 show the effect of using power diagrams, anisotropic power diagrams and
 shortest-path diagrams for the example of the Federal Republic of Germany.
 
\subsection{Electoral District Design}
\label{sec:method:electoraldistrictdesign}

 Despite the differences in voting systems, the issue
of designing electoral districts can be stated in a common way suitable for most
of them. A state consists of  basic units such as municipalities or smaller
juridical entities. Units are of different weight, usually and in the following their number of eligible voters. Units are then to be partitioned into a given number of
districts of (approximately) equal weight. Hence, we are facing a constrained clustering problem
as defined in \cref{sec:method:constraintedclustering}. 

Usually, districts are required to be
``compact" and ``contiguous'' (cf. \cite{RSS2013}). Both are, not formally
defined juridical demands requiring a proper mathematical modelling. 
How to define a measure for ``compactness'' of districts has, in fact, been
widely discussed in the literature (\eg  \cite{You1988}, \cite{NGCH1990},
\cite{HHV1993}, \cite{Alt1998}).
One widely accepted measure (\cite{HWS+1965}, \cite{FH2011}) is the moment of inertia
as defined by \eqref{eq:momentofinertia}, where each municipality is
modelled as a point in the Euclidean plane.

Contiguity usually requires that the area belonging to a district's
municipalities is connected.
This can be modelled by the adjacency graph $\conngraph$ with nodes
corresponding to the units and edges between two units whenever they share a common 
border (which can be crossed).
Connectivity of clusters is then defined by demanding that each induced subgraph
$\conngraph[\suppCi]$ is connected. 
The edges of $\conngraph$ can  be weighted, for
example,  by driving distances between the corresponding municipalities'
centers.

Due to census development electoral districts have to be regularly adapted.
Therefore, a further requirement may be to design districts that are similar
to the existing ones. Let $\clusteringO$  be the integer clustering that
 corresponds to the original districts and  $\clustering^\ast$
 be a newly created integer clustering.
 One  may measure the difference of the two district plans by the ratio of voter
 pairs that used to share a common district but are now assigned to different
 ones. With $A(\clusteringO,\clustering^\ast) := \{ (j,r) : 1 \leq j < r
 \leq m \wedge \exists i :
 \xiO_{i,j} = \xiO_{i,r} = 1 ~ \wedge \forall i : \xi^\ast_{i,j} \cdot
 \xi^\ast_{i,r} = 0  \}$ this is, more formally, given by
\begin{align} \label{eq:formula_changedpairs}
 	\frac{1}{ \sum_{i = 1}^k  \tbinom{ \omega(\clusterO_i) }{2}  }
 	\sum_{(j,r) \in A(\clusteringO,\clustering^\ast)} \omega_j \cdot \omega_r.
 \end{align}

\subsection{Dataset Description} \label{sec:results:data}

By the German electoral law \cite{bwg}, there are  a
total of 299 electoral districts that are apportioned to the federal states according to
their population. As districts do not cross state borders, each
state must be considered independently. A district's size is measured by its number of
eligible voters (\cite{bvg2012}). It should not deviate from the federal average by more
than 15\%; a deviation exceeding 25\% enforces a redesign of districts. 
As far as possible, for administrative and technical reasons municipal borders
must be conserved. Furthermore, districts are
supposed to yield connected areas.

For our application the data from the last German election held on September
22nd 2013 was used. Statistics about the number of eligible voters were taken
from \cite{WdBudLaen2014}. Geographic data for the municipalities and election
districts of 2013 were taken from \cite{GVG250} and \cite{BTW13G}, respectively. 
Greater cities which on their own constituted one or
 several election districts in 2013  were  neglected for our
 computations as any proper district redesign would imply to split these
 municipalities up into smaller units and thus, of course, required data on a
 more granular basis. For the same reason, the city states (Berlin, Bremen,
 Hamburg) were not taken into consideration.

 \cref{fig:res:dev:2013} depicts the deviation of clusters sizes of the 2013
 election from the federal average.  Accordingly, in the 2013 elections 57 of the 249
 districts that were considered  had a deviation of more than 15\%.
  The overall average deviation is 9.5\%. The maximum deviation of 25.15\% is
  obtained for a district in the state of Bavaria.

 We have identified each municipality by a point in the plane given by its geographic
 coordinates in the EPSG 3044 spatial reference system (\cite{Wik2015}).
 For the shortest-path clustering approach, we have an edge in the graph
 $\conngraph$ between two municipalities whenever their regions share a 
 border. The edge lengths were taken as  driving distances obtained from
 the MapQuest Open Geocoding API (\cite{OMQ}).
 
In the following of this chapter, we state the practical results for all of
Germany and for various single states.  The latter are typical examples, \ie,
the individual results for the other states are very similar;  see
\cref{tab:res:overview} and  \url{http://www-m9.ma.tum.de/material/districting/}.
 
 \subsection{Power Diagrams}

As pointed out in \cref{sec:diagrams:euclidean}, clusterings with minimal
moment of inertia are supported by centroidal power diagrams. Thus, power
diagrams yield highly consolidated district plans.

 Squared Euclidean distances were already used  for the problem of electoral
 district design; see e.g. \cite{HWS+1965} and \cite{Hoj1996}. Centroidal power
 diagrams have been used by \cite{FH2011}, who presented a gradient descent  
 procedure similar in spirit to the balanced $k$-means approach (\cite{BBG2013}).

 In our approach, first a fractional clustering
that is supported by a centroidal power diagram was created.
Here, the sites were determined as approximations of the global optimizers of 
\eqref{eq:momentofinertia} as proposed in \cite{BG2010}.
Second, the fractionally assigned units were rounded optimally with respect to
the resulting balancing error.

As it turns out, non-connected districts do indeed sometimes occur. This is due
to the non-convexity of the states in general and particularly  due to ``holes''
in the state areas resulting from excluded municipalities or city states.
In many cases, this may not regarded a problem. However, since we insisted on connected districts we applied some post-processing.
After running the approach as described in \cref{sec:diagrams:euclidean}, the resulting districts were checked for connectivity. This was done in an automated manner using the adjacency graph of neighboring units and standard graph algorithms. If unconnected districts were detected, the program (P) was rerun under preventing or forcing municipalities to be assigned to a certain district by constraining the corresponding decision variables to 0 or 1, respectively. For example, if a district had been split into two parts by a geographical barrier such as a lake or an indentation in the state border, the smaller part (which mostly consisted of a a single municipality) was excluded from being assigned to that district. This was comfortably done using a graphical user interface and required only a few, if any, iterations per state. For the considered data,
a total of 51 (0.46\%) municipalities was preassigned in order to establish
connectivity.
 
 \cref{fig:res:bayern} shows the
 original districts of the state of Bavaria from the 2013 elections compared to
 the results of the power diagram approach.
 \cref{fig:res:bayern:cpd} furthermore depicts the resulting polyhedral power
 diagram regions. Here, three units had to be fixed in order to establish connectivity.

\placeFigureI

  \subsection{Anisotropic Power Diagrams}
  \label{sec:res:anisotropicpowerdiagrams}
 
 Next, we show how to use anisotropic power diagrams in order to
 obtain clusters that are similar to those of the 2013 election.
 
 As in \cite{ABG+2015}, a principal component analysis was performed in
 order to determine a local ellipsoidal norm for each district. Let 
  $\clusteringO := \left(   \xiO_{i,j} \right)_{ \substack{
 i = 1,\ldots,k \\ j = 1,\ldots,m }}$ be the integer clustering encoding the
 original districts of some state. For each $i =1,\ldots,k$, the
 covariance matrix  $V_i$ is  computed as 
 \begin{align*}
 	V_i := \sum_{j = 1}^m
 \frac{ \xiO_{i,j} \unitweight_j}{ \unitweight(\clusterO_i)} \left( \unit_j -
 \ccenter(\clusterO_i) \right)  \left( \unit_j - \ccenter(\clusterO_i) \right)\tr. 
 \end{align*}
  Using a singular
 value decomposition, we obtain an orthogonal matrix $Q \in \operatorname{O}(2)$
 and $\singv_1^{(i)}, \singv_2^{(i)} > 0$ such that 
 
\[V_i = Q \begin{pmatrix}
\singv_1^{(i)} & 0 \\ 0 & \singv_2^{(i)}
\end{pmatrix} Q\tr.\] 
We then  set 
\begin{align*}
 	M_i :=  Q \begin{pmatrix}
(\singv_1^{(i)})^{-1} & 0 \\ 0 & (\singv_2^{(i)})^{-1}
\end{pmatrix} Q\tr 
 \end{align*}
 in order to obtain an  ellipsoidal norm as described in
\cref{sec:diagrams:euclidean}.
With the centroids of $\clusteringO$ as starting sites we performed a
balanced k-means algorithm (see \cite{BBG2013}) in order to obtain centroidal
anisotropic power diagrams.

As in the case of power diagrams and due to the same reasons, again not all of
the resulting clusters were connected. Applying the same post-processing this
could again be treated, affecting a total of 33 (0.30\%) municipalities.

  \cref{fig:res:niedersachsen}
 shows the 2013 elections' districts of the state of Lower Saxony and the
 results of the anisotropic power diagram approach.
   The blue ellipses in \ref{fig:res:niedersachsen:apd} depict the unit balls of
   corresponding local cluster norms. For this state no post-processing was necessary.

\placeFigureJ

 \subsection{Shortest-Path Diagrams} \label{sec:res:shortestpathdiagrams}

 In order to enforce connectivity directly,  we apply the shortest-path diagrams
 of \cref{sec:shortestpathdiagrams} \wrt the adjacency
 graph $\conngraph$ of neighboring units.
 Shortest-path distances have appeared in the context of electoral district
 design several times (\eg, \cite{Segal1977}, \cite{Zoltners1983a},
\cite{Schroder2001}, \cite{KNS2005}, \cite{RS2008}). In \cite{RSS2008}
also  Voronoi diagrams on a connectivity graph were considered but
multiplicative rather than additive weights were evoked, which led to substantially 
bigger balancing errors.
In \cite{Zoltners1983a} and \cite{Schroder2001}, Lagrangian relaxations were applied
which are, naturally, closely related to our methodology.

 In our approach,  the effect of non-unique optima and therefore more
 fractionality could be
 observed as predicted in \cref{sec:shortestpathdiagrams}. This was again
 handled in a post-processing phase by the implementation of a slightly more sophisticated rounding procedure. Here, fractional components
 were rounded in an iterative manner while updating the shortest-path distances
 to the already integrally assigned clusters in each step. In order to determine
 suitable sites $s_i$, $i = 1,\ldots,k$, a simple local search \wrt  improvement
 of the resulting deviation of cluster sizes was performed. Here, the units closest to the
 centroids of $\clusteringO$ served as starting sites.
 
  \cref{fig:res:nrw} shows the
   original districts of the state of North Rhine-Westphalia from the 2013
   elections compared to the results via shortest-path diagrams.
   The edges in \cref{fig:res:nrw:sp} furthermore depict the shortest-path trees
   connecting the resulting clusters. 
 
 \placeFigureK

  \subsection{Evaluation}

We will now summarize the results of our different \linebreak approaches. 
See \cref{tab:res:overview} in the appendix for a more detailed overview of the
results for the different German states.
  
\placeFigureL
  
As already pointed out,  all approaches led to
district plans complying with the German electoral law, \ie
obeying the deviation limits and connectivity of districts. 

\begin{table}[h!tbp]
\centering
	\begin{tabular}{r|C{1.4cm}@{\hspace{1mm}}C{1.4cm}@{\hspace{1mm}}C{2.2cm}@{\hspace{1mm}}C{1.8cm}}
	 &  \small{Districts 2013} & \small Power Diagrams &\small Anisotropic Power Diagrams & 
	 \small  Shortest-Path Diagrams
	 \\
	  \hline
	 Avg. &   9.52\% &   2.69\% &   2.73\% &    \textbf{2.13}\% \\
	 Max. &  25.1\% &  10.60\% &  14.71\% &   \textbf{9.73}\% \\
    \end{tabular}
    \caption{Deviations of district sizes from the federal
    average size.} \label{tab:res:deviations}
\end{table}

The largest maximal deviations  occur for the states of \linebreak Mecklenburg-Vorpommern (14.71\%) and North  Rhine-\linebreak Westphalia (14.34\%), both for the anisotropic power diagram approach.  
However, \cref{tab:res:overview} shows that even those deviations are not far off from optimality.
In fact, the average district size in Mecklenburg-Vorpommern itself is already 8.9\% below the federal average. In North Rhine-Westphalia, the high population density results in units of greater weight and thus greater rounding errors. As expected, \cref{tab:res:overview} shows that states with a finer division into municipalities generally yield smaller deviations.

Figure \ref{fig:res:deviations} depicts the deviations of district sizes for
our approaches and \cref{tab:res:deviations} contains the average and maximal
values for all our approaches and the elections of 2013.
While all approaches perform well, the shortest-path diagram approach
  is slightly superior.  This is not surprising, as the local search in the
  shortest-path approach only focuses on the deviation error.
  
  \begin{table}[h!tbp]
\centering
	\begin{tabular}{r|C{2.3cm}@{\hspace{1mm}}C{2.3cm}@{\hspace{1mm}}C{2.3cm}}
	  & \small Power Diagrams &  \small Anisotropic Power Diagrams &  \small 
	  Shortest-Path Diagrams
	 \\
	  \hline
	$\Delta$MoI &  \textbf{-11.3\%} &   1.3\% &   -0.5\%
    \end{tabular}
    \caption{Relative change of the moment of inertia
    as defined by \eqref{eq:momentofinertia} compared to 2013.} 
    \label{tab:res:moi}
\end{table}
  
 \cref{tab:res:moi} yields the relative change of the total moment of  inertia (as
 defined in \eqref{eq:momentofinertia})
 compared to 2013. According to this measure, power diagrams lead to the by far
 best consolidation.
 Shortest-path diagrams yield slightly more and the anisotropic power-diagram slightly less
 consolidated districts. However, recall that the moment of inertia is measured
 in Euclidean space, while the anisotropic power diagrams minimize local
 ellipsoidal norms. Hence, a fair comparison should really involve a measure
 similar to \eqref{eq:momentofinertia} but based on  those local 
 norms.

    \begin{table}[h!tbp]
\centering
		\begin{tabular}{r|C{2.2cm}@{\hspace{1mm}}C{2.3cm}@{\hspace{1mm}}C{2.2cm}}
		& \small Power Diagrams &  \small Anisotropic Power Diagrams &  \small 
		Shortest-Path Diagrams
		\\
		\hline
	$\Delta$Pairs & 40.6\% &  \textbf{21.4\%} &  35.4\% \\
    \end{tabular}
    \caption{ Total ratio of changed pairs of voters
    as defined by \eqref{eq:formula_changedpairs}. } 
    \label{tab:res:changedpairs}
\end{table}
  
  In order to compare the obtained districts to the ones of 2013, the 
  ratio of changed pairs according to \eqref{eq:formula_changedpairs} over the
  districts of all states are shown in \cref{tab:res:changedpairs}. Here, indeed
  the \linebreak anisotropic power diagram approach separates significantly less pairs of voters that used
	  to vote in the same district in the 2013 election.
  
  \placeFigureM

\cref{fig:res:hessen}  shows the results of the state of Hesse for all
approaches in direct comparison. In particular, they illustrate the numbers
listed above.  The power diagram districts seem most consolidated, while
elongated districts appear in the shortest-path result. Also, a higher degree of
similarity of the districts from anisotropic power diagrams to the original
districts can be recognized.

Finally, concerning the computational running times of our approaches, note that
once the parameters $(\dset,h,\sitevector)$ are determined, a simple linear program
\eqref{plp} in dimension $k \times m$ with $k + m$ constraints and $2km$ non-zero entries is to be solved.
This, of course, is unproblematic even for fairly large instances (such as, 
for example, $10^5$ municipalities and $10^3$ districts) using state-of-the-art solvers.

When, however, the structural parameters are part of the optimization process, 
the computational scalability strongly depends on the chosen approach: In case of
power diagrams, an approximate optimization of \eqref{eq:momentofinertia} (\cf
\cref{sec:diagrams:euclidean}) also leads to solving a number of
linear programs in dimension $k \times d$ and a fairly small number of
constraints. However, this means approximately maximizing an ellipsoidal norm, 
which is an $\NP$-hard problem with no constant-factor approximation 
unless $\mathcal{P} = \NP$ (cf. \cite{BGKKLS98}, \cite{BGKKLS01}, \cite{Bri2002}, \cite{BG2012}). 
Thus, this will be problematic for huge $k$. However, as the number of representatives 
is limited and, particularly, voting is in effect often restricted to subpopulations (like the
states within the Federal Republic of Germany), this remains tractable in practice (as demonstrated).

In the case of anisotropic power diagrams, the applied balanced
k-means variant reduces to solving \eqref{plp} a few times.

As we applied a local-search of sites for shortest-path diagrams, there, the
running times are, of course, highly dependent on the size of the considered
neighborhoods that are evaluated. In our computations, we restricted a sites
vector's neighborhood to single site-exchanges with the respective 50 closest
units. Then, if the local search is performed separately for each cluster,
we can again expect good scalability in terms of $k$.

 For our data sets, the computations ran on a standard PC within seconds for  anisotropic 
 power diagrams, within few hours for (approximately) centroidal power diagrams and 
 were in the range of several hours for the shortest-path approach. In any case, for 
 our application of electoral districting the computation times were not critical.

\section{Conclusion}
\label{sec:conclusion}
We  presented a unifying approach to constrained geometric clustering in
arbitrary metric spaces and applied three specifications to the electoral
district design problem.
We used a relation between constrained fractional clusterings and
additively weighted generalized Voronoi diagrams which is based on LP-duality.
In particular, we obtained clusterings with prescribed cluster sizes
that are embeddable into additively weighted generalized Voronoi diagrams. A short
discussion of typical classes of diagrams as well as details for the special
cases of power diagrams and shortest-path diagrams on graphs were provided.

  \begin{table}
	\begin{tabular}{L{2.2cm}|C{1.4cm}@{\hspace{1mm}}|@{\hspace{1mm}}C{2.2cm}@{\hspace{1mm}}|@{\hspace{1mm}}C{1.8cm}}
		& \small Power Diagrams 
		& \small Anisotropic Power Diagrams 
		& \small  Shortest-Path Diagrams
		\\
		\hline
		\small consolidation   &  $\oplus$ $\oplus$ &  $\oplus$  &   \\ \hline
		\small  connectivity &   $\oplus$ &  &   $\oplus$   $\oplus$  \\ \hline
		\small  conservation of existing structure & & $\oplus$   $\oplus$  & $\oplus$  \\
	\end{tabular}
	\caption{"rule of thumb" for the choice of diagram types} 
	\label{tab:conclusion:thumbrules}
\end{table}

Results for the example of electoral district design in Germany prove to be very
favorable with respect to the deviations from prescribed cluster sizes of the
obtained
integer clusterings.
In particular, we  pointed out how the choice of a
class of diagrams can  be made for the different optimization criteria.
\cref{tab:conclusion:thumbrules} summarizes our observations by providing an
informal "rule of thumb" for the choice of a diagram type:
If  particularly consolidated districts are desired,
power diagrams yield the best results. As they further produce convex cells, the
resulting districts are likely to be connected whenever the units can be
approximated well by points in the plane and the state area is close to convex. 
If districts are preferred that are 
similar to existing structures, anisotropic power diagrams perform very well.
Due to their relation to power diagrams, they have favorable characteristics
\wrt consolidation   as well. Connectivity is  guaranteed a-priorily by
shortest-path diagrams. Note that with edge
distances obtained from  anisotropic norms, conservation of existing structures
may here be achieved, too.
Thus, our methodology is capable of satisfying different requirements that may
occur for political reasons.


\printbibliography

 \begin{appendices}
 \crefalias{section}{appsec}

  \begin{table*}[!ht]
  	\section{Results for the German federal election}
  	\centering
  	\resizebox{!}{.95\textheight}{
  	\rotatebox{90}{
  \begin{minipage}{1.05\textheight}
  	\begin{tabular}{L{2.3cm}|R{.75cm}|R{.5cm}|R{1.2cm}||c|rrr|rrr|rrr}
	    \toprule
		 	&\multicolumn{3}{c||}{}
	       & \multicolumn{1}{c|}{\textbf{2013}}
	       &\multicolumn{3}{C{3.2cm}|}{\textbf{Power Diagrams}}
	       &\multicolumn{3}{C{3.2cm}|}{\textbf{Anisotropic Power Diagrams}}
	       &\multicolumn{3}{C{3cm}}{\textbf{Shortest-Path Diagrams}}\\
	    \textbf{State} 
	    	&\multicolumn{1}{c|}{$m$}
	    	& \multicolumn{1}{c|}{$k$} 
	    	& \multicolumn{1}{c||}{State Dev} 
			& \o Dev
	    	& \o Dev  & $\Delta$MoI  &  $\Delta$Pairs
	    	& \o Dev  & $\Delta$MoI  &  $\Delta$Pairs
	    	& \o Dev  & $\Delta$MoI  &  $\Delta$Pairs  \\
	    	&  &  &   &   in \% &  in \% & in \% &  in \% &  in \% &  in \% & in \% & in \% & in \% & in \% \\
	    \midrule
		Baden-W\"urttemberg & 1100 & 36 &   1.9 &   7.9 &   1.9 & -13.1 &  41.0 &   2.1 &   0.6 &  21.3 &   1.9 &   0.9 &  35.2 \\
		Bavaria & 2055 & 40 &   0.9 &  13.8 &   1.4 & -14.8 &  44.5 &   1.2 &   4.3 &  37.0 &   0.9 &   2.7 &  44.1 \\
		Brandenburg & 419 & 10 &   0.3 &  15.8 &   1.6 &  -7.5 &  29.5 &   1.7 &   4.8 &  21.0 &   0.3 &  -1.6 &  32.7 \\
		Hesse & 425 & 20 &   3.5 &   9.1 &   3.6 & -15.2 &  39.2 &   3.5 &   0.0 &  17.9 &   3.5 &   5.0 &  36.1 \\
		Mecklenburg-Vorpommern & 780 & 6 &   8.7 &   8.9 &   8.7 & -12.1 &  24.3 &   8.7 &  -6.4 &  15.3 &   8.7 &  -1.1 &  24.1 \\
		Lower Saxony & 1001 & 28 &   1.0 &   7.8 &   2.4 & -14.3 &  42.9 &   1.5 &  -3.1 &  18.5 &   1.1 &  -8.9 &  39.9 \\
		North Rhine-Westphalia & 391 & 48 &   0.3 &   7.7 &   3.8 & -10.8 &  40.9 &   4.2 &   1.1 &  17.4 &   3.4 &   1.6 &  34.0 \\
		Rhineland-Palatinate & 2304 & 15 &   0.5 &   9.8 &   0.6 & -14.0 &  36.5 &   1.7 &   0.7 &  22.9 &   0.5 &   1.6 &  30.3 \\
		Saarland & 52 & 4 &   3.9 &   4.8 &   3.9 &   5.3 &  49.0 &   3.9 &  -0.8 &  13.0 &   3.9 &  22.7 &  31.3 \\
		Saxony & 437 & 13 &   1.8 &   6.7 &   2.4 &   1.2 &  38.2 &   1.8 &   6.8 &  16.7 &   1.8 &   0.2 &  28.4 \\
		Saxony-Anhalt & 222 & 9 &   3.6 &  10.3 &   7.1 &   1.5 &  41.2 &   4.3 &   4.1 &  13.6 &   3.6 &  -4.5 &  29.4 \\
		Schleswig-Holstein & 1109 & 11 &   1.2 &  11.8 &   1.4 &  -9.0 &  36.9 &   1.6 &   5.5 &  19.4 &   1.2 &  -1.7 &  27.5 \\
		Thuringia & 850 & 9 &   1.6 &   8.0 &   1.6 & -15.1 &  51.2 &   5.1 &   1.0 &  14.4 &   1.6 &   0.1 &  35.5 \\
	    \bottomrule
	    \end{tabular}%
	\caption{Overview of the resulting key figures for the different
	       approaches.  
	       \newline $m$, $k$: Number of municipalities / districts, respectively;
	        \newline State Dev: Absolute value of the deviation of the average district size in a state from the federal average;
	       \newline \o Dev is the average relative deviation of district sizes 
	       from the federal average;
	       \newline $\Delta$MoI signifies the relative change of the
	       		moment of inertia (see \eqref{eq:momentofinertia}) compared to 2013;
	 			  \newline $\Delta$Pairs gives the proportion of changed voter-pair
	 			  assignments (as defined in \eqref{eq:formula_changedpairs}). }
	 			  \label{tab:res:overview}%
		\end{minipage}
  }  
    }
    
     \end{table*}

\end{appendices}

\end{document}